\documentclass[11pt]{article}
\usepackage[margin=1.1in]{geometry}
\usepackage{amsmath,amssymb,amsthm,mathtools}
\usepackage{etoolbox}
\usepackage[T1]{fontenc}
\usepackage{microtype}
\usepackage{booktabs}
\usepackage{array}
\usepackage{longtable}
\usepackage{pdflscape}
\usepackage{graphicx}
\usepackage{xcolor}
\usepackage{algorithm}
\usepackage{algpseudocode}
\usepackage{enumitem}
\usepackage[authoryear]{natbib}
\usepackage[colorlinks=true,linkcolor=black,citecolor=black,urlcolor=black]{hyperref}

\newcommand{\R}{\mathbb R}
\newcommand{\E}{\mathbb E}
\newcommand{\Q}{\mathbb Q}
\newcommand{\PP}{\mathbb P}
\newcommand{\dd}{\mathrm d}

\newcommand{\abs}[1]{\lvert #1\rvert}

\newcommand{\one}{\mathbf 1}

\newcommand{\removeblock}[1]{}

\DeclareMathOperator{\diag}{diag}

\renewrobustcmd{\P}{\mathbb{P}}

\theoremstyle{plain}
\newtheorem{theorem}{Theorem}[section]

\newtheorem{lemma}[theorem]{Lemma}
\newtheorem{corollary}[theorem]{Corollary}
\theoremstyle{definition}
\newtheorem{assumption}[theorem]{Assumption}
\theoremstyle{remark}
\newtheorem{remark}[theorem]{Remark}

\title{%
  \textbf{GROMM}\\[6pt]
  \large Generative Risk-neutral Option Market Models
}

\author{Magnus Wiese\qquad Hans Buehler$^{*}$\qquad Blanka Horvath$^{*,\#}$\\[0.6em]
  {\small $^{*}$Mathematical Institute, University of Oxford, 
  $^{\#}$Oxford-Man Institute, University of Oxford}}
\date{Version 1.1\\ \today}

\begin{document}
\maketitle

\begin{abstract}
    We present a statically and dynamically arbitrage-free European option market model with smooth
    interpolation between option prices, based on the SANOS~\citep{SANOS2026} framework, and prove that its dynamics admit a unique strong solution. 

    In spirit,
    it is a stochastic model for local volatility when the latter is used as a representation of a European option price surface, not as a pricing model.

  We illustrate the strength of our approach on S\&P Index Option data from January 2022 to August 2025.
  We show that the model's initial state fits within bid-ask for more than 95\% of relevant market option prices.\footnote{
        The model presented was trained for daily expiries up to 25 business days and strikes between $[\exp(-2.5 \sigma_{atm} \sqrt{t}), \exp(+1.5 \sigma_{atm} \sqrt{t})]$.}
    We then train the model dynamics under~$\PP$ and~$\Q$ and show that the model is indeed free of arbitrage
    while fitting the dynamics of the market well.
    We also propose a range of BHW-tests to test for absence of arbitrage. 
\end{abstract}

\tableofcontents

\section{Introduction}\label{sec:intro}

This article presents an arbitrage-free market model for smooth surfaces of European option prices. While not strictly the first of its kind, it is the first model with an industry-grade ability to fit the current market: GROMM fits thousands of short-dated options per day, including the next-day expiry which most models struggle to fit, with more than 95\% of the calibrated prices lying within the bid-ask spread, whilst making no parametric assumptions on the modelled surface. In addition, GROMM allows for smooth interpolation of the entire option surface at each time, provides an explicit construction of the continuous-time risk-neutral drift for both (discrete local) volatilities and spot, and also has a natural estimation strategy for its parameter space.

The core architecture follows the SANOS approach~\citep{SANOS2026}, which models the stock price as the product
of a time-and-state discrete process~$X$ and a smoothing process~$Y$, in our case Black~\&~Scholes.
The former is used to encode the option market using a discrete form of local volatilities as code book. That does not mean
that the model is a local volatility model -- we merely use the implied local volatilities as a convenient representation of the option surface.
Compared to using implied volatilities the advantage is that mere positivity of the local volatilities is sufficient to ensure
the option market is free of static arbitrage at all times. We have shown in~\cite{SANOS2026} that this representation
can be fitted to real life market data very efficiently using linear programming.

The key contribution in this paper is showing how to construct a dynamic model of SANOS option call prices which are not only statically but
also dynamically free of arbitrage. We show the theoretical construction of the corresponding risk-neutral measure~$\mathbb{Q}$
and prove that its dynamics admit a unique strong solution. We then discuss implementation
and estimate our model on S\&P Index Option data from January 2022 to August 2025. We fit our daily model parameters to observed bid-ask prices and then proceed to
estimate both $\Q$ and its drift under $\PP$. We show that
on this data set more than 95\% of calibrated
model prices fit within observed market bid-ask spreads even with the simplistic grid architecture we present here.

We then show that model dynamics under the risk-neutral measure indeed have numerically zero conditional and unconditional drift. We test this first
with a restricted universe of 570 tradable instruments over several time steps in a number of different set-ups. We then also show that even using all 5{,}001 instruments
 risk-neutrality is preserved. Ours is the first model being able to show such a result for a number of instruments which far exceeds
the number of driving stochastic factors.

The model presented here is distinct from our own DYSANOS~\citep{DYSANOS2026}. There, we focused on a re-parametrization of SANOS into machine-learnable parameters
and then on learning
realistic market dynamics for smooth statically arbitrage-free SANOS
surfaces under~$\PP$. Existence of a risk-neutral measure was secondary.\footnote{
We did also construct an equivalent risk-neutral measure~$\Q$ numerically using the \emph{martingale measure} approach introduced in~\cite{removedrift}.
While this method is model-free and can in theory be applied to any option market model under~$\PP$, it is numerically heavy and in practice
restricted to making only a fairly limited number of the overall tradable instruments
dynamically arbitrage-free: namely, in~\cite{DYSANOS2026} we used only 49 instruments per simulated date.}
With GROMM we present a framework whose foremost goal is the construction of a risk-neutral measure for \emph{all} options with
an explicit construction of the risk-neutral drift. The price we pay is a reduction
in the parameter space available for fitting historic data compared to DYSANOS.

\subsubsection*{Related Work}
Modelling joint stock--option dynamics requires fitting market quotes, preserving static arbitrage freedom across strikes and maturities, and excluding dynamic trading arbitrage. These are distinct objectives: a sequence of admissible surfaces need not admit a common equivalent martingale measure. Early empirical studies identified low-dimensional patterns in implied-volatility movements~\citep{SkiadopoulosHodgesClewlow2000,ContDaFonseca2002Dynamics,FenglerHardleVilla2003} and were followed by factor models~\citep{ContVol,Hafner2004,FenglerHardleMammen2007}. However, such factors do not automatically preserve admissible option prices. 

No-arbitrage drift restrictions for stochastic implied volatilities go back to~\cite{Schonbucher1999} and~\cite{ledoit2002relative}. Corresponding existence results were first presented for a fixed-strike term structure in~\cite{SchweizerWisselTerm2008} and for the multi-strike case in~\cite{SchweizerWissel2008}. The symmetric SSVI specification in~\cite{DysLitElAmrani2021} is an instructive example that letting the parameters of a statically arbitrage-free family of surfaces diffuse does not yield arbitrage-free dynamics. 

Our approach is most closely related to the seminal but unpublished work of~\cite{Wissel2007} which extended his earlier works~\citep{SchweizerWissel2008, SchweizerWisselTerm2008}. From a theoretical perspective, our approach is a version of~\cite{CarmonaNadtochiy2009} where the authors derive related risk-neutral drift conditions for continuous time local volatility models, but without a tangible implementation strategy. Our results can be seen as a joint extension of the two papers, based on the statically arbitrage-free DYSANOS~\citep{DYSANOS2026}. Its core parameters are smoothed \emph{discrete local volatilities} introduced in~\cite{SANOS2026} based on earlier work in~\cite{AndreasenHuge, DLV}.

In~\cite{TangentLevy} the authors solve the problem of finding a dynamically arbitrage-free evolution of the full option surface via products of independent L\'evy processes. Necessary and sufficient no-arbitrage conditions for time-inhomogeneous L\'evy exponents within an HJM framework are supplied in~\cite{KallsenKruhner2015}, and Tangent L\'evy models are estimated and simulated in~\cite{CarmonaMaNadtochiy2017}. 
While structurally
very elegant, option prices generated by products of independent increment processes do not typically fit well to equity index European option surfaces as we will illustrate below. In contrast, our model prices can be fit very efficiently within nearly all\footnote{In the experiments below, above 95\%.} observed market bid-ask prices using the linear programming SANOS~\citep{DLV, SANOS2026} calibration method. 

In \cite{DysLitGambaraTeichmann2021} the authors propose a piecewise
dynamically arbitrage-free model via ``regridding''.  Below we implement their
generalized-Bates example and submit its numerical realization to the same
unconditional and conditional diagnostics as GROMM and Tangent L\'evy. Our trained version of this
model exhibits dynamic arbitrage.

Forward-variance models~\citep{DysLitBergomi2005,bergomi2008smile}, for which~\cite{buehler2006consistent} derives the HJM-type no-arbitrage conditions, stochastic-volatility smile expansions~\citep{bergomi2012stochastic} and leverage-based analysis~\citep{DysLitCilibertiBouchaudPotters2008} investigate the volatility structures underlying smile dynamics. These provide economic targets for surface simulators, rather than the same nonparametric surface code book. Rough-SABR and asymptotically arbitrage-free approximation results~\citep{DysLitFukasawaGatheral2022,DysLitFukasawa2022} and normalizing volatility transforms~\citep{lucic2021normalizing} connect implied-volatility approximations and smile parametrizations to static no-arbitrage.

\subsubsection*{Generative Models under the Statistical Measure}

In order to have strict absence of dynamic arbitrage a model must also be free of \emph{static arbitrage}.
There are few works on generative equity option market dynamics which satisfy this requirement -- apart from the above quoted works, the main
examples are \cite{wiese2019simulate}, the multi-asset version~\citep{wiese2021multiassetspotoptionmarket} and
the neural-SDE based model~\citep{Cohen02012023}. These models generate
discrete strictly arbitrage-free option surfaces for a limited number of options. Path-dependent SSVI dynamics~\citep{AndresBoumezouedJourdain2023} and VAEs decoding into arbitrage-free pricing-model parameters~\citep{vae_ning2023arbitrage} also preserve static admissibility at simulated dates by employing parametric surfaces. None of these works introduced a risk-neutral measure. To address the absence of an analytical risk-neutral measure, we presented in~\cite{removedrift} a model-free approach to learning such a measure. However, that method is numerically challenging for a large number of tradable options as discussed in~\cite{DYSANOS2026}.

Most other approaches to generative option markets such as~\cite{BergeronEtAl2021,ZhangLiZhang2021,ContVuletic2023,ChoudharyJaimungalBergeron2024,VuleticCont2024,JinAgarwal2025,BrooksEtAl2026,HaoJi2026,HanEtAl2026,YangWangZhangLi2026,BucheggerGonon2026} are not strictly free of arbitrage. In their works price admissibility is only encouraged by penalties and reweighting and price corrections. Moreover, they do not construct corresponding risk-neutral dynamics. Table~\ref{tab:surface-dynamics-literature}
on page~\pageref{tab:surface-dynamics-literature} provides a summary of the literature on the topic.

For interest rate option markets the reference model in the same class is the BGM Model~\citep{BGM}. 

\section{Stochastic Smooth Discrete Local Volatility}

In this section we introduce the core model architecture for our \emph{Stochastic 
Smooth Discrete Local Volatility model}.
We follow the core SANOS~\citep{SANOS2026} idea of writing our equity process $S$ as a product
\begin{equation}
 \label{eq:S} 
  S_t = X_t Y_t 
\end{equation}
of a \emph{Stochastic Discrete Local Volatility (SDLV)} process $X$ which is used to fit to current market conditions and 
whose state parameters will have consistent $\PP$ and $\Q$ dynamics, and an independent\footnote{Under both $\PP$ and $\Q$.} and rather trivial smoothing process $Y$ whose dynamics under $\PP$ and $\Q$ are the same.

The process $Y$ will just be a Black~\&~Scholes process with fixed volatility~5\%. Given observed bid-ask market data it was shown in~\cite{SANOS2026} how to fit
a discrete local volatility surface in the presence of such a process. 

In the remainder of this section we focus on the SDLV process itself.

\subsection{Stochastic Discrete Local Volatility (SDLV)}

Our \emph{Stochastic Discrete Local Volatility (SDLV)} risk-neutral model will be defined on a \emph{fixed} expiry and cash strike grid. It will not use a floating grid in its option market dynamics.\footnote{This is not to be confused with 
how the model parameters are estimated using historic market data. We explain the estimation in Section~\ref{sec:estimation}.} We will use ``$t$'' to refer to 
model time. Denote initial spot by~$X_0>0$.

For now let $0=T_0<T_1<\dots<T_M$ be the fixed model expiry dates, and assume that $K^0<K^1<\cdots<X_0<\cdots<K^{N+1}$ are the fixed model cash strikes, where 
$K^0$ and $K^{N+1}$
are boundary strikes which are far away from the inner strikes, cf.~\cite{SANOS2026}.\footnote{In our experiments the inner strikes are given as a range of strikes
from $K^1=\exp(-2.5 \sigma_{atm}(T_M) \sqrt{T_M})$ to $K^N=\exp(+1.5 \sigma_{atm}(T_M) \sqrt{T_M})$ with boundaries set to $K^0=K^1/2$ and $K^{N+1}=3 K^N / 2$; see
Section~\ref{sec:estimation-data} for specifics.
}

Assume that $t$ is the current model time and let $T_k$ be the front expiry such that $T_{k-1}\leq t<T_k$.
Our model is given at any time~$t$
by the current spot $X_t$ and future densities $q_t^j$, where $q_t^{j,i}$ represents the risk-neutral density of the $i$-th cash strike at future expiry $T_j$.

Such discrete densities have been shown in~\cite{DLV} to admit an equivalent representation via (squared) \emph{discrete local volatilities} $\Sigma_t^{j,1},\ldots,\Sigma_t^{j,N}$ in the following form:
there is a tridiagonal matrix~$\Omega_j$\footnote{
  The matrix $\Omega_j\in\mathbb R^{(N+2)\times(N+2)}$ is given as follows: 
  let $\gamma^{\pm i} := \pm (K^{i})^2 (T_j-T_{j-1}) / (K^{i+1}-K^{i-1}) / (K^{i\pm 1} - K^i) $ for $i=1,\ldots,N$.
  Then $\Omega_j$ is the tridiagonal matrix whose $i$-th column, $i=1,\ldots,N$, has the entries
  \[
    (\Omega_j)_{i-1,i} = \gamma^{-i},\quad
    (\Omega_j)_{i,i} = -(\gamma^{-i}+\gamma^{+i}),\quad
    (\Omega_j)_{i+1,i} = \gamma^{+i},
  \]
  and all other entries zero, so that $\one^{\!\top}\Omega_j=0$ and $K^{\!\top}\Omega_j=0$;
  cf.~\cite{SANOS2026}.
 } for each $j>k$ such that
    \begin{equation}
      q_t^{j-1}=M_t^jq_t^j \ ,
      \qquad
      M_t^j:=I-\Omega_jS_t^j \ ,
      \qquad
      S_t^j:=\operatorname{diag}(0,\Sigma_t^{j,1},\ldots,
        \Sigma_t^{j,N},0) 
      \label{eq:hb_dlv}
    \end{equation}
where $M^j_t$ is an invertible tridiagonal M-matrix for every non-negative $S_t^j$. The key point is that $q_t^j$ for $j>k$ are uniquely determined martingale densities
for any given initial density $q_t^k$ and the surface of non-negative (squared) discrete local vols.

  \subsection{Risk-Neutral Dynamics}

Under a risk-neutral measure~$\Q$ densities are martingales. We will now turn this property into conditions on dynamics of the (squared) discrete local volatilities, and, 
jointly the front density and spot. 
      Our approach follows in spirit that of~\cite{Wissel2007} in the sense that we rely on the martingale property of a discrete set of tradable derivative assets.
      However, we reframe it in a much more accessible coordinate system.

Assume that $W$ is a $d$-dimensional Brownian motion. We designate $W^1$ as the
      spot factor: spot is always driven by $dW_t^1$.

      \subsubsection{Future Densities}
  \label{sec:future-densities}

       Each density coordinate
      is a martingale. We therefore specify the front-density diffusion as
      \[
        dq_t^k=E_t^{k;1}\,dW_t^1 \ ,
        \qquad
        E_t^{k;1}:=\operatorname{diag}(q_t^k)\eta_t^{k;1} \ ,
        \qquad E_t^{k;f}:=0\quad(f=2,\ldots,d),
      \]
      where $E_t^{k;1}\in\mathbb R^{N+2}$ must satisfy
      $\one^{\!\top}E_t^{k;1}=0$, so that $q_t^k$ retains unit mass. The
      remaining Brownian factors do not directly move the front density. They
      drive the stochastic DLV surface, while $\xi_t^{j,\cdot;1}$ determines their correlation with spot.
      Future density volatilities $\eta_t^j$ for $j>k$ will be obtained recursively.

      For future expiries $T_j$ with $j>k$, assume the discrete local volatility squared is given as
      \begin{equation}
        \label{eq:dlv_sde_hb}
        \frac{ d\Sigma_t^{j} }{ \Sigma_t^{j} }
          = \mu^j_t\,dt
          + \xi^j_t\,dW_t,
      \end{equation}
      where $\xi_t^j\in\mathbb R^{N\times d}$ is given and
      $\mu_t^j\in\mathbb R^N$ is the unknown risk-neutral drift. We write this as
      \[
        dS_t^j=A_t^j\,dt+\sum_{f=1}^dB_t^{j;f}\,dW_t^f,
      \]
      with
        $A_t^j
        :=\operatorname{diag}(0,\Sigma_t^{j,1}\mu_t^{j,1},\ldots,
            \Sigma_t^{j,N}\mu_t^{j,N},0)$ and $B_t^{j;f}
        :=\operatorname{diag}(0,\Sigma_t^{j,1}\xi_t^{j,1;f},\ldots,
            \Sigma_t^{j,N}\xi_t^{j,N;f},0).$
      With the above
      \[
        dM_t^j=-\Omega_j\,dS_t^j
        =-\Omega_jA_t^j\,dt
        -\sum_{f=1}^d\Omega_jB_t^{j;f}\,dW_t^f.
      \]
      Setting $dq_t^j=\sum_fE_t^{j;f}\,dW_t^f$ for an unknown $E_t^{j;f}$ yields
      \begin{align}
        dq_t^{j-1}
        &= M_t^j\,dq_t^j+(dM_t^j)q_t^j+d[M^j,q^j]_t\\
        &= \sum_{f=1}^d
          \left(M_t^jE_t^{j;f}-\Omega_jB_t^{j;f}q_t^j\right)dW_t^f
          -\Omega_j\left(A_t^jq_t^j
          +\sum_{f=1}^dB_t^{j;f}E_t^{j;f}\right)dt \ .
      \end{align}
      Matching the diffusion coefficients gives
      \begin{equation}
        \label{eq:vol_hb}
        E_t^{j;f}
        =(M_t^j)^{-1}
          \left(E_t^{j-1;f}+\Omega_jB_t^{j;f}q_t^j\right)
        \in\mathbb R^{N+2} \ \ \ \mbox{and} \ \ \ \eta_t^{j,i;f}=\frac{ E_t^{j,i;f} }{ q_t^{j,i} } \ .
      \end{equation}
      The drift must vanish, hence
      \[
        \Omega_j\left(A_t^jq_t^j
        +\sum_{f=1}^dB_t^{j;f}E_t^{j;f}\right)=0.
      \]
      The restriction of $\Omega_j$ to vectors supported on the interior nodes
      is injective. Therefore the bracket vanishes node by node, giving
      \begin{equation}
        \label{eq:drift_hb}
        \mu_t^{j,i}
        =-\sum_{f=1}^d\xi_t^{j,i;f}
          \frac{E_t^{j,i;f}}{q_t^{j,i}}
        =-\sum_{f=1}^d\xi_t^{j,i;f}\eta_t^{j,i;f},
        \qquad i=1,\ldots,N.
      \end{equation}
      Thus the prescribed front-density volatility and the prescribed DLV
      volatilities determine all future density volatilities and the
      risk-neutral DLV drifts by forward induction over the expiries.

  \subsubsection{Front Density and Spot}
  \label{sec:front-density}

  The remaining task is to specify both spot dynamics consistent with the front density and the respective front density volatilities.

  \subsubsection*{Case $t=T_k-dt$}

  Assume we are at $t=T_k-dt$ and observe the front density $q_{t}^k$. Current spot is given as $X_t = K^\top q_t^k$.
  In order to sample a spot consistent with the discrete set of strikes, we can either draw a discrete sample or fit some other distribution exactly to the observed option prices (fitting to densities or probabilities will not work in general).

  The simplest method is to draw spot from this discrete distribution using
  the spot factor $W^1$. Let $F_t^k$ denote the conditional CDF which assigns
  probability $q_t^{k,i}$ to strike $K^i$, and let $\Phi$ denote the standard
  normal CDF. Then
  \[
    X_{T_k}=(F_t^k)^{-1}\!\left(\Phi(Z)\right),
    \qquad
    Z:=\frac{W_{T_k}^1-W_t^1}{\sqrt{dt}}\sim N(0,1).
  \]
  Correlation between spot and a future DLV coordinate is specified by that
  coordinate's loading $\xi_t^{j,i;1}$ on $W^1$.

  Conditional on this definition of terminal spot, the front-density
  diffusion is exact if the Brownian increment is revealed progressively:
  define
  \[
    Q_i:=\sum_{\ell=0}^i q_t^{k,\ell},
    \qquad
    a_i:=\Phi^{-1}(Q_i),
    \qquad
    a_{-1}:=-\infty,\quad a_{N+1}:=+\infty,
  \]
  and let $\phi$ be the standard normal density. For $t\leq s<T_k$, set
  \[
    z_i(s):=
      \frac{\sqrt{dt}\,a_i-(W_s^1-W_t^1)}{\sqrt{T_k-s}}.
  \]
  The time-$s$ front density is the conditional probability
  \begin{equation}
    \label{eq:front_density_mw}
    q_s^{k,i}
      :=\Q\!\left(X_{T_k}=K^i\mid\mathcal F_s\right)
      =\Phi\!\left(z_i(s)\right)
       -\Phi\!\left(z_{i-1}(s)\right).
  \end{equation}
  It is therefore a martingale. Applying It\^o's formula gives the exact
  diffusion
  \begin{equation}
    \label{eq:front_absolute_vol_mw}
    dq_s^{k,i}=E_s^{k,i;1}\,dW_s^1,
    \qquad
    E_s^{k,i;1}
      =\frac{\phi(z_{i-1}(s))-\phi(z_i(s))}{\sqrt{T_k-s}},
    \qquad t\leq s<T_k,
  \end{equation}
  with zero drift. Since $z_i(t)=a_i$, its initial value is
  \begin{equation}
    E_t^{k,i;1}
      =\frac{\phi(a_{i-1})-\phi(a_i)}{\sqrt{dt}} \ \ \ \mbox{hence} \ \ \
    \eta_t^{k,i;1}
      =\frac{\phi(a_{i-1})-\phi(a_i)}{q_t^{k,i}\sqrt{dt}}
      \ \ \ \mbox{for $i=0,\ldots,N+1$}
  \end{equation}
  while $E_t^{k,i;f}=\eta_t^{k,i;f}=0$ for $f=2,\ldots,d$.
  More generally, $\eta_s^{k,i;1}=E_s^{k,i;1}/q_s^{k,i}$ and the other
  factor loadings remain zero. Consequently,
  \[
    X_s=K^\top q_s^k
      =\mathbb E\!\left[X_{T_k}\mid\mathcal F_s\right],
    \qquad
    dX_s=K^\top E_s^{k;1}\,dW_s^1,
  \]
  so spot and the front density are driven by the same Brownian factor. At
  expiry, $q_s^k$ converges to the point mass associated with the sampled
  strike. The displayed coefficient $E_t^{k,i;1}$ is only the initial value
  of the exact, state- and time-dependent coefficient $E_s^{k,i;1}$; holding
  it constant over the whole interval would instead be an Euler
  approximation. Likewise, sampling $X_{T_k}$ all at once without
  progressively revealing $W^1$ produces a direct jump to the terminal point
  mass rather than a continuous density diffusion.

  \subsubsection*{Case $t\ll T_k$}

  The same construction gives a natural exact transition over any step which
  does not cross the front expiry. Let
  \[
    h:=T_k-t,
    \qquad 0<\delta\leq h,
    \qquad \Delta W_t^1:=W_{t+\delta}^1-W_t^1.
  \]
  Starting from $q_t^k$, define $Q_i$ and $a_i$ as above. After drawing
  $\Delta W_t^1\sim N(0,\delta)$, set
  \begin{equation}
    \label{eq:front-exact-step}
    q_{t+\delta}^{k,i}
      =\Phi\!\left(
          \frac{\sqrt h\,a_i-\Delta W_t^1}{\sqrt{h-\delta}}
        \right)
       -\Phi\!\left(
          \frac{\sqrt h\,a_{i-1}-\Delta W_t^1}{\sqrt{h-\delta}}
        \right),
    \qquad \delta<h.
  \end{equation}
  The corresponding spot is not sampled independently but is defined by
  \begin{equation}
    \label{eq:spot-exact-step}
    X_{t+\delta}:=K^\top q_{t+\delta}^k
      =\mathbb E\!\left[X_{T_k}\mid\mathcal F_{t+\delta}\right].
  \end{equation}
  Consequently
  \[
    \mathbb E_t[q_{t+\delta}^k]=q_t^k,
    \qquad
    \mathbb E_t[X_{t+\delta}]=X_t
  \]
  exactly. In general $X_{t+\delta}$ is not a grid strike; only at $T_k$ does
  the density collapse to a unit vector and spot become the corresponding
  strike.

  This update is recursively consistent. Indeed, the Gaussian boundaries
  calculated from the cumulative probabilities of $q_{t+\delta}^k$ are
  precisely the residual boundaries appearing in
  \eqref{eq:front-exact-step}. The construction may therefore be restarted
  from $q_{t+\delta}^k$ without changing the joint law. It is a
  simplex-valued information bridge: conditional on the exact terminal
  Gaussian variable, $W^1$ is a Brownian bridge; conditional only on the
  terminal strike, it is Brownian motion conditioned to terminate in an
  interval, or equivalently a mixture of Brownian bridges.

\subsubsection{Existence}
\label{sec:existence}
We have just described a set of risk-neutral model dynamics for a discrete process $X_t$ whose future densities themselves are martingales driven by stochastic discrete local volatilities. Our construction is concise and lends itself to efficient numerical implementation and model parameter estimation. We now turn to proving the existence of the process. 

Let $L_t^{j,i}\coloneqq\log\Sigma_t^{j,i}$ and let $B_t = W_t^1 - W_{T_{k-1}}^1$ be the Brownian motion centred at the origin. Given \eqref{eq:dlv_sde_hb} and the drift restriction \eqref{eq:drift_hb}, the It\^o dynamics of the log of the DLV squared are given by 
\begin{equation}
  \label{eq:log_dlv_sde_mw}
  dL_t^{j,i}=\left(-\sum_{f=1}^d\xi_t^{j,i;f}\eta_t^{j,i;f}
    -\frac12\sum_{f=1}^d(\xi_t^{j,i;f})^2\right)dt+\sum_{f=1}^d\xi_t^{j,i;f}\,dW_t^f 
\end{equation}
where $\xi_t^{j,i;f} = \xi^{j,i;f}(t, B_t, L_t)$ and $\eta_t^{j,i;f} = \eta^{j,i;f}(t, B_t, L_t)$.

Upon inspection of \eqref{eq:log_dlv_sde_mw}, the drift restriction can cause instabilities that call the existence of the process into question: as time approaches the front's expiry $T_k$, the densities of the future expiries lose mass on the wing nodes the front abandons, which causes $\eta^{j, i; 1} = E^{j, i; 1} / q^{j, i}$ to blow up. 
Moreover, for the Gaussian bridge front as introduced in Section~\ref{sec:front-density}, $\eta_t^{k}$ grows asymptotically at the rate $(T_k - t)^{-1}$ on nodes it abandons\footnote{
Let $K^i$ be a node that is abandoned at $T_k$ and $K^i < K^J$ without loss of generality where $K^J$ is the node at which spot settles. Let $\varepsilon_i$ denote the distance to the bucket $a_i$: 
$$ \varepsilon_i \coloneqq (W^1_s - W^1_{T_{k-1}}) - \sqrt{T_k - T_{k-1}} \, a_i > 0 \, . $$ 
We obtain $z_i = -\varepsilon_i / \sqrt{\tau}\to -\infty$ as $\tau \coloneqq T_k - s \to 0$; similar holds for $z_{i-1} < z_i$ which is even more negative. 
Dividing \eqref{eq:front_absolute_vol_mw} by \eqref{eq:front_density_mw} we obtain 
\begin{equation*}
  \eta^{k, i; 1} = \dfrac{\phi(z_{i-1}) - \phi(z_i)}{(\Phi(z_i) - \Phi(z_{i-1}))  \sqrt{T_k - s}} \, . 
\end{equation*}
As $z_i, z_{i-1} \to - \infty$, the terms in $z_{i-1}$ become negligible for both numerator and denominator. Moreover, $\Phi(-x) = 1-\Phi(x) \sim \phi(x)/x$ as $x \to \infty$ 
\citep[Ch.~VII, Lemma~2, p.~175]{feller-vol-1}. Combining both we obtain 
\begin{equation*}
  \eta^{k, i; 1} \sim \frac{-\phi(z_i)}{\phi(z_i)\,(T_k-s)/\varepsilon_i} = -\frac{\varepsilon_i}{T_k - s} \, . 
\end{equation*}
which grows to negative infinity since $\varepsilon_i$ converges to a positive limit. Hence, $\eta^{k,i}$ is not square-integrable. 
}, invalidating a proof that relies on the integrability of the relative density volatility. We therefore turn towards investigating the relationship between (a positive multiple of) the calendar spread $R^{j, i} = \Sigma^{j, i} q^{j, i}$ and (a positive multiple of) the butterfly spread $q^{j, i}$ near settlement, which in turn will allow us to read off the DLV. 

Throughout this section fix the front expiration $k$ with expiration time $T_k$ and denote by $J$ the node at which spot settles: $X_{T_k} = K^J$. To prove the existence of $L$ on $[T_{k-1}, T_k]$ we make the following two assumptions: 
\begin{assumption}
\phantomsection\label{assumption:existence_mw}
\begin{enumerate}[label=(\roman*)]
  \item For the front, the density has mass on the interior strikes\footnote{Note that this is a non-restrictive assumption: for finite $L_t^{j}$, $M^j$ is a non-singular M-matrix whose inverse has strictly positive interior columns. This means that if spot settles at $T_{k-1}$ at an interior node $K^J$, we have positive mass $q^k_{T_{k-1}}=(M^k)^{-1}e_J$. The density has zero interior mass only if the spot got absorbed by an exterior node $e_0$ or $e_{N+1}$. Here $e_J$ denotes the $J$-th unit vector.}
  $$\sum_{i=1}^N q_{T_{k-1}}^{k, i} > 0 \, .$$
  \item For all $j=k+1, \dots, M$ and $i=1, \dots, N$, the vol-of-vol $\xi^{j, i}$ is locally Lipschitz continuous in $(t, B, L)$ and is uniformly bounded in $(B, L)$ by a square-integrable function in time: 
  \begin{equation*}
    \|\xi^{j, i}(t, B, L)\| \leq c_t \quad \textrm{where} \quad \int_{T_{k-1}}^{T_k}c_t^2dt<\infty
  \end{equation*}
  where $\|\cdot\|$ is the Euclidean norm.
\end{enumerate}
\end{assumption}
%Note that assumption (iii) is consistent with the constant loadings that we estimate for our GROMM in Section~\ref{sec:estimation}.

\begin{theorem}[Existence]
\label{thm:existence_bh}
Let $L_{T_{k-1}} $ be finite. Under Assumption~\ref{assumption:existence_mw} the following holds true: 
\begin{enumerate}[label=(\roman*)]
  \item There is a unique strong solution for $L$ on $[T_{k-1},T_k)$, and every $\Sigma^{j,i}$ stays
    finite and strictly positive.
  \item If spot settles at an interior strike, $ J \in \lbrace 1, \dots, N \rbrace $, then $L$ extends continuously to $T_k$
    with $L_{T_k}$ finite.
  \item If spot settles at an extreme strike $ J \in \lbrace 0, N +1 \rbrace$, then $q^j_{T_k} = e_J$ for all $j > k$. Moreover, the remaining DLVs $\Sigma$ no longer affect the model. 
\end{enumerate}
\end{theorem}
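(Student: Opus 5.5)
The plan is to treat the coupled system as an SDE for the log-DLVs $L$ driven by an \emph{explicit} front process. On $[T_{k-1},T_k)$ the front density $q^k_s$ is a smooth, deterministic function of $(s,B_s)$ through \eqref{eq:front_density_mw}, and its absolute volatility $E^{k;1}_s$ is given explicitly by \eqref{eq:front_absolute_vol_mw}.

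\textbf{Coefficients are locally Lipschitz and the drift is controlled.} For finite $L$ the matrices $M^j=I-\Omega_j S^j$ are non-singular M-matrices, with inverses depending smoothly on $L$. Solving the recursion \eqref{eq:hb_dlv} backwards from $q^k$ and the recursion \eqref{eq:vol_hb} forwards then expresses every $q^j$ and every $E^{j;f}$, for $j>k$, as smooth functions of $(s,B_s,L_s)$. Consequently the drift in \eqref{eq:log_dlv_sde_mw} is locally Lipschitz wherever $q^{j,i}>0$.

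Assumption~\ref{assumption:existence_mw}(i) guarantees that $q^k_s$ has positive interior mass for $s<T_k$: the Gaussian buckets have nonempty interior for every interior $i$ with $Q_i>Q_{i-1}$. Since $(M^j)^{-1}$ has strictly positive interior columns, every $q^{j,i}_s$, $i=1,\dots,N$, is then strictly positive on $[T_{k-1},T_k)$. Hence the coefficients are locally Lipschitz on the open domain $\{(s,B,L): s<T_k\}$, and a unique strong solution exists up to an explosion time $\zeta\le T_k$.

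\textbf{Ruling out explosion before $T_k$.} Write the drift as $-\xi\cdot E/q$. The key observation, which motivates the reformulation through $R^{j,i}=\Sigma^{j,i}q^{j,i}$, is that $E^{j}/q^{j}$ is a bounded function of $L$ on compact time intervals $[T_{k-1},T_k-\epsilon]$ with $|B|\le n$. To see this, use the M-matrix structure: $q^j=(M^j)^{-1}\cdots q^k$ is a positive combination in which, component-wise, $|E^{j,i}|\le C\,q^{j,i}$. This bound is inherited from $|E^{k}|\le C_\epsilon q^k$ (valid for $s\le T_k-\epsilon$) together with the term $\Omega_j B^{j;f}q^j$, whose size is bounded by $\Sigma\xi q$ relative to $M^jq^j$.

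I would prove this ratio bound by induction on $j$, showing that $(M^j)^{-1}\Omega_j S^j$ maps vectors dominated by $q^j$ to vectors dominated by a constant times $q^j$, uniformly in $L$. The idea is that $(M^j)^{-1}\Omega_jS^j=(M^j)^{-1}-I$, a difference of bounded positive operators. With the drift growing at most linearly in $|\eta|\,c_t$, Assumption~\ref{assumption:existence_mw}(ii) and Gronwall applied to $|L|$ give $\sup_{s\le T_k-\epsilon}|L_s|<\infty$ almost surely, so $\zeta\ge T_k$. Positivity and finiteness of $\Sigma=e^L$ follow, which proves (i).

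\textbf{Behaviour at $T_k$.} For (ii), if $J$ is interior, $q^k_s\to e_J$. I expect the key estimate to be that on nodes abandoned by the front $\eta^k\sim-\varepsilon_i/(T_k-s)$, yet the induced $\eta^{j,i}$ stays bounded. The reason is that $q^{j,i}$ converges to $((M^j)^{-1}\cdots e_J)_i>0$, strictly positive interior mass, and $E^{j}$ remains bounded: the exploding $\eta^k$ on abandoned nodes is multiplied by the vanishing $q^{k,i}$, and $E^{k}_s$ itself stays bounded near $T_k$ away from the settling boundary, by the Gaussian tail behaviour of $\phi(z_i)/\sqrt{T_k-s}$. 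The drift of $L$ is then integrable up to $T_k$, the stochastic integral converges by square-integrability of $c_t$, and $L_{T_k}$ exists and is finite. Establishing this uniform boundedness near $T_k$ is the step I expect to be the main obstacle: it needs the explicit bridge asymptotics from the footnote, and near the settling node $E^{k,J}$ must be handled separately.

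For (iii), if $J\in\{0,N+1\}$, then $q^k_{T_k}=e_J$. Since column $J$ of $\Omega_j$ is zero, $M^je_J=e_J$ and by induction $q^j_{T_k}=e_J$ for all $j>k$. Every subsequent density is then $e_J$ regardless of $S^j$, so the remaining DLVs no longer affect prices.
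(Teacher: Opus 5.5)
\textbf{Part (ii) is circular.} You assert $q^{j,i}_s\to((M^j)^{-1}\cdots e_J)_i>0$. That limit is built from $M^j_{T_k}$, i.e.\ from $\Sigma_{T_k}$, whose existence and finiteness is exactly what (ii) claims. Without a priori control of $\Sigma$ near $T_k$ there is no lower bound on $q^{j,i}_s$; it tends to zero if $\Sigma^{j,i}_s\to\infty$. So a bound on $E^j$ says nothing about $\eta^{j,i}=E^{j,i}/q^{j,i}$.

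Your domination estimate from (i) does not extend to $T_k$ either. It starts from $|E^k|\le C_\epsilon q^k$, which fails because $\eta^{k,i;1}\sim-\varepsilon_i/(T_k-s)$ on abandoned nodes. Propagating through \eqref{eq:vol_hb} then only gives $|\eta^{j}|\le\max_l|\eta^{j-1,l}|+Cc_t=O((T_k-s)^{-1})$, which is not integrable. This cannot be improved by any bound uniform in $L$: for fixed $s$, as $\Sigma^{k+1}\to0$ one has $(M^{k+1})^{-1}\to I$ and $\eta^{k+1,i}\to\eta^{k,i}$.

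The missing idea is the one you mention but never use. $R^{j,i}=\Sigma^{j,i}q^{j,i}$ is a local martingale with relative volatility $\eta^{j,i}+\xi^{j,i}$. It is bounded, because by \eqref{eq:R_codebook_mw} $\Pi\Omega_j$ is diagonal on the interior nodes, so $R^{j,i}$ is a positive multiple of a calendar spread. The argument then runs as follows.
\begin{enumerate}[label=(\alph*)]
  \item $q^j$ and $R^j$ converge at $T_k$ by martingale convergence, with no control on $\eta$ needed.
  \item Lemma~\ref{lem:zero_sets_bh}(b) with $v=\xi$, square-integrable by Assumption~\ref{assumption:existence_mw}(ii), shows that $q^{j,i}$ and $R^{j,i}$ vanish together.
  \item Row $i$ of $q^{j-1}=q^j-\Omega_jR^j$ shows that one vanishing interior node forces $q^{j-1,i}=0$ and $R^{j,i\pm1}=0$. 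The zero therefore spreads to every interior node, which is impossible once $q^{j-1}$ has interior mass in the limit.
  \item Starting from $q^k_{T_k}=e_J$ and inducting over $j$ gives $\Sigma_{T_k-}=R/q\in(0,\infty)$.
\end{enumerate}
The paper proves (i) by this same mechanism, applied at the lifetime $\zeta$.

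\textbf{Part (i)} is a genuinely different route from the paper's: an a priori bound on $\eta^j$, uniform in $L$, on $[T_{k-1},T_k-\epsilon]\times\{|B|\le n\}$. It can be completed, but the step you flag is not justified by ``$(M^j)^{-1}-I$ is a difference of bounded positive operators'', since boundedness gives no domination by $q^j$. Write $B^{j;f}=S^jD^{j;f}$ with $D^{j;f}=\diag(0,\xi^{j,1;f},\ldots,\xi^{j,N;f},0)$. Then $(M^j)^{-1}\Omega_jB^{j;f}q^j=((M^j)^{-1}-I)D^{j;f}q^j$, so what you need is $(M^j)^{-1}q^j\le Cq^j$, i.e.\ $(M^j)^{-2}\le C(M^j)^{-1}$ entrywise, uniformly in $\Sigma$.

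This holds with $C=N+2$. Consider the nearest-neighbour chain which, from interior node $l$, jumps to $l\pm1$ with probability $\gamma^{\pm l}\Sigma^{j,l}/(1+(\gamma^{-l}+\gamma^{+l})\Sigma^{j,l})$ and is otherwise killed; it is killed immediately at $0$ and $N+1$. Then $((M^j)^{-1})_{il}$ is the probability that this chain, started at $l$, is killed at $i$. A one-dimensional first-passage decomposition gives $((M^j)^{-1})_{il}((M^j)^{-1})_{lm}\le((M^j)^{-1})_{im}$ for every $l$, and summing over the $N+2$ nodes gives the bound.

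\textbf{Part (iii).} The induction $q^j_{T_k}=(M^j)^{-1}q^{j-1}_{T_k}$ presupposes $\Sigma_{T_k}$, which need not exist when $J$ is extreme. Use instead that each $(M^j_s)^{-1}$ is column-stochastic with $(M^j_s)^{-1}e_J=e_J$. Hence $\lVert q^j_s-e_J\rVert_1\le\lVert q^{j-1}_s-e_J\rVert_1\to0$ whatever $\Sigma$ does. Alternatively, use mean conservation as the paper does.
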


Before we prove the existence statement, we demonstrate that two positive local martingales that only differ by a square-integrable relative volatility vanish together: 
\begin{lemma}
\label{lem:zero_sets_bh}  
Let $\tau\le T_k$ be a stopping time and let $U, V$ be two positive continuous local martingales defined on the interval $[T_{k-1},\tau)$. Assume that $U, V$ have relative volatilities $u$ and $u + v$ respectively, where $u, v$ take values in $\R^d$, i.e. $dU / U = u \,dW$ and $dV / V = (u+v) \, dW$. Then the following holds true: 
\begin{enumerate}[label=(\alph*)]
  \item The left limit $U_{\tau-}$ exists, and $U_{\tau-}$ is zero exactly when $\int_{T_{k-1}}^{\tau} \| u_t \|^2 dt = \infty $.
  \item If $v$ is square-integrable on the interval $[T_{k-1}, \tau)$, then $U_{\tau-}=0$ if and only if $V_{\tau-}=0$.
\end{enumerate}
\end{lemma}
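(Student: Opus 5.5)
The plan is to pass to logarithms and reduce both parts to the behaviour of a continuous local martingale at $\tau$. Since $U$ is positive with $dU/U=u\,dW$, It\^o's formula gives on $[T_{k-1},\tau)$
\[
  \log U_t=\log U_{T_{k-1}}+N_t-\tfrac12\langle N\rangle_t,\qquad N_t:=\int_{T_{k-1}}^t u_s\,dW_s,\qquad \langle N\rangle_t=\int_{T_{k-1}}^t\|u_s\|^2\,ds .
\]
Part (a) then follows from the standard dichotomy for continuous local martingales, which we obtain via Dambis--Dubins--Schwarz: on $[T_{k-1},\tau)$ we can write $N_t=\beta_{\langle N\rangle_t}$ for a Brownian motion $\beta$, possibly on an enlarged probability space. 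We split according to the value of $\langle N\rangle_{\tau-}$.

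\emph{Finite case.} On $\{\langle N\rangle_{\tau-}<\infty\}$, $N_{\tau-}$ exists and is finite. Hence $\log U_{\tau-}$ is finite, and $U_{\tau-}>0$ exists.

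\emph{Infinite case.} On $\{\langle N\rangle_{\tau-}=\infty\}$ we use the law of the iterated logarithm (or simply $\beta_s/s\to0$). This gives $N_t-\tfrac12\langle N\rangle_t=\beta_{\langle N\rangle_t}-\tfrac12\langle N\rangle_t\to-\infty$. Therefore $U_{\tau-}$ exists and equals $0$.

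Combining the two cases, the left limit exists almost surely and vanishes exactly on $\{\int_{T_{k-1}}^\tau\|u_t\|^2dt=\infty\}$. This is the main technical step; the only care needed is the time change on a random horizon $\tau$ and with possibly infinite total quadratic variation. We handle this with the usual localization, stopping at $\tau_n=\inf\{t:\langle N\rangle_t\ge n\}\wedge\tau$, together with the DDS extension for $\langle N\rangle_\infty<\infty$.

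For (b), we compare $V$ with $U$ through their ratio. It\^o's formula for the quotient of positive semimartingales gives
\[
  d\log\frac{V_t}{U_t}=v_t\,dW_t-\tfrac12\bigl(\|u_t+v_t\|^2-\|u_t\|^2\bigr)dt
  = v_t\,dW_t-\tfrac12\|v_t\|^2dt-\langle u_t,v_t\rangle\,dt .
\]
The awkward cross term $\langle u,v\rangle$ need not be integrable, so we argue on quadratic variations instead. Since $v$ is square-integrable on $[T_{k-1},\tau)$, Minkowski's inequality in $L^2(dt)$ gives
\[
  \Bigl|\,\|u\|_{L^2}-\|u+v\|_{L^2}\Bigr|\le\|v\|_{L^2}<\infty,
\]
pathwise on $[T_{k-1},\tau)$. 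Hence $\int\|u\|^2dt=\infty$ if and only if $\int\|u+v\|^2dt=\infty$. Applying part (a) to $U$ with volatility $u$, and separately to $V$ with volatility $u+v$ (both are positive continuous local martingales), yields $U_{\tau-}=0$ if and only if $V_{\tau-}=0$.

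Interpreted pathwise, "square-integrable" means $\int_{T_{k-1}}^\tau\|v_t\|^2dt<\infty$ almost surely. If it is instead meant in expectation, this implies the pathwise statement, so the argument is unchanged.
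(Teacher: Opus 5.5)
Your proposal is correct and follows essentially the same route as the paper: the exponential/logarithmic representation, the finite-versus-infinite quadratic-variation dichotomy via Dambis--Dubins--Schwarz with $\beta_s/s\to0$, and then part (a) applied separately to $U$ and $V$ after checking that $\int\|u\|^2\,dt$ and $\int\|u+v\|^2\,dt$ are finite together. The log-ratio computation is an unneeded detour. The Minkowski step should be written as the two one-sided bounds $\|u+v\|_{L^2}\le\|u\|_{L^2}+\|v\|_{L^2}$ and $\|u\|_{L^2}\le\|u+v\|_{L^2}+\|v\|_{L^2}$, since your version $\bigl|\|u\|_{L^2}-\|u+v\|_{L^2}\bigr|$ is an undefined $\infty-\infty$ expression when both norms are infinite.
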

\begin{proof}
  The idea of the proof is simple: $U$ is an exponential, and goes to zero if its quadratic variation diverges. 
  \begin{enumerate}[label=(\alph*)]
    \item Write $U_t$ as an exponential local martingale: $U_t = U_{T_{k-1}}\exp(\mathcal{M}_t - \frac{1}{2}\langle \mathcal{M} \rangle_t)$ where $\mathcal{M}_t = \int_{T_{k-1}}^t u_s \, dW_s$ and $\langle \mathcal{M} \rangle_t = \int_{T_{k-1}}^t\|u_s\|^2ds$ denotes the quadratic variation up to time $t$. If $\langle \mathcal{M} \rangle_\tau$ is finite, then $\mathcal{M}$ converges to a finite limit and hence $U_{\tau-}>0$. Otherwise, if $\langle \mathcal{M} \rangle_\tau=\infty$, then by Dambis--Dubins--Schwarz we identify the local martingale $\mathcal{M}_t$ with some Brownian motion $ \beta_{\langle \mathcal{M} \rangle_t} $ run on the clock $s = \langle \mathcal{M} \rangle_t$. For $\beta$, the unit increments $\beta_{n+1} - \beta_{n}$ are i.i.d. $N(0,1)$. Therefore, by the strong law of large numbers, $\beta_s / s \to 0$ almost surely. Thus, $\beta_s$ has sub-linear growth and we obtain that $\mathcal{M}_t - \frac{1}{2}\langle \mathcal{M}\rangle_t = s( \beta_s/s - \frac{1}{2}) \to -\infty $ and $U_t \to U_{\tau-} = 0$.
    \item The same holds true for $V$, which has volatility $u+v$. Since $\int_{T_{k-1}}^\tau\|v_t\|^2 \, dt<\infty$, the integrated sum $\int_{T_{k-1}}^\tau\|u_t+v_t\|^2 \, dt$ is finite if and only if $\int_{T_{k-1}}^\tau\|u_t\|^2 \, dt$ is, so by (a) $U_{\tau-}$ and $V_{\tau-}$ are either both zero or both positive.
  \end{enumerate}
\end{proof}

\begin{proof}[Proof of Theorem~\ref{thm:existence_bh}]
\begin{enumerate}[label=(\roman*)]
  \item Since $\Sigma^{j, i} = \exp(L^{j, i}) > 0$ for all $j, i$, every $ M^j $ is a non-singular M-matrix with an inverse that has strictly positive interior columns. By Assumption~\ref{assumption:existence_mw}~(i) and the front construction, the front has interior mass for $t < T_k$. Thus $q^{k+1}=(M^{k+1})^{-1}q^k > 0$, and inductively $q^j>0$ for all $j>k$. Moreover, the coefficients of \eqref{eq:log_dlv_sde_mw} are locally Lipschitz. Hence, there is a unique strong solution up to a maximal lifetime $\zeta \leq T_k$ \citep[Theorem~V.38]{Protter2005}. It remains to show $\zeta = T_k$.
  
  Let $R^{j, i} \coloneqq \Sigma^{j, i} q^{j, i}$ and recall that the density $q^{j}$ is governed by $dq^{j, i} = q^{j, i} \eta^{j, i} \, dW$. We want to show that both processes fulfil the assumptions of Lemma~\ref{lem:zero_sets_bh}. Applying the It\^o product rule to $R^{j, i}$ we observe that it is a local martingale $dR^{j,i} = R^{j,i} ( \eta^{j,i}+\xi^{j,i} ) \, dW$. Moreover, we can show that $R^j$ is bounded by applying the call payoff matrix $\Pi$, where $\Pi_{l, m} = (K^m - K^l)^+$, to the code book relation \eqref{eq:hb_dlv} expressed in terms of $R^j$
  \begin{equation}
    \label{eq:R_codebook_mw}
    q^{j-1}=q^j - \Omega_jR^j 
  \end{equation}
  which gives $C^{j} - C^{j-1} = \Pi\Omega_j R^j $ with $C^j = \Pi q^j$ the call prices. Since $\Pi\Omega_j$ is diagonal on the interior nodes, $R^{j, i}$ is the calendar spread at $K^{i}$ from $T_{j-1}$ to $T_j$ up to a positive constant. Since $X \in [K^0, K^{N+1}]$, call prices lie in $[0, K^{N+1}]$, so $R^{j}$ must be bounded. Thus, both $q^j$ and $R^j$ are bounded martingales and therefore attain limits at $\zeta$ by the martingale convergence theorem. Apply Lemma~\ref{lem:zero_sets_bh} (b) with $\tau=\zeta$ and $v=\xi$, then both $q^{j, i}$ and $R^{j, i}$ can only reach zero together. 

  Next we show that a single zero-valued interior density node forces all interior nodes to be zero-valued. In contrast, a single positive-valued interior node spreads to all interior nodes. Abbreviate $p=q^{j-1}$, $q=q^j$, $R=R^j$ at time $\zeta$ and write out row $i$ of \eqref{eq:R_codebook_mw}: 
  \begin{equation*}
    p_i = q_i+( \gamma^{-i} + \gamma^{+i} ) R_i - \gamma^{ +(i-1)} R_{i-1} - \gamma^{ -(i+1)} R_{i+1}.
  \end{equation*}
  From Lemma~\ref{lem:zero_sets_bh} (b) we know that $q_i = 0$ iff $R_i = 0$. Hence, if $q_i = 0$, then also $R_i = 0$ and $p_i = - \gamma^{ +(i-1)} R_{i-1} - \gamma^{ -(i+1)} R_{i+1}$. The left-hand side, $p_i$, is non-negative and the right-hand side non-positive. Therefore $p_i = 0$ and $R_{i\pm1} = 0$ and again by Lemma~\ref{lem:zero_sets_bh} (b) $q_{i\pm1} = 0$. This argument spreads across the grid such that all interior nodes have to be equal to zero. Mass can only be attained on an outer node. Conversely, if $q^{j-1}$ has interior mass then the chain above cannot start and all $q^{j, i}$, and thus $R^{j, i}$, are positive. 

  Now, suppose that $\zeta < T_k$. Then the Brownian motion $B_\zeta$ is finite, and the front density $q^k_\zeta$ still has interior mass. Applying the above inductive chain for $j = k+1, \dots, M$ we obtain that $\Sigma^{j, i}_{\zeta-} = R^{j, i}_{\zeta-} / q_{\zeta-}^{j, i} \in (0, \infty)$. Hence, $L$ has a finite limit at $\zeta$ which means it could be extended to $\zeta$, contradicting the maximality of $\zeta$. Therefore $\zeta = T_k$, which proves existence.

  \item If $J \in \lbrace 1, \dots, N \rbrace $, the terminal front has interior mass and by the above inductive chain $q^{j,i}_{T_k},R^{j,i}_{T_k}>0$ for all $j > k$. Hence, $\Sigma^{j, i}_{T_k} = R^{j, i}_{T_k} / q^{j, i}_{T_k} \in (0, \infty)$ for all $j > k$, and therefore the solution of $L$ extends to $L_{T_k}$. 
  
  \item Since $K^\top \Omega_j = 0$, the code book conserves the mean. Hence, for all $j > k$ we have $K^\top q^j_{T_k} = K^\top q^k_{T_k} = X_{T_k} = K^J$. A probability vector $q^{j}_{T_k}$ defined on $K^0, \dots, K^{N+1}$ with mean on the extreme end forces $q^j_{T_k} = e_J$, $J \in \lbrace 0, N+1 \rbrace$. Again recalling the code book recursion $q^{j-1} = (I - \Omega_j S^j) q^j$ and that $S^j = \diag(0, \Sigma^{j, 1}, \dots, \Sigma^{j, N}, 0)$ it follows that for an exterior density $q^j = e_J$, we have $S^j e_J = 0$. Thus, $e_J = M^j e_J$ for all $\Sigma$ and by induction $ q^{j}_{T_k} = (M^j)^{-1} q^{j-1}_{T_k} = (M^j)^{-1} e_J = e_J$ for all $j > k$. Hence all future densities, and therefore also all call option prices, are independent of $\Sigma$. 
\end{enumerate}
\end{proof}

Whenever the spot settles on an interior strike, the next front shifts $k \rightarrow k +1 $ with density $(M^{k+1}_{T_k})^{-1}e_J$. The theorem can then be re-applied to the interval $[T_k, T_{k+1}]$. If the spot settles at an extreme strike, (iii) applies and the spot process is absorbed at this state. Since there are finitely many expirations, the existence theorem therefore gives the process on the full interval $[0, T_M]$. 

\begin{corollary}
\label{cor:true_martingale_mw}
Under the assumptions of Theorem~\ref{thm:existence_bh}, $q^j$ is a true martingale on $[T_{k-1},T_k]$ for every $j\geq k$.
\end{corollary}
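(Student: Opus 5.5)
The densities $q^j$ are continuous local martingales that take values in the probability simplex. This bounds them uniformly, so a bounded local martingale is a true martingale. I will carry this out on $[T_{k-1},T_k)$ first and then pass to the closed interval at $T_k$.

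\textbf{The front, $j=k$.} By \eqref{eq:front_density_mw}, $q_s^{k,i}=\Q(X_{T_k}=K^i\mid\mathcal F_s)$ is a conditional probability, hence a martingale by construction. Continuity at $T_k$ holds because $q^k_s$ converges to the point mass $e_J$, which is exactly the indicator $\one_{\{X_{T_k}=K^i\}}$. So $q^k$ is a true (closed) martingale on $[T_{k-1},T_k]$. This also follows from the general argument below.

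\textbf{Later expiries, $j>k$, on $[T_{k-1},T_k)$.} By Theorem~\ref{thm:existence_bh}(i), $L$ is a unique strong solution on $[T_{k-1},T_k)$ with all $\Sigma^{j,i}$ finite and positive. Hence every $M^j_t$ is a non-singular M-matrix, and $q^j_t=(M^j_t)^{-1}q^{j-1}_t$ is well defined. The drift restriction \eqref{eq:drift_hb} was derived precisely so that $dq^j=\sum_f E^{j;f}\,dW^f$ has zero drift. So $q^j$ is a continuous local martingale on $[T_{k-1},T_k)$, with localising sequence $\tau_n=\inf\{t:\max_{j,i}\abs{L^{j,i}_t}\ge n\}\wedge T_k$; on $[T_{k-1},\tau_n]$ the coefficients $E^{j;f}$ are continuous in $(t,B,L)$, hence bounded.

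\textbf{Boundedness and uniform integrability.} From $\one^\top\Omega_j=0$ we get $\one^\top M^j=\one^\top$. The inverse of an M-matrix is entrywise non-negative, so $(M^j)^{-1}$ maps probability vectors to probability vectors. Inductively from $q^k$, every $q^j_t$ lies in the simplex, so $0\le q^{j,i}_t\le1$. A local martingale bounded by a constant is a true martingale by dominated convergence along $\tau_n$: $\E[q^j_{t\wedge\tau_n}\mid\mathcal F_s]=q^j_{s\wedge\tau_n}$, and we let $n\to\infty$, using $\tau_n\uparrow T_k$ from Theorem~\ref{thm:existence_bh}(i).

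\textbf{Extension to $T_k$.} This is the one delicate step, because $L$ need not extend there (case (iii)). The proof of Theorem~\ref{thm:existence_bh} already notes that the bounded martingales $q^j$ have a.s.\ limits $q^j_{T_k-}$ by martingale convergence. I define $q^j_{T_k}$ as this limit. This is consistent with case (ii), where $L_{T_k}$ is finite and continuity gives the same value. It is also consistent with case (iii), where the limit is $e_J$ by the mean-conservation argument. Bounded convergence in $L^1$ then gives $\E[q^j_{T_k}\mid\mathcal F_s]=\lim_{t\uparrow T_k}\E[q^j_t\mid\mathcal F_s]=q^j_s$. Hence $q^j$ is a true martingale on the closed interval $[T_{k-1},T_k]$ for every $j\ge k$.
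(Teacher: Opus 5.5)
Your proposal is correct and takes the same route as the paper. Zero drift makes each $q^j$ a local martingale, simplex-valuedness bounds it in $[0,1]$, and a bounded local martingale is a true martingale; your explicit closing at $T_k$ via martingale convergence spells out what the paper compresses into ``closed by its terminal value''.

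One justification needs repair. The coefficients are not bounded on $[T_{k-1},\tau_n]$: $E^{k;1}_s$ carries the factor $(T_k-s)^{-1/2}$, which $E^{j;f}$ inherits through \eqref{eq:vol_hb}, and $\xi$ is only dominated by the square-integrable $c_t$. You can either also stop at $T_k-1/n$ and at $\abs{B}\ge n$, or simply observe that the integrands are a.s.\ square-integrable on compact subintervals of $[T_{k-1},T_k)$, which is all the local-martingale property requires.
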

\begin{proof}
By construction the drift of $q^j$ vanishes identically (for $j>k$ by the node-by-node cancellation in \eqref{eq:drift_hb}, for $j=k$ by the Gaussian bridge construction), so $q^j$ is a local martingale on the interval of existence furnished by Theorem~\ref{thm:existence_bh}. Each coordinate $q_t^{j,i}$ is a probability and therefore satisfies $0\leq q_t^{j,i}\leq 1$, i.e.\ $q^j$ is a bounded process. A bounded local martingale is uniformly integrable and closed by its terminal value, hence a true martingale; in particular $\mathbb E_t[q_s^j]=q_t^j$ for all $T_{k-1}\leq t\leq s$ in the interval, with no additional integrability condition required.
\end{proof}

\subsection{Sampling}
\label{sec:sampling}

  In this section we discuss the sampling procedure for the discrete local volatility model. The final model is a product of this 
  process and a simple Black~\&~Scholes process with, say, volatility of 5\%. It serves as a SANOS~\citep{SANOS2026} smoother and ensures
  that the overall option price surface is smooth in strike.

   On a time grid
  $t_n<t_{n+1}$ with $\delta_n=t_{n+1}-t_n$, one may proceed as follows.

  First draw one Brownian vector $\Delta W_n\in\mathbb R^d$, using its first
  coordinate both in \eqref{eq:front-exact-step} and in every DLV coordinate
  with a non-zero loading $\xi^{j,i;1}$. This preserves the specified leverage.
  Update the front density by \eqref{eq:front-exact-step}, and calculate spot
  from \eqref{eq:spot-exact-step}. This part of the update has no time-step
  bias, preserves positivity and unit mass, and is exactly martingale
  preserving.

  For the DLV squared it is useful to work with
  $L_t^{j,i}\coloneqq\log\Sigma_t^{j,i}$, which follows \eqref{eq:log_dlv_sde_mw}. 

  Even the frozen-coefficient exponential update
  \begin{equation}
    \label{eq:dlv-log-euler}
    \Sigma_{n+1}^{j,i}
      =\Sigma_n^{j,i}\exp\!\left\{
        \left(\mu_n^{j,i}
          -\frac12\sum_f(\xi_n^{j,i;f})^2\right)\delta_n
        +\sum_f\xi_n^{j,i;f}\Delta W_n^f
      \right\}
  \end{equation}
  is preferable to Euler integration of $\Sigma$: it cannot create a negative
  local variance. For state-dependent coefficients, a genuine It\^o
  weak-order-two stochastic Runge--Kutta method in the log variables reduces
  the remaining time-step bias.

  After updating $\Sigma^{j}$ reconstruct the densities expiry by expiry by
  solving the tridiagonal systems
  \begin{equation}
    \label{eq:density-algebraic-step}
    q_{n+1}^{j}
      =\left(I-\Omega_jS_{n+1}^{j}\right)^{-1}q_{n+1}^{j-1},
    \qquad j=k+1,\ldots,M.
  \end{equation}
  This enforces the DLV relation at every simulated state. Because the matrix
  is an M-matrix for non-negative $\Sigma$, the solve preserves positivity.

  The drift condition \eqref{eq:drift_hb} removes the infinitesimal drift of
  every density coordinate. It does not, by itself, make a generic finite-step
  discretization exactly unbiased, since the map
  $\Sigma^j\mapsto(M^j)^{-1}q^{j-1}$ is nonlinear. Thus
  \eqref{eq:dlv-log-euler}, or any ordinary higher-order SDE scheme, retains a
  time-discretization bias which should vanish under step refinement.

  Our experiments in Section~\ref{sec:numerics} use this sampling scheme with antithetically paired Brownian increments.

\begin{remark}[Alternative discrete-branch construction]
  \label{rem:discrete-branch-sampling}
  If exact martingality under a chosen finite-step simulation law is required,
  it can instead be imposed directly. Choose symmetric discrete shock branches
  $\Delta W_{n,b}$ with probabilities $w_b$ and construct the front-density
  branches so that
  \[
    \sum_b w_bq_{n+1,b}^k=q_n^k \ .
  \]
  Parametrize each positive DLV update with an unknown drift correction
  $c_n^j\in\mathbb R^N$. For example, replace the log drift in
  \eqref{eq:dlv-log-euler} by $c_n^{j,i}$. For each expiry in turn, solve the
  nonlinear system
  \begin{equation}
    \label{eq:finite-step-drift-correction}
    \sum_b w_b q_{n+1,b}^j(c_n^j)=q_n^j,
    \qquad
    q_{n+1,b}^j(c_n^j)
      =\left(I-\Omega_jS_{n+1,b}^j(c_n^j)\right)^{-1}
        q_{n+1,b}^{j-1}.
  \end{equation}
  Only $N$ coordinates are independent because mass and first moment are
  already preserved, matching the $N$ drift corrections. Initializing the
  nonlinear solver with the continuous-time drift
  \eqref{eq:drift_hb} gives
  a stable finite-step correction. Equation
  \eqref{eq:finite-step-drift-correction} makes the densities exactly
  martingales under the discrete branch law; symmetric or antithetic branches
  reduce variance but do not replace this condition.

\end{remark}

\section{Estimation}\label{sec:estimation}

We briefly describe our estimation strategy. There are two components: fitting daily the ``implied'' discrete densities and therefore discrete local volatilities
to observed bid-ask spreads, and then an estimation of the vol of those discrete local volatilities, and their statistical drift.

The former represents today's fit of the model to the market, while the latter captures the evolution of the model parameters over time.

The summary here is brief; refer
to DYSANOS~\citep{DYSANOS2026} for a comprehensive discussion on parameter mapping and estimation considerations.

\subsection{Data and state representation}
\label{sec:estimation-data}

We estimate a family of models which has 25 expiries at one-business-day intervals. The model is aimed at S\&P~500 index options
whose main trading activity occurs in the short-dated options segment.

Each daily state has 25 model expiries at one-business-day intervals and 200
interior strike nodes, together with two artificial boundary nodes. The 200
inner nodes are generated from fixed standardized log-moneyness templates on
$[-2.5,1.5]$.\footnote{The expiry grid is fixed at $T_j=j/252$,
$j=1,\ldots,25$.  A normalized-moneyness coordinate $z$ at expiry $\tau$ is
mapped to the forward-relative strike
\[
  k(z,\tau)=\exp\!\left[z\sqrt{
    \frac{\sigma_{\epsilon}^2\epsilon+
    \sigma_{\mathrm{ATM}}(\tau)^2\tau}{1+\epsilon}}
  \right],
  \qquad \epsilon=\frac{1}{8\times252},
\]
where $\sigma_{\epsilon}=\sigma_{\mathrm{ATM}}(1/252)$ and
$\sigma_{\mathrm{ATM}}(\tau)$ is the regularized annual ATM Black volatility.
The cash strike is $K=F_0(\tau)k$; in the normalized simulations $F_0(\tau)=1$.
The strike construction uses 100 fixed coordinates for the one-day ATM scale
and 100 for the 25-day ATM scale, both on $[-2.5,1.5]$.  On each calibration
date the resulting log-relative-strike candidates are merged and sorted, and
given a minimum gap equal to 5\% of the average log-strike spacing.
Exponentiation gives one 200-node relative-strike grid shared by all 25
expiries; the boundary strikes are one-half of the lowest and one and a half
times the highest inner strike.  Thus the expiry and normalized-moneyness
templates are fixed, and the resulting cash strikes are fixed along each
simulated path.  For historical calibration the relative-strike grid floats
between dates because the ATM scales change.}
They therefore share one strike support across the 25 expiries on a given
date, while that support floats between dates with the observed ATM term
structure. A fixed 5\% annual Black volatility supplies the independent SANOS
smoothing factor.

We use European S\&P~500 index-option quotes via WRDS\footnote{\url{https://wrds-www.wharton.upenn.edu/}} from OptionMetrics IvyDB
on 891 valid market dates from 3 January 2022 to 29 August 2025.
After the cleaning and eligibility pipeline described in
Appendix~\ref{app:market-pipeline}, the retained sample contains 2,192,458
original call and put bid-ask intervals with at most 25 business days to
expiry. Prices and strikes are forward-normalized, while the contemporaneous
spot is retained separately.

\subsection{End-to-end fitting}
\label{sec:estimation-fit}

Let $x_d\in\mathbb R^{25\times200}$ denote the unconstrained state on date
$d$. Its DLV-squared coordinates are decoded as
\[
  \Sigma_d^{j,i}=\min\{9,10^{-4}+\exp(x_d^{j,i})\}.
\]
The SANOS recursion then produces all densities and quote prices. Every
retained market quote is priced at its own strike and expiry; quotes between
two model expiries use the corresponding linear time interpolation. The fit is
therefore performed directly against the original market bid and ask, not
against an intermediate linear-programming state.

The \emph{direct} specification fits all 5,000 coordinates of $x_d$
independently for every date. The \emph{affine} specifications impose
the shared representation
\begin{equation}
  \label{eq:estimation-affine-state}
  x_d=\bar x+\sum_{a=1}^r z_d^a u_a,
  \qquad r\in\{100,200,400\},
\end{equation}
and jointly estimate the mean $\bar x$, loadings $u_a$, and daily codes $z_d$. See~\cite{DYSANOS2026}
for motivation and further results.

For model price $C$, quoted interval $[b,a]$, midpoint $m=(a+b)/2$, and
half-spread $h=(a-b)/2$, define the bid-ask \emph{excess}
\begin{equation}\label{eq:excess}
  e(C;b,a)=\frac{\max(C-a,0)+\max(b-C,0)}{h}.
\end{equation}
Writing $n_d$ for the number of quotes on date $d$, the quote component of the
training objective is
\begin{equation}
  \label{eq:estimation-quote-objective}
  \mathcal L
  =\frac1{891}\sum_d\frac1{n_d}\sum_{\ell=1}^{n_d}
    \left[
      e(C_{d\ell};b_{d\ell},a_{d\ell})
      +0.05\frac{\abs{C_{d\ell}-m_{d\ell}}}{h_{d\ell}}
    \right],
\end{equation}
plus a soft barrier near the upper DLV-squared bound, cf.~\cite{DYSANOS2026}. Thus a price within its
quoted interval has zero excess, while the small midpoint term identifies a
stable point inside the spread. We use the exact all-quote gradient, accumulated
in memory-bounded quote blocks, and Adam for up to 1,000 epochs.

Table~\ref{tab:numerics-static-market-fit} reports statistics pooled over all
2,192,458 retained quotes. Conditional excess is the mean of $e$ only over
quotes outside their spread; midpoint MAE is also measured in half-spreads.

\begin{table}[H]
\centering
\small
\begin{tabular}{lrrrrr}
\toprule
Representation & State dim. & Inside bid-ask & Mean excess
  & Conditional excess & Midpoint MAE\\
\midrule
Direct 25-by-200 & 5,000 & 97.32\% & 0.035 & 1.288 & 0.237\\
Affine-100       &   100 & 89.68\% & 0.083 & 0.808 & 0.471\\
Affine-200       &   200 & 95.02\% & 0.045 & 0.900 & 0.349\\
Affine-400       &   400 & 96.01\% & 0.040 & 1.009 & 0.317\\
\bottomrule
\end{tabular}
\caption{End-to-end fit to the retained market bid-ask intervals. State
dimension is 5,000 for the unrestricted daily state and the latent rank for an
affine state.}
\label{tab:numerics-static-market-fit}
\end{table}

The direct representation gives the highest bid-ask coverage and the lowest
midpoint error and is therefore used for the stochastic-DLV experiments below.
Affine-200 already exceeds 95\% bid-ask coverage; increasing the rank to 400
gives a smaller further improvement. Conditional excess is computed only on
the changing set of failed quotes and should not be read as a monotone ranking
across representations.

\subsection{DLV vol-of-vol estimation}
\label{sec:estimation-vol-of-vol}

The floating historical supports cannot be differenced node by node without
first aligning contracts.  We use 710 adjacent market-session pairs ending no
later than 31 December 2024.  Write
\[
  L_d^j(x):=\log\Sigma_d^j(x),
  \qquad R_d:=\frac{S_{d+1}}{S_d},
\]
and let $x_{d,i}$ be the $i$th relative strike on the origin-date support.  In
the zero-carry normalization used for the dynamics, holding the corresponding
cash strike fixed moves its relative strike to $x_{d,i}/R_d$ on the next date.
Holding calendar expiry fixed simultaneously shifts the remaining-expiry index
down by one.  The aligned observation is therefore
\begin{equation}
  \label{eq:estimation-fixed-contract-increment}
  \Delta L_d^{j,i}
  =L_{d+1}^{j}\!\left(\frac{x_{d,i}}{R_d}\right)
   -L_d^{j+1}(x_{d,i}),
  \qquad j=1,\ldots,24,\quad i=1,\ldots,200.
\end{equation}
The endpoint value is interpolated on date $d+1$'s own floating support.  Thus
cash strike and calendar expiry are fixed within each one-day pair, while the
origin chart is allowed to float across different pairs.

For each expiry--strike coordinate we fit the physical one-step regression
\begin{equation}
  \label{eq:estimation-dlv-ar}
  L_{d+1}^{j}\!\left(\frac{x_{d,i}}{R_d}\right)
  =a^{j,i}+b^{j,i}L_d^{j+1}(x_{d,i})+\varepsilon_d^{j,i}.
\end{equation}
The centred residual surfaces are smoothed by a Gaussian filter with widths
one expiry node and two strike nodes.  Let
\[
  Z_d=\Phi^{-1}\!\left(F_d^1(R_d)\right),
  \qquad
  \widetilde Z_d=\frac{Z_d-\overline Z}{s_Z},
\]
where $F_d^1$ is the fitted front distribution including the independent 5\%
Black factor.  The daily loading of the spot factor is
\[
  \ell_1^{j,i}=\frac1n\sum_d
    \widetilde\varepsilon_d^{j,i}\widetilde Z_d.
\]
We remove this component and apply PCA to
$\widetilde\varepsilon_d-\widetilde Z_d\ell_1$; its leading eigenvectors give
the remaining daily loading surfaces $\ell_2,\ell_3,\ldots$.  The annualized
DLV vol-of-vol used in the SDE is
\begin{equation}
  \label{eq:estimation-annualized-xi}
  \xi^{j,i;f}=\sqrt{252}\,\ell_f^{j,i}.
\end{equation}
The fitted spot scores have mean $0.0290$ and standard deviation $1.0026$.
Eight factors retain 72.88\% of the smoothed innovation variance and 16 retain
84.40\%; 26 factors are required to exceed 90\%, reaching 90.25\%.  The
rank-16 system is nested, with its first eight loading surfaces identical to
the rank-8 system.  Appendix~\ref{app:dlv-factor-figures} shows the baseline
implied-volatility surface, the cumulative factor power, and the implied-volatility
response to each retained factor.  On the held-out 2025 sample the
full 25-by-200 representation has residual root mean square 0.256, compared
with 0.423 for the Affine-200 representation.

This is a functional estimate in remaining-expiry and normalized-moneyness
coordinates, not an estimate on one global historical cash-strike grid.  When
a simulation is initialized, these loading surfaces are attached nodewise to
that state's fixed cash-strike grid.  Pooling dates therefore assumes that the
loading functions are stable in this moving coordinate chart.

The empirical physical log-DLV-squared drift implied by
regression~\eqref{eq:estimation-dlv-ar} is
\begin{equation}
  \label{eq:estimation-empirical-p-drift}
  \widehat\mu_{d,k}^{\PP,\mathrm{emp}}
  =252\left[a_k+(b_k-1)L_{d,k}\right],
\end{equation}
where $k$ denotes an aligned expiry--strike coordinate.  At the same
historical state we compute $\mu_d^\Q$ as the risk-neutral log drift
in~\eqref{eq:log_dlv_sde_mw}.  Under the convention
$\dd W^\Q=\dd W^\PP+\lambda\,\dd t$, a physical drift compatible with the
fitted diffusion must have the form
\begin{equation}
  \label{eq:numerics-p-drift}
  \mu_d^{\PP,\mathrm{comp}}=\mu_d^\Q+\xi\lambda_d.
\end{equation}
We estimate the state-dependent market price of risk by
\begin{equation}
  \label{eq:numerics-statewise-lambda}
  \widehat\lambda_d
  =\arg\min_{\lambda\in\R^d}
    \sum_k w_{d,k}
    \left[
      \widehat\mu_{d,k}^{\PP,\mathrm{emp}}-\mu_{d,k}^{\Q}
      -(\xi\lambda)_k
    \right]^2.
\end{equation}
Here $w_{d,k}$ is proportional to the interior risk-neutral density, normalized
so that each expiry has equal total weight.  We also estimate one constant
$\widehat{\bar\lambda}$ by pooling the same objective over all dates; this is
the measure change used in the simulations.  The independent 5\% Black factor
is assigned zero risk premium and therefore has the same law under $\PP$ and
$\Q$.

For comparison with a model-free time-series estimate, define the
unconditional coordinate drift
\begin{equation}
  \label{eq:estimation-unconditional-p-drift}
  \widehat{\bar\mu}^{\PP,\mathrm{unc}}_k
  =\frac{252}{n}\sum_{d=1}^n\Delta L_{d,k}.
\end{equation}
Because~\eqref{eq:estimation-dlv-ar} contains an intercept, this is also the
sample mean of~\eqref{eq:estimation-empirical-p-drift}.  We measure the distance
of a compatible drift from this unconditional estimate by
\begin{equation}
  \label{eq:estimation-unconditional-drift-distance}
  D
  =\left[
    \frac{\sum_k\bar w_k
      \left(\overline{\mu^{\PP,\mathrm{comp}}}_k
      -\widehat{\bar\mu}^{\PP,\mathrm{unc}}_k\right)^2}
    {\sum_k\bar w_k}
  \right]^{1/2},
  \qquad
  \bar w_k=\frac1n\sum_{d=1}^n w_{d,k}.
\end{equation}

\begin{table}[H]
\centering
\small
\begin{tabular}{lrr}
\toprule
Calibration quantity & Rank 8 & Rank 16\\
\midrule
PCA innovation variance retained & 72.88\% & 84.40\%\\
$\Q$-drift RMS & 13.58 & 13.77\\
Empirical $\PP-\Q$ target RMS
  & 39.41 $(38.95,39.88)$ & 39.29 $(38.82,39.75)$\\
Statewise compatible residual RMS
  & 24.23 $(24.01,24.44)$ & 20.81 $(20.63,20.99)$\\
Statewise target energy captured
  & 62.20\% $(61.64,62.75)$ & 71.95\% $(71.52,72.37)$\\
Constant-$\lambda$ residual RMS
  & 29.97 $(29.58,30.37)$ & 28.54 $(28.14,28.94)$\\
Constant-$\lambda$ target energy captured
  & 42.15\% $(40.94,43.32)$ & 47.24\% $(46.10,48.37)$\\
$D$ to unconditional drift, statewise $\lambda_d$
  & 16.33 $(15.75,16.90)$ & 13.57 $(13.14,14.01)$\\
$D$ to unconditional drift, constant $\bar\lambda$
  & 16.43 $(15.80,17.05)$ & 13.73 $(13.21,14.26)$\\
\bottomrule
\end{tabular}
\caption{Vol-of-vol coverage and physical-drift projection on the 710
one-session training pairs.  Parentheses give 95\% intervals.  RMS quantities
are annualized log-DLV-squared drifts.  The target and residual intervals use
20,000 pair-bootstrap resamples; the intervals for $D$ use the pair-level
delta method.  They condition on the fitted AR, PCA, and risk-neutral drift,
so they quantify time-series sampling error rather than full re-estimation
uncertainty.}
\label{tab:numerics-calibration}
\end{table}

The statewise fit is a diagnostic upper bound on what the retained diffusion
span can reproduce.  The constant-$\lambda$ rows describe the more restrictive
measure change actually used below.  Rank 16 materially improves both fits,
but neither retained factor system reproduces the empirical physical drift
exactly.  In particular, the confidence intervals in the last two rows do not
include zero; the compatible drift remains measurably different from the raw
unconditional estimate.

Appendix~\ref{app:drift} visualizes the effect of the drift change
with four matched Model~D paths (Section~\ref{sec:numerics}).  

\section{BHW No Dynamic Arbitrage Test}\label{sec:numerics}

In this section we assess whether the model is indeed free of dynamic arbitrage.
We are not aware of an accessible numerical benchmark for dynamic arbitrage in the literature, so
the methodology described here is a contribution in its own right, first presented in~\cite{DYSANOS2026}.

We compare four model specifications:
\begin{center}
\begin{tabular}{crr}
\toprule
Model & DLV factors & Steps/day\\
\midrule
A & 8  & 1\\
B & 16 & 1\\
C & 8  & 8\\
D & 16 & 8\\
\bottomrule
\end{tabular}
\end{center}
Every statistical audit uses 100,000 paths arranged as 50,000 independent antithetic
pairs.  The restricted-universe audits use a 20-business-day horizon; the full
instrument-grid extension below uses 25 business days.  For a live instrument $i$, the
one-day return used in both diagnostics is
\begin{equation}\label{eq:numerics-spot-scaled-return}
  R_{p,n}^i
  =\frac{V_{p,n+1}^i-V_{p,n}^i}{Z_{p,n}},
\end{equation}
where $Z$ is spot and $V^i$ is either spot or an option price.  Division by the
pre-step spot is predictable and therefore preserves the zero-conditional-mean
property under~$\Q$.

In addition to GROMM we also ran the two reference models from the literature
\begin{itemize}
  \item \textbf{Tangent L\'evy}~\citep{TangentLevy}: a finite-grid jump code book
  whose conditional L\'evy density obeys the convolutional HJM drift restriction,
  with spot driven by the compensated short-end jump law. This model is theoretically
  free of dynamic arbitrage.

  We use 16 historical
  PCA factors and eight simulation steps per day; Appendix~\ref{app:tangent-levy}
  records the calibration, reconstruction, and positivity choices.

  \item \textbf{CRC}~\citep{DysLitGambaraTeichmann2021}: a generalized-Bates
  consistent-recalibration model with Heston variance and five
  maturity-bucketed compensated Gaussian jump laws.  We estimate the reset
  law strictly before the reference date and use the inverse recalibration
  map described in Appendix~\ref{app:bates-crc}. This model class would be free of arbitrage
  if the calibrated model satisfied Definition~6.3, which for our setup is
  not the case. This reflects the challenge that the model relies on solving
  ``an inverse problem that is ill-posed in the sense of Hadamard'', page~26.
\end{itemize}

\subsection{Initial Fit}

We compare all three initial states on the same retained quote book from
29 August 2025: 2,662 call and put intervals with at most 25 business days to
expiry and the GROMM normalized-moneyness range.  Figure~\ref{fig:initial-fit-iv}
shows signed Black implied-volatility excess beyond the bid-ask volatility
interval, and Figure~\ref{fig:initial-fit-spread} shows the analogous signed
price excess~\eqref{eq:excess} in half-spread units.  The figures plot only
quotes with actual excess beyond a boundary; quotes fitted inside the spread
are omitted.  Each panel title reports its percentage outside the spread.  The
price plot uses a symmetric-log colour
scale, linear between $-1$ and $1$, so small excess remains visible despite
the much larger Tangent L\'evy errors.

\begin{figure}[H]
  \centering
  \includegraphics[width=\textwidth]{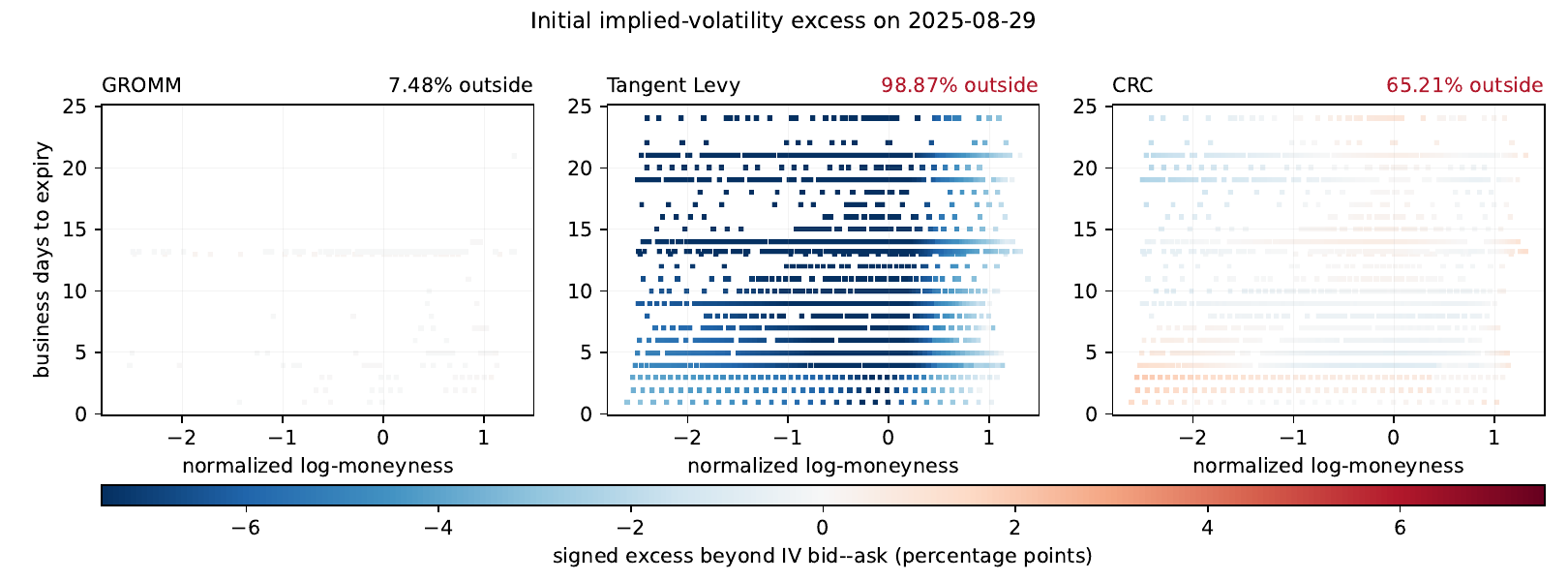}
  \caption{Initial implied-volatility excess across the common 29 August 2025
  surface.  Colours show signed distance beyond the nearest bid-ask boundary
  in Black-volatility percentage points.  Quotes inside the spread are not
  plotted; all three panels use the same scale.  Red percentages identify
  poor static fits; they are not dynamic no-arbitrage tests.}
  \label{fig:initial-fit-iv}
\end{figure}

\begin{figure}[H]
  \centering
  \includegraphics[width=\textwidth]{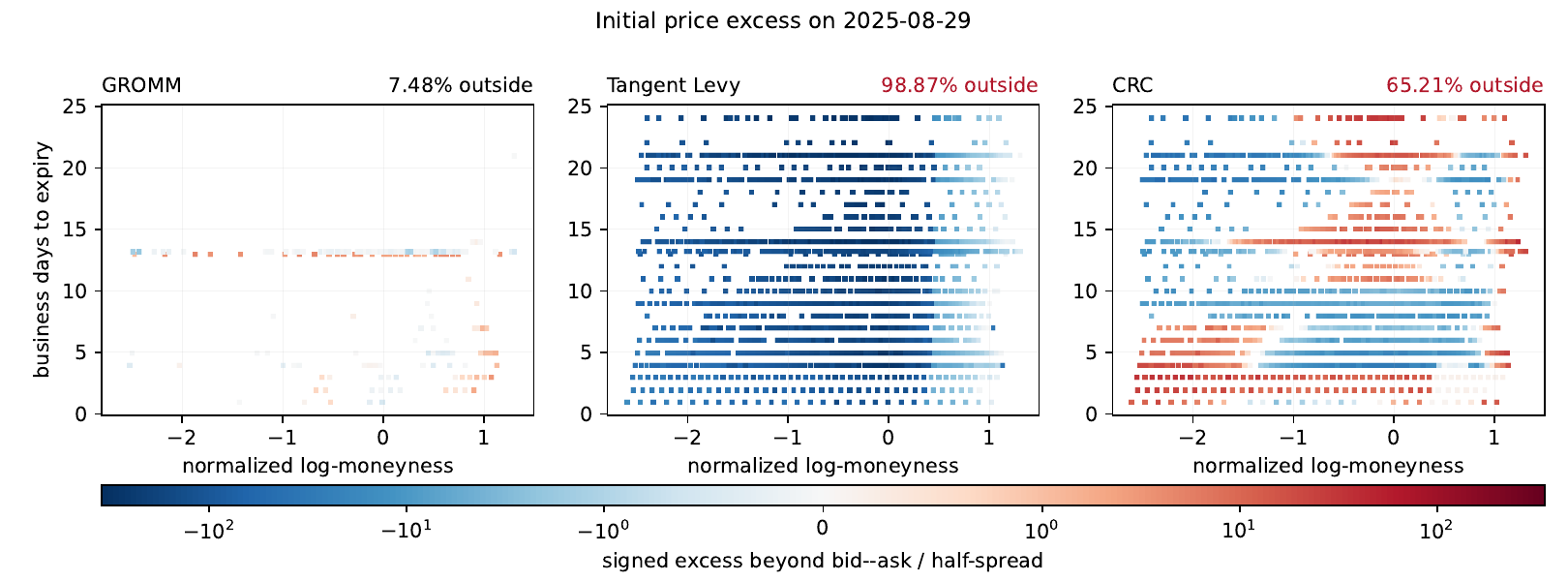}
  \caption{Initial signed price excess across the same surface, measured from
  the nearest bid-ask boundary in half-spread units.  Quotes inside the spread
  are not plotted; the common symmetric-log colour scale has its linear region
  at one half-spread.  Red percentages identify poor static fits only.}
  \label{fig:initial-fit-spread}
\end{figure}

\begin{table}[H]
\centering
\small
\begin{tabular}{lrrr}
\toprule
Model & Quotes with excess & Conditional mean & Conditional median\\
\midrule
GROMM & 7.48\% & 0.971 & 0.358\\
Tangent L\'evy & \textcolor{red}{98.87\%} & 112.486 & 102.046\\
CRC & \textcolor{red}{65.21\%} & 9.739 & 7.129\\
\bottomrule
\end{tabular}
\caption{Initial-fit excess on the common 29 August 2025 quote book.
Quotes with excess is the percentage with $e>0$~\eqref{eq:excess}; the final two
columns report the mean and median of $e$ conditional on $e>0$.
The GROMM row uses the same checkpoint as
Table~\ref{tab:numerics-static-market-fit}, restricted to this date.}
\label{tab:initial-fit-excess}
\end{table}

Tangent L\'evy, which is essentially an independent-increment model, has the
hardest time fitting the observed option prices.  The more flexible but parametric CRC
model improves  on that fit, but still leaves 65.21\% of
quotes outside the spread; neither poor static fit is interpreted as a
dynamic no-arbitrage rejection.

\subsection{Risk-Neutral Diagnostics}
\label{sec:numerics-audit-definition}

Numerically, it is clear that as the number
of instruments increases on a fixed set of paths, the likelihood of a risk-free return
of a linear combination of instruments simply due to sampling error increases.

We will therefore test absence of dynamic arbitrage on a reduced trading universe which consists of spot and
44 options with cash strikes fixed at the initial date: eleven OTM calls or
puts at each of 5, 10, 15, and 20 business days.  Thus 45 instruments are live
initially, giving 20 spot and 550 option day--instrument coordinates over the
20-day horizon. 

The resulting test will only confirm absence of dynamic arbitrage within this restricted set of instruments, 
and only within the class of strategies considered. To illustrate the limitations of the former,
we apply the same test to an enlarged trading universe
in Section~\ref{sec:numerics-full-instrument-grid}.

\paragraph{Unconditional diagnostic.}
For every day--instrument coordinate $(n,i)$, first average the two members of
each antithetic pair and then form
\begin{equation}\label{eq:numerics-z-score}
  z_{n,i}
  =\frac{\overline R_{n,i}}
  {\widehat{\operatorname{se}}(\overline R_{n,i})}.
\end{equation}
The tables report the maximum and 99th percentile of $\abs{z_{n,i}}$ and the
number of coordinates beyond fixed thresholds.  This checks the unconditional
implication $\E^\Q[R_{n+1}^i]=0$ separately at every day and instrument.  It
does not test $\E^\Q[R_{n+1}^i\mid\mathcal F_n]=0$ at every simulated state.
Under~$\PP$ the same zero-mean statistics describe fitted physical drift; they
are not no-arbitrage violations.  For one standard-normal coordinate the 99th
percentile of $\abs{z}$ is 2.576.  Coordinate dependence and the simultaneous
screen mean that the reported counts are diagnostics, not family-wise
$p$-values.  The $1.96$ and $2.576$ thresholds apply to each coordinate-level
$z$-score, not directly to the maximum or empirical quantiles across all
coordinates.

\paragraph{Conditional diagnostics.}
We split complete antithetic pairs into two equal path sets $A$ and $B$, and
write $\mathcal I_A$ for all live path--time observations from paths in $A$.
The state $X_{p,n}$ contains the model's current predictable state---the
cumulative Brownian-factor histories for GROMM and Tangent L\'evy, or current
variance and structural parameters for CRC---together with log spot, the live
log option-price-to-spot ratios, calendar time, and time to each expiry.  On
set $A$ we fit the uncapped affine-plus-deep action
\begin{equation}\label{eq:numerics-action-map}
  a_\theta(X_n)
  =D_n\left[b+L\widetilde X_n+g_\theta(\widetilde X_n)\right],
\end{equation}
where $D_n$ masks expired instruments, $\widetilde X$ is standardized using
set $A$ only, and $g_\theta$ is a two-layer width-32 SiLU network.

For either diagnostic $\delta\in\{\mathrm{sqr},\mathrm{ent}\}$, let
$\widehat a_{-p}^{\,\delta}$ denote the action fitted on the fold not
containing path $p$.  Its out-of-fold score is
\begin{equation}\label{eq:numerics-action-path-score}
  d_p^\delta
  =\frac1H\sum_{n=0}^{H-1}
    \widehat a_{-p}^{\,\delta}(X_{p,n})^\top R_{p,n},
\end{equation}
where $H=20$ in the restricted experiment and $H=25$ on the full instrument
grid.  The factor $1/H$ only fixes the unit: removing it multiplies both the
estimate and its standard error by $H$.  For every antithetic pair
$r=(p,\bar p)$ define
$\widetilde d_r^\delta=(d_p^\delta+d_{\bar p}^\delta)/2$, let
$\overline d^\delta$ be the mean of the $N_{\mathrm{pair}}$ pair scores, and
let $s_\delta$ be their sample standard deviation.  We use the following two
diagnostics, in this order.

\begin{enumerate}[label=(\roman*)]
\item \emph{Squared diagnostic.}
  On training fold $A$ we maximize
  \begin{equation}\label{eq:numerics-squared-objective}
    \widehat J_A^{\mathrm{sqr}}(\theta)
    =\frac{1}{\abs{\mathcal I_A}}
      \sum_{(p,n)\in\mathcal I_A}
      \left[
        a_\theta(X_{p,n})^\top R_{p,n}
        -\frac12\lVert a_\theta(X_{p,n})\rVert^2
        -\frac{\lambda}{2}
          \bigl(a_\theta(X_{p,n})^\top R_{p,n}\bigr)^2
      \right],
      \qquad \lambda=1.
  \end{equation}
  Pointwise in the state, its unrestricted population optimizer is
  $(I+\lambda\E[RR^\top\mid X])^{-1}\E[R\mid X]$.  Thus the realized squared
  portfolio return includes cross-instrument second moments without requiring
  a separately estimated covariance matrix.  The squared action statistic is
  \begin{equation}\label{eq:numerics-t-sqr}
    T_{\mathrm{sqr}}
    :=\frac{\overline d^{\mathrm{sqr}}}
      {s_{\mathrm{sqr}}/\sqrt{N_{\mathrm{pair}}}}.
  \end{equation}

\item \emph{Entropy diagnostic.}
  On training fold $A$ we minimize
  \begin{equation}\label{eq:numerics-entropy-objective}
    \widehat L_A^{\mathrm{ent}}(\theta)
    =\log\!\left[
      \frac{1}{\abs{\mathcal I_A}}
      \sum_{(p,n)\in\mathcal I_A}
      \exp\!\left(
        -a_\theta(X_{p,n})^\top R_{p,n}
        +\frac12\lVert a_\theta(X_{p,n})\rVert^2
      \right)
    \right].
  \end{equation}
  For conditionally Gaussian returns the population action is
  approximately\footnote{On the filtration generated by the tangent space of
  the optimal action.}
  \[
    a_{\mathrm{ent}}^\star(X_t)
    \approx (I+\operatorname{Cov}_t(R))^{-1}\E_t[R].
  \]
  The entropy action statistic is
  \begin{equation}\label{eq:numerics-t-ent}
    T_{\mathrm{ent}}
    :=\frac{\overline d^{\mathrm{ent}}}
      {s_{\mathrm{ent}}/\sqrt{N_{\mathrm{pair}}}}.
  \end{equation}
\end{enumerate}

In each case the action is fitted on $A$, frozen and evaluated on untouched
$B$, after which $A$ and $B$ are exchanged and the two out-of-fold sets of
pair scores are combined.  Under the conditional zero-mean null hypothesis
every frozen predictable action has zero expected score, and both
$T_{\mathrm{sqr}}$ and $T_{\mathrm{ent}}$ are approximately standard normal.
Either statistic is shown in black when its absolute value is at most $1.96$,
orange when it lies in $(1.96,2.576]$, and red when it exceeds $2.576$.
These are per-test confidence thresholds.  Each is a conditional test relative
to the state and strategy class in~\eqref{eq:numerics-action-map}, rather than
a proof over all measurable strategies.

For broad instrument universes, isolated large $z$-scores are expected under
the null purely from multiple testing.  We therefore report both the extremes
and the number of coordinates beyond fixed thresholds, and compare the latter
with the corresponding standard-normal frequency.

\subsection{Restricted Universe}

% Generated by scripts/generate_main2_crc_q_audit.py; do not edit.
\providecommand{\CrcQPathCount}{100000}
\providecommand{\CrcQPairCount}{50000}
\providecommand{\CrcQHorizonDays}{20}
\providecommand{\CrcQStepsPerDay}{8}
\providecommand{\CrcResetGateSamples}{1000}
\providecommand{\CrcResetGateAccepted}{0}
\providecommand{\CrcResetGateTolerance}{2\times10^{-6}}
\providecommand{\CrcResetGateMedianError}{3.36\times10^{-4}}
\providecommand{\CrcResetGateMaximumError}{1.53\times10^{-3}}
\providecommand{\CrcPopulationMaximumResetError}{2.91\times10^{-3}}
\providecommand{\CrcQCoordinates}{570}
\providecommand{\CrcQMaximumAbsZ}{140.250}
\providecommand{\CrcQPercentileNineNineAbsZ}{110.774}
\providecommand{\CrcQAboveTwoFiveSevenSix}{371}
\providecommand{\CrcQAboveFive}{320}
\providecommand{\CrcQSquaredT}{42.562}
\providecommand{\CrcQEntropyT}{209.845}

Table~\ref{tab:numerics-restricted-q} shows the $\Q$ results for the restricted
trading universe.  Orange marks a moderate departure from the Gaussian
benchmark and red would mark a five-standard-error violation in the
coordinate-wise screen.  The separate conditional-statistic rule is given
above.  All eight GROMM conditional entries and both Tangent L\'evy conditional
entries are black.

\begin{table}[H]
\centering
\small
\begin{tabular}{lrrrrrrr}
\toprule
Model & $N$ & Max. $\abs{z}$ & 99\% $\abs{z}$
  & $N_{2.576}$ & $N_5$ & $T_{\mathrm{sqr}}$ & $T_{\mathrm{ent}}$\\
\cmidrule(lr){3-6}
\cmidrule(lr){7-8}
 & & \multicolumn{4}{c}{Unconditional} & \multicolumn{2}{c}{Conditional}\\
\midrule
A: rank 8, one step    & 570 & 2.716 & 2.464 &  3 & 0 & 1.394 & 1.401\\
B: rank 16, one step   & 570 & {\color{orange}3.663} & {\color{orange}3.618}
  & {\color{orange}10} & 0 & 1.037 & 1.040\\
C: rank 8, eight steps & 570 & 2.793 & 2.373 &  4 & 0 & 1.755 & 1.753\\
D: rank 16, eight steps& 570 & 2.996 & {\color{orange}2.967}
  & {\color{orange}22} & 0 & 0.039 & 0.040\\
Tangent L\'evy & 570 & 2.191 & 2.126
  & 0 & 0
  & $-0.768$ & $-0.575$\\
CRC  & \CrcQCoordinates & {\color{red}\CrcQMaximumAbsZ} & {\color{red}\CrcQPercentileNineNineAbsZ}
  & {\color{red}\CrcQAboveTwoFiveSevenSix} & {\color{red}\CrcQAboveFive}
  & {\color{red}\CrcQSquaredT} & {\color{red}\CrcQEntropyT}\\
\bottomrule
\end{tabular}
\caption{Restricted-universe, 100,000-path $\Q$ diagnostics.  The first five
numeric columns are coordinate-wise unconditional statistics; the final two
are the two-fold out-of-fold conditional statistics.  Here $N_c$ is the number
of coordinates with $\abs{z}>c$.  The red CRC row reflects that with our trained
data the model prices did not  match at the model's reset dates.}
\label{tab:numerics-restricted-q}
\end{table}

For the four GROMM specifications, no restricted coordinate exceeds five
standard errors and no conditional statistic exceeds 1.755 in absolute value.
Model~D has the richest factor representation, the finer time discretization,
and the conditional statistic closest to zero.  The repaired Tangent L\'evy
implementation is also consistent with the risk-neutral null: its maximum and
99th-percentile coordinate statistics are 2.191 and 2.126, no coordinate
exceeds 2.576, and the two conditional statistics are
$T_{\mathrm{sqr}}=-0.768$ and $T_{\mathrm{ent}}=-0.575$.  Both corresponding
95\% confidence intervals contain zero.  These results use the paper-aligned
daily-grid state and constrained reduced-node diffusion described in
Appendix~\ref{app:tangent-levy}; the poor initial market fit reported above is
therefore distinct from the absence of a rejection in these dynamic tests.
In contrast, the generalized-Bates CRC implementation fails the no-arbitrage tests
in our implementation and calibration of the model.
Its learned daily reset map fails the independent exact
repricing of the options on all \CrcResetGateSamples\ tested resets.\footnote{The test was performed at tolerance
  $\CrcResetGateTolerance$ in spot-normalized price (median maximum error
$\CrcResetGateMedianError$; maximum $\CrcResetGateMaximumError$).}  The resulting
\CrcQPathCount\ paths
therefore have artificial resets in option prices which our tests detect:
\CrcQAboveTwoFiveSevenSix\ of \CrcQCoordinates\ coordinates exceed 2.576 standard
errors, \CrcQAboveFive\ exceed five, and $T_{\mathrm{sqr}}=\CrcQSquaredT$,
$T_{\mathrm{ent}}=\CrcQEntropyT$.  This is somewhat expected as our model as calibrated does not satisfy
condition~(i) of Definition~6.3 in~\cite{DysLitGambaraTeichmann2021}.

\begin{remark}
We also applied our no-arbitrage tests to the statistical GROMM measure.
In this case we \emph{expect} the tests to fail.
Table~\ref{tab:numerics-restricted-p} confirms the failure:
all 570 coordinates exceed five standard errors under every physical-measure
specification, and the out-of-fold entropy statistics range from 89.205 to
646.645.  Thus the same diagnostics that remain small under~$\Q$ recover both
unconditional and state-predictable drift after the fitted change to~$\PP$.
\begin{table}[H]
\centering
\small
\begin{tabular}{lrrrrrr}
\toprule
Model & $N$ & Max. $\abs{z}$ & 99\% $\abs{z}$
  & $N_{2.576}$ & $N_5$ & $T_{\mathrm{ent}}$\\
\cmidrule(lr){3-6}
 & & \multicolumn{4}{c}{Unconditional} & \\
\midrule
A: rank 8, one step    & 570 & 62.138 & 59.780 & 570 & 570 & 90.038\\
B: rank 16, one step   & 570 & 342.470 & 328.811 & 570 & 570 & 646.645\\
C: rank 8, eight steps & 570 & 61.002 & 58.398 & 570 & 570 & 89.205\\
D: rank 16, eight steps& 570 & 336.819 & 325.038 & 570 & 570 & 624.560\\
\bottomrule
\end{tabular}
\caption{Restricted-universe, 100,000-path statistical measure $\PP$ diagnostics, reported against
zero to display the predictable drift created by the fitted
measure change.}
\label{tab:numerics-restricted-p}
\end{table}
\end{remark}

\subsection{Full Instrument Grid}
\label{sec:numerics-full-instrument-grid}

The preceding experiment deliberately trades few options.  We now test
Model~C on spot and the complete fixed grid of 200 interior strikes at each of
25 expiries, for 5,001 possible instruments.  Eligibility is predictable and
path dependent: before each daily trade, a live option is included only when
\[
  -2 < \frac{\log(K/Z_{p,n})}
  {\widehat\sigma_{\mathrm{ATM}}\sqrt{T_j-t_n}} < 1,
  \qquad \widehat\sigma_{\mathrm{ATM}}=9.52\%.
\]
Thus short expiries contain fewer eligible strikes and the traded set changes
with spot.  Ineligible and expired actions are exactly zero.  Each conditional
policy maps 5,035 pre-trade state variables to the masked 5,001-dimensional
action.  Each path score in~\eqref{eq:numerics-action-path-score} is averaged
over 25 rather than 20 daily transitions; this changes its units but not either
conditional statistic.

\begin{table}[H]
\centering
\small
\begin{tabular}{lrrrrrrrr}
\toprule
Model & $N_{\mathrm{inst}}$ & $N_{\mathrm{coord}}$
  & Max. $\abs{z}$ & 99\% $\abs{z}$
  & $N_{2.576}$ & $N_5$ & $T_{\mathrm{sqr}}$ & $T_{\mathrm{ent}}$\\
\cmidrule(lr){4-7}
\cmidrule(lr){8-9}
 & & & \multicolumn{4}{c}{Unconditional} & \multicolumn{2}{c}{Conditional}\\
\midrule
C: rank 8, eight steps
  & 5,001 & 62,569 & {\color{orange}4.140} & 2.583 & 635 & 0 & 0.843 & 0.843\\
\bottomrule
\end{tabular}
\caption{Full-instrument-grid, 100,000-path $\Q$ diagnostics.  A daily
coordinate enters the unconditional screen when it has at least 20 eligible
path observations and positive Monte Carlo variance.  Of 63,180 coordinates
that are eligible on at least one path, 62,569 enter the screen.}
\label{tab:numerics-full-instrument-grid}
\end{table}

At the two-sided 1\% threshold, 625.69 exceedances are expected among 62,569
independent standard-normal statistics; the observed count is 635.  The 188
coordinates above three standard errors are similarly close to the 168.9
expected count, and none exceeds five.  Dependence across neighbouring strikes
and expiries makes these benchmark counts descriptive rather than a formal
joint test, but the unconditional screen shows no excess tail beyond ordinary
multiple-testing variation.

The squared diagnostic gives a mean captured daily return of
$3.64\times10^{-8}$ spot units with pair-based standard error
$4.32\times10^{-8}$, hence $T_{\mathrm{sqr}}=0.843$.  Its 95\% confidence
interval, $[-4.83\times10^{-8},1.21\times10^{-7}]$, contains zero.  The entropy
diagnostic gives the same rounded mean, standard error, confidence interval,
and $T_{\mathrm{ent}}=0.843$.  The unconditional and both
strategy-conditional diagnostics therefore remain consistent with risk
neutrality on the full dynamically eligible instrument grid.

\section{Conclusion}

We have given an explicit continuous-time construction of a risk-neutral measure for the SANOS
discrete local volatility representation of European option markets, extending the static
SANOS surface of~\cite{SANOS2026} to a dynamically arbitrage-free model. The resulting SDLV
process admits a closed-form risk-neutral drift for every discrete local volatility coordinate
and an exact front-density/spot diffusion, in place of the numerical martingale measure
used in~\cite{DYSANOS2026}. We further proved that the DLV dynamics admit a unique strong solution under
the assumption of a square-integrable bound on the vol-of-vol (Theorem~\ref{thm:existence_bh}).

Estimated on 891 trading days of S\&P~500 index option data, the model's static fit covers more
than 95\% of quoted bid-ask intervals, and the restricted-universe $\Q$ diagnostics of
Section~\ref{sec:numerics} confirm the theoretical martingale property up to time-discretization
error.  On the full dynamically eligible instrument grid, both the
coordinate-wise unconditional screen and the two conditional
diagnostics remain consistent with the risk-neutral null.
We further estimate the physical dynamics and the implied market price of risk, recovering a
stable low-rank factor structure for the DLV vol-of-vol.

We reviewed the two published related models, Tangent L\'evy~\citep{TangentLevy} and CRC~\citep{DysLitGambaraTeichmann2021}. The former is free of dynamic arbitrage
but does not fit well to observed option prices. Our implementation of the latter does not fit observed option prices
well, nor did we manage to construct a risk-neutral regridding process as stipulated in~\cite{DysLitGambaraTeichmann2021}.

Natural next steps include applying the full-instrument diagnostic to the
remaining factor ranks and time discretizations, studying the discrete-branch
correction of Remark~\ref{rem:discrete-branch-sampling}, and extending
the estimation beyond short-dated S\&P~500 index options to other asset classes and longer
tenors.

\section*{Acknowledgements}

We want to thank the team at Wharton for their WRDS \url{https://wrds-www.wharton.upenn.edu} portal and OptionMetrics \url{https://optionmetrics.com/united-states} 
for providing their IvyDB data via this service. Without this data our work would not have been possible.

\section*{Declaration of funding}
BH gratefully acknowledges financial support from UKRI1010: High order mathematical and computational infrastructure for streamed data that enhance contemporary generative and large language models.

\appendix
% !TEX root = main2.tex
\section{Model and DLV factor visualizations}

\subsection{Consistent Drift}
\label{app:drift}

Figure~\ref{fig:numerics-best-paths} illustrates the fitted measure change
using four matched Model~D paths. 
Solid and dashed lines of the same colour use
the same Gaussian draws under~$\Q$ and~$\PP$.  Besides spot, the illustration
reports ATM one- and five-day implied volatility and the central log-strike skew
\begin{equation}\label{eq:numerics-atm-skew}
  \operatorname{Skew}_{\tau}
  =\frac{
    \sigma_{\mathrm{imp}}(K=Z_t e^h,\tau)
    -\sigma_{\mathrm{imp}}(K=Z_t e^{-h},\tau)
  }{2h},
  \qquad h=0.01.
\end{equation}
The paths illustrate the fitted measure change; four paths are not a physical
model validation.

\begin{figure}[H]
  \centering
  \includegraphics[height=0.88\textheight]
    {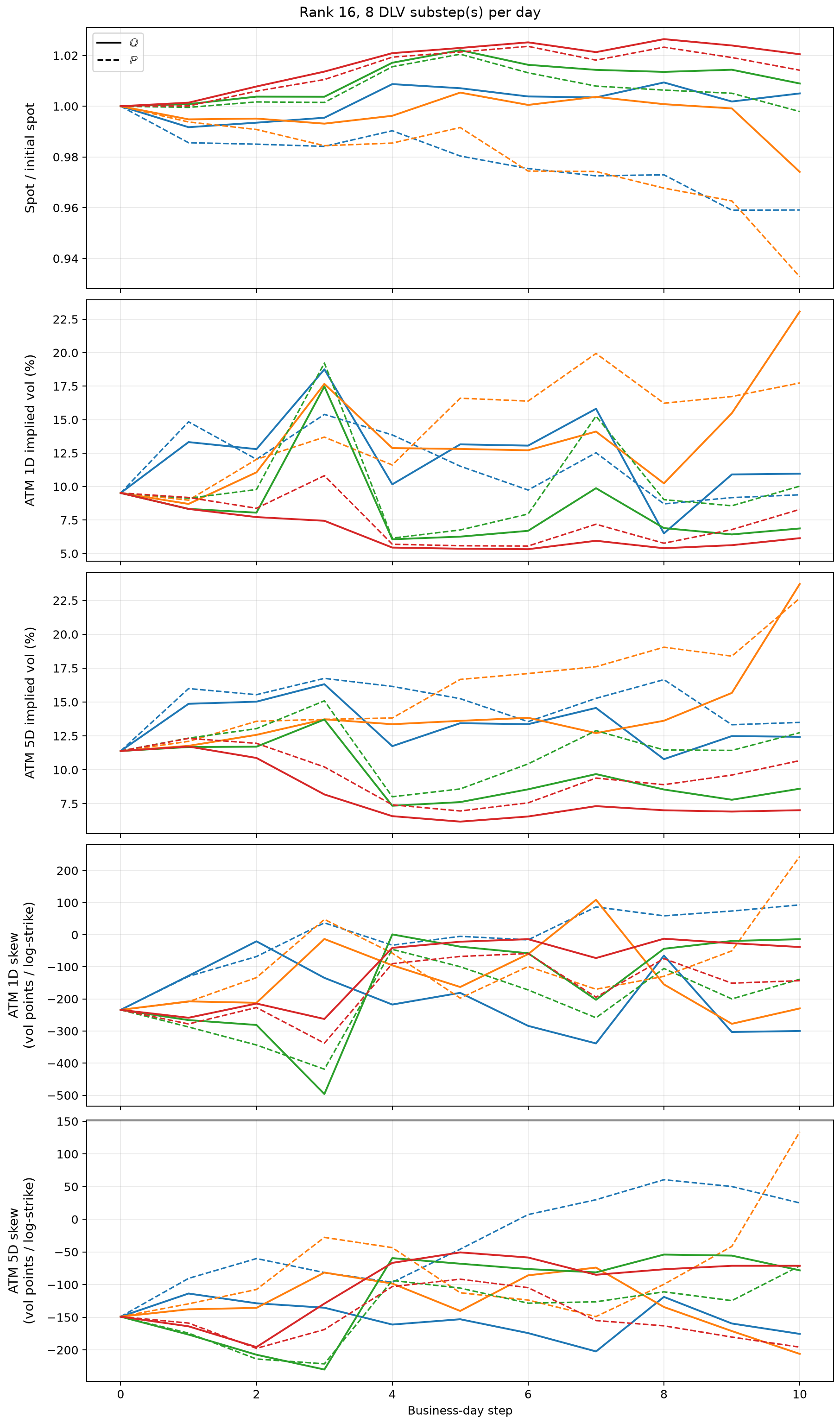}
  \caption{Matched $\PP/\Q$ paths for Model~D: rank 16 with eight Brownian
  substeps per day.}
  \label{fig:numerics-best-paths}
\end{figure}

\subsection{PCA Summary}
\label{app:dlv-factor-figures}

The remaining figures visualize the factor systems estimated in
Section~\ref{sec:estimation-vol-of-vol}.  Figure~\ref{fig:numerics-baseline-iv-3d}
is the baseline risk-neutral implied-volatility surface used for all factor
illustrations.

\begin{figure}[H]
  \centering
  \includegraphics[width=0.72\textwidth]
    {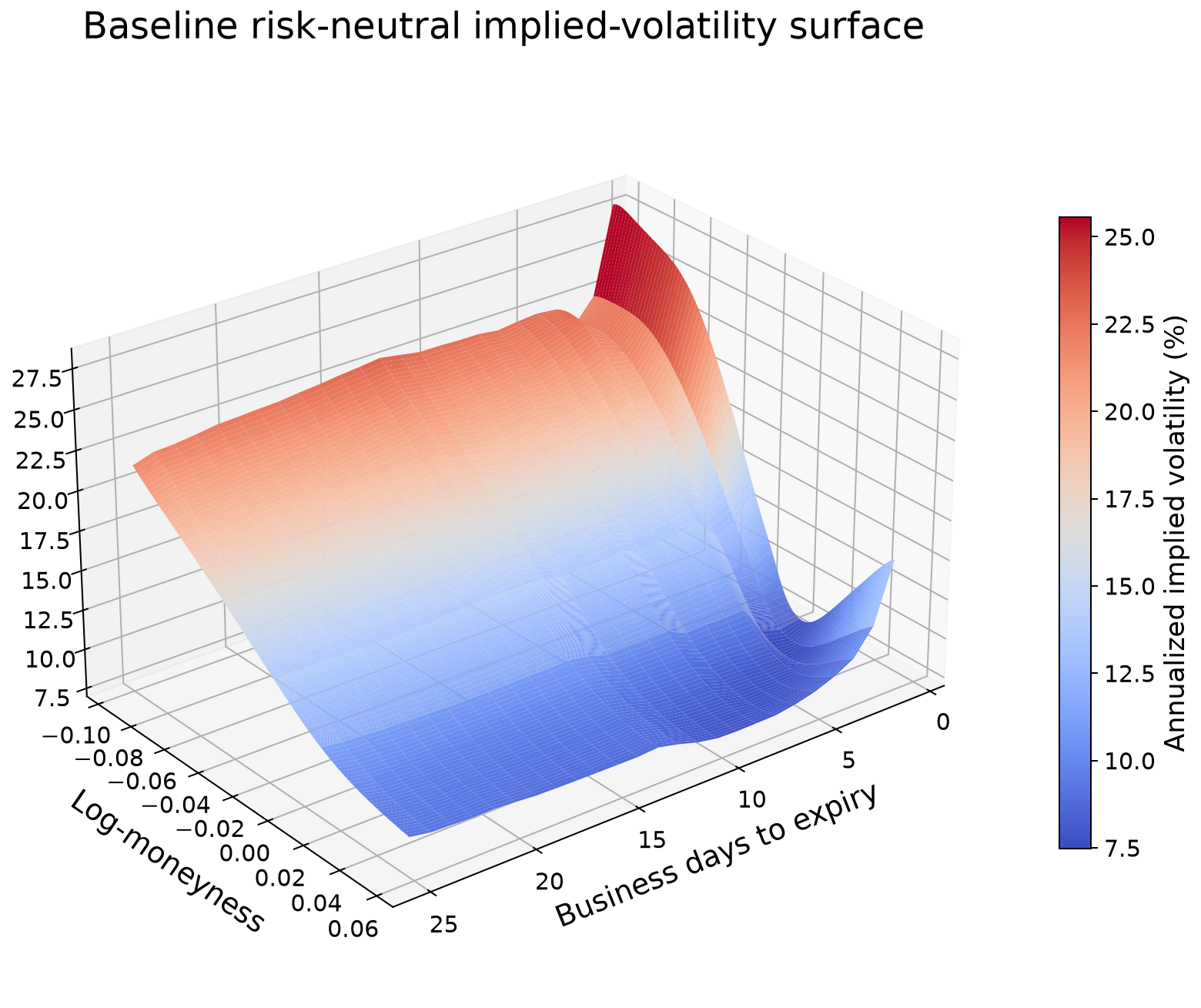}
  \caption{Baseline annualized implied volatility on the fitted
  log-moneyness and business-days-to-expiry grid.}
  \label{fig:numerics-baseline-iv-3d}
\end{figure}

Figure~\ref{fig:numerics-factor-power} reports individual and cumulative PCA
power.  The experiments retain either eight or 16 factors; the first rank
above 90\% is 26.

\begin{figure}[H]
  \centering
  \includegraphics[width=0.88\textwidth]
    {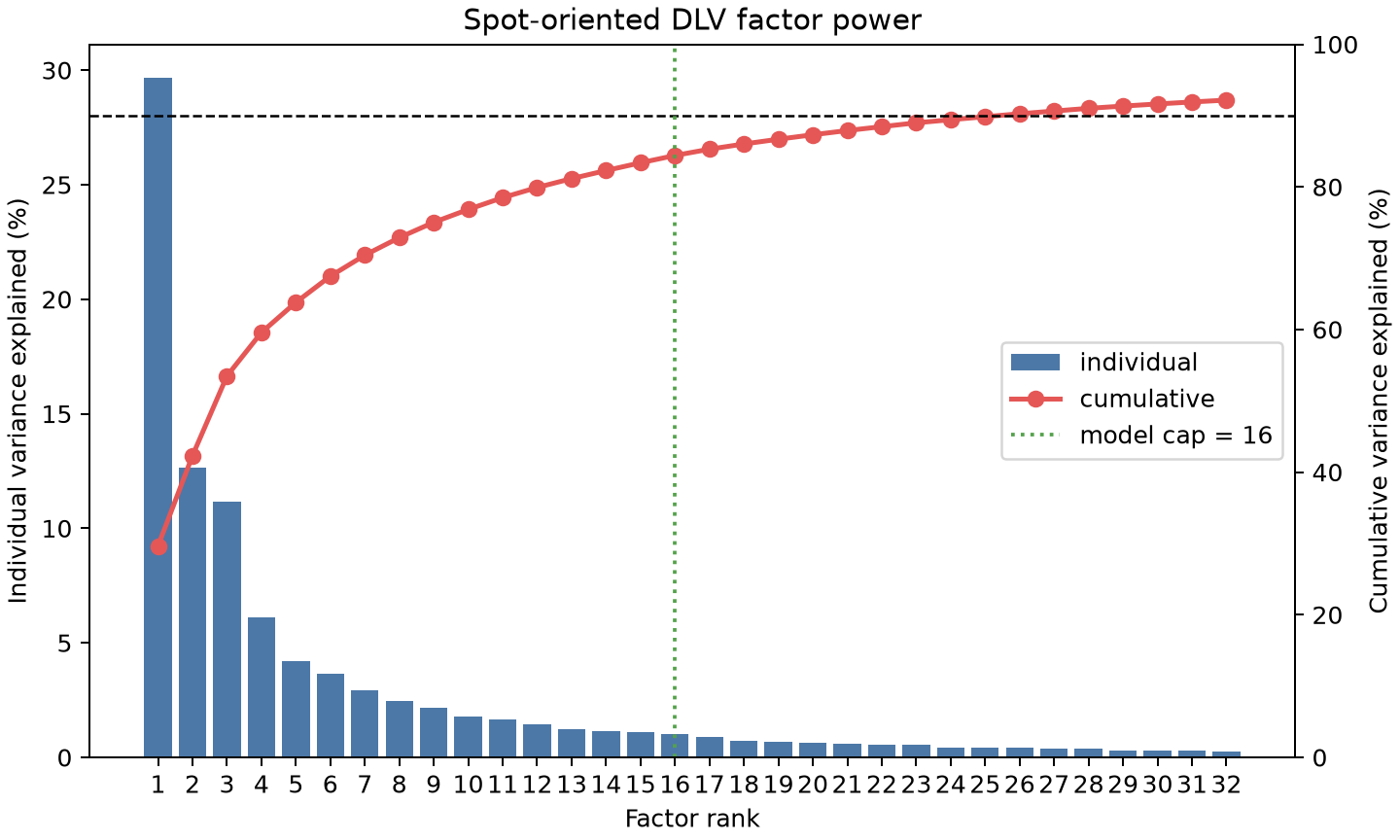}
  \caption{Individual and cumulative variance explained by the spot-oriented
  DLV factor system.  The vertical line is the rank-16 model cap; 26 factors
  are required to exceed 90\%.}
  \label{fig:numerics-factor-power}
\end{figure}

The rank-16 fit appends PCs 9--16 to the rank-8 fit.  A one-standard-deviation
shock is the diffusion-only log-DLV displacement
$\pm\xi_f\sqrt{1/252}$.  The figures display its central implied-volatility
response
\begin{equation}
  \label{eq:numerics-iv-factor-shock}
  \frac12\left[
    \sigma_{\mathrm{imp}}(\log\Sigma+\xi_f/\sqrt{252})
    -\sigma_{\mathrm{imp}}(\log\Sigma-\xi_f/\sqrt{252})
  \right].
\end{equation}
These shock directions are the same under $\PP$ and $\Q$: an equivalent
measure change alters drift, not diffusion.  Each panel deliberately retains
its own vertical scale.

\begin{figure}[H]
  \centering
  \includegraphics[width=\textwidth]
    {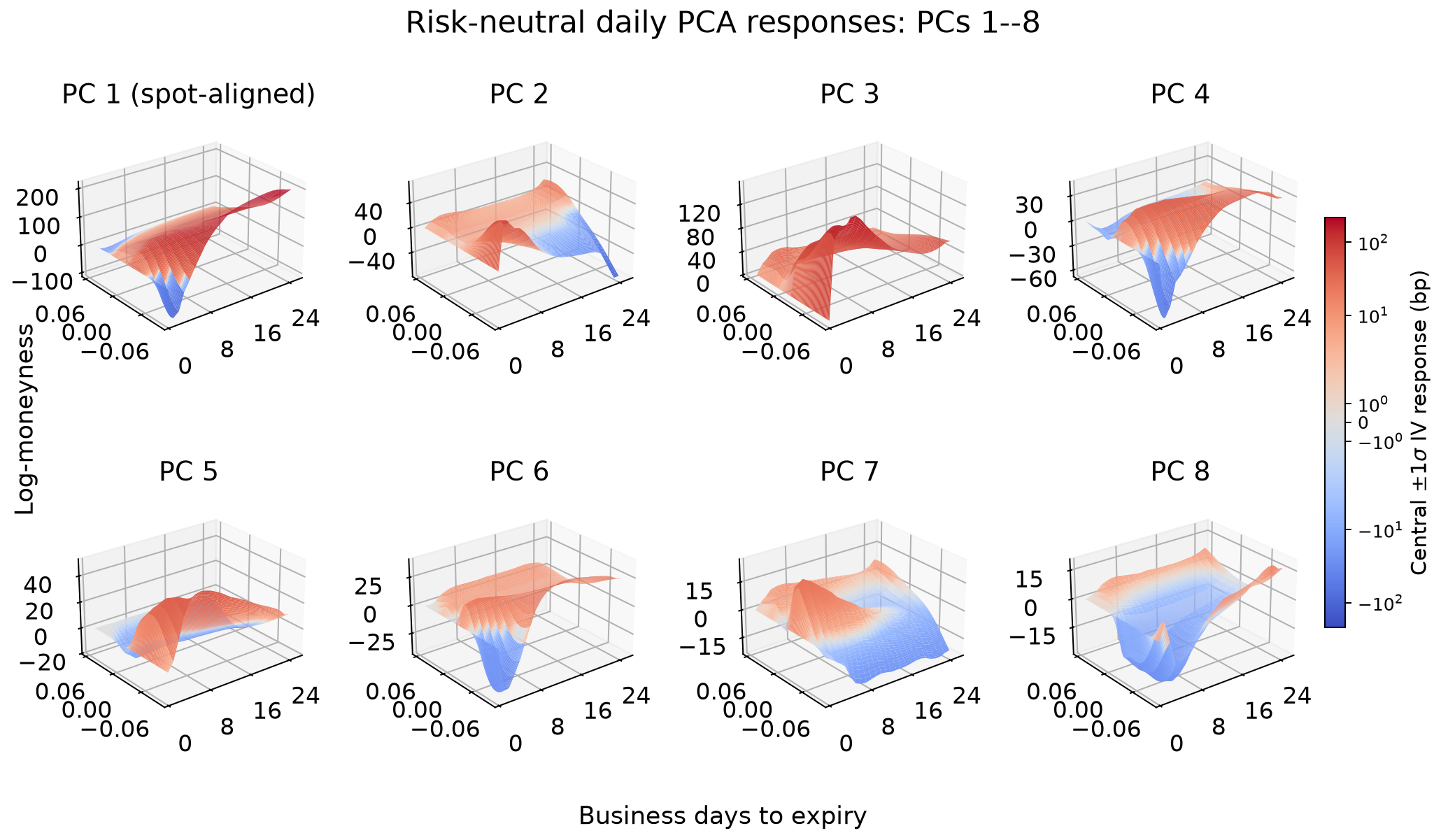}
  \caption{Risk-neutral one-standard-deviation implied-volatility responses
  for PCs 1--8.  PC~1 is aligned to the front-spot score.}
  \label{fig:numerics-pc1-8}
\end{figure}

\begin{figure}[H]
  \centering
  \includegraphics[width=\textwidth]
    {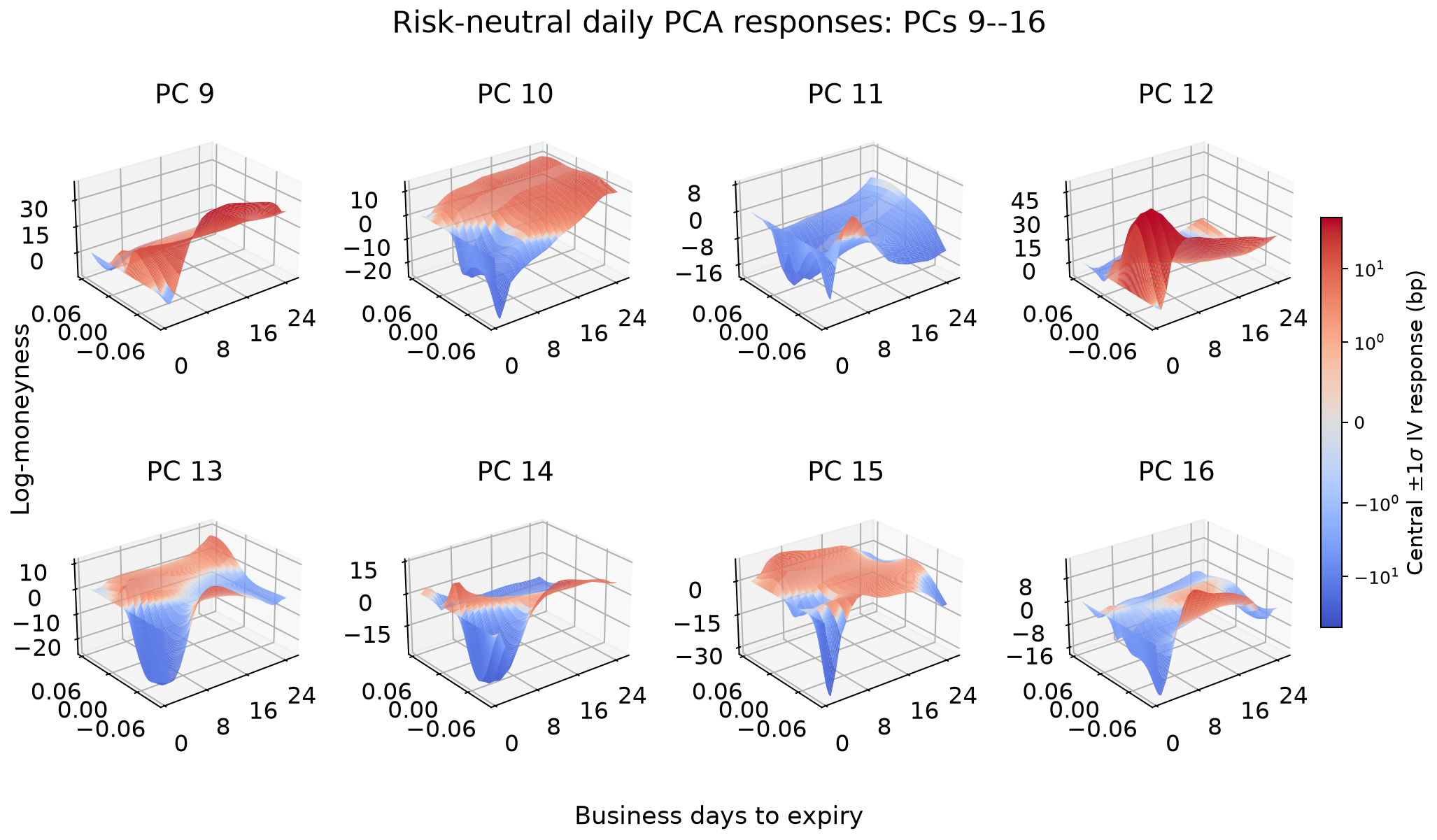}
  \caption{The additional one-standard-deviation implied-volatility responses,
  PCs 9--16, in the nested rank-16 model.}
  \label{fig:numerics-pc9-16}
\end{figure}

% !TEX root = main2.tex
\section{Market-data and calibration pipeline}
\label{app:market-pipeline}

This appendix records the data and numerical pipeline behind
Section~\ref{sec:estimation}. It follows the broader SPX pipeline described in
\cite{DYSANOS2026}, restricted here to the short-dated fixed-expiry SANOS
experiment. The complete transformation is
\[
\begin{aligned}
\text{raw IvyDB quotes}
&\longrightarrow \text{daily cleaning and parity-forward enrichment}\\
&\longrightarrow \text{arbitrage-free SANOS eligibility surface}\\
&\longrightarrow \text{European short-horizon quote selection}\\
&\longrightarrow \text{daily 25-by-200 calibration tensor}\\
&\longrightarrow \text{direct or affine end-to-end fit}.
\end{aligned}
\]

\subsection{Raw data and cleaning}
\label{app:pipeline-raw}

The source is OptionMetrics IvyDB US data accessed through WRDS. For each
eligible SPX market date we load the index level and previous index level,
option strikes and calendar expiries, call and put bid and ask quotes, volume,
open interest, AM/PM settlement flags, and the zero curve. The retained sample
runs from 3 January 2022 through 29 August 2025 and contains 891 valid full-day
NYSE sessions.

The raw cleaning step removes known bad ticker--date observations, missing or
invalid quotes, invalid settlement flags, same-day expiries, irregular
expiries, and quotes whose bid does not exceed $0.011$. Business days and year
fractions to expiry are then computed; an AM-settled expiry is treated as
$0.2$ of a business day earlier than its corresponding PM expiry.

For each expiry we recompute the forward from put--call parity rather than use
the delayed IvyDB forward table. With discount factor $D$, define the
discounted-forward bounds at a strike $K$ by
\[
  G^{\mathrm{ask}}(K)=C^{\mathrm{ask}}(K)-P^{\mathrm{bid}}(K)+DK,
  \qquad
  G^{\mathrm{bid}}(K)=C^{\mathrm{bid}}(K)-P^{\mathrm{ask}}(K)+DK.
\]
Their midpoint is averaged across strikes with inverse-spread weights,
\[
  \widehat G
  =\frac{\sum_K w_K\,[G^{\mathrm{ask}}(K)+G^{\mathrm{bid}}(K)]/2}
         {\sum_K w_K},
  \qquad
  w_K=\frac1{\max\{G^{\mathrm{ask}}(K)-G^{\mathrm{bid}}(K),10^{-3}\}}.
\]
The resulting forward is used to construct the tightest call-equivalent quote,
forward-normalized strikes and prices, implied volatilities, and Greeks. An
expiry must contain at least ten strikes with a traded call or put and at least
four strikes with both sides traded. The normalized spread must exceed
$10^{-6}$ and retained implied volatilities must lie between $0.01$ and $5$.

\subsection{Static preprocessing and quote selection}
\label{app:pipeline-static}

Before applying the model-specific horizon, the pipeline retains quotes with
vega per unit total volatility of at least $10^{-3}$, log volume and log open
interest of at least one, and at least ten surviving options per expiry. The
nominal per-day quote cap is 100,000 and is therefore non-binding here.

For data-quality control and to construct a stable ATM curve, every daily book
is passed through the linear SANOS preprocessing of \cite{SANOS2026}. The
strike grid is extended, each expiry is fitted inside the quoted bid-ask
intervals, negligible-density redundant strikes are removed, and the joint
surface fit enforces both butterfly and calendar-spread constraints. The
smoothness parameter is $0.25$. Expiries and strikes for which this
arbitrage-free auxiliary fit is unavailable are not passed to the final
calibration.

This auxiliary SANOS surface determines eligibility, the ATM term structure,
and the floating model support. It is not the calibration target: the JAX fits
in Section~\ref{sec:estimation-fit} use the original retained call and put
bid-ask intervals. No linear-programming state or fitted midpoint is supplied
as a target.

The final selection keeps European quotes no farther than 25 business days
from the observation date, excludes the $0.2$-business-day slice, and retains
the standardized log-moneyness interval $[-2.5,1.5]$. This leaves 2,192,458
quotes, between 1,590 and 5,096 per date.

\subsection{Daily calibration tensor}
\label{app:pipeline-tensor}

The model expiry grid is
\[
  T_j=\frac{j}{252},\qquad j=1,\ldots,25.
\]
For each date, 100 interior-strike candidates are generated from the front
ATM scale and 100 from the 25-day ATM scale. They are merged with a minimum
log-spacing safeguard to form one 200-node support shared across all model
expiries on that date. The support floats from date to date. Two boundary
nodes, equal to one half of the lowest inner strike and one and a half times
the highest inner strike, complete the 202-node density grid.

Each retained market quote stores its call/put flag, normalized strike, bid,
ask, midpoint, business days to expiry, adjacent model-expiry indices, and
linear interpolation weight. The contemporaneous SPX level is retained as a
reference spot. The immutable calibration tensor pads each date to a multiple
of 128 quotes and carries a Boolean mask, so padding never contributes to the
objective.

At valuation time the DLV recursion produces the 25 discrete densities. The
fixed 5\% Black factor convolves each discrete density into a smooth call-price
curve, and the two adjacent expiry curves are linearly interpolated to the
market quote's exact expiry. Puts follow from put--call parity.

\subsection{Optimization and retained artifacts}
\label{app:pipeline-optimization}

The direct and affine parametrizations are those in
Section~\ref{sec:estimation-fit}. Direct states are initialized from the
date-specific ATM term structure and are constant across strike at
initialization. Affine means use the average of these initial states; codes and
loadings receive small centred Gaussian initial values. DLV squared is bounded
between $10^{-4}$ and $9$, with a differentiable barrier near the upper bound.

The loss in \eqref{eq:estimation-quote-objective} is differentiated over every
retained quote. Quote blocks of size 128 are used only to limit accelerator
memory: their gradients are summed before each Adam update, so this is an exact
full-sample gradient rather than stochastic minibatching. Each fit uses 300
initial epochs with a cosine learning-rate schedule from $2\times10^{-2}$ to
$10^{-3}$, followed by a 700-epoch continuation from $5\times10^{-3}$ to
$5\times10^{-5}$. Checkpoints store either all daily direct states or the
affine mean, loadings, and daily codes. The retained reports recompute every
quote from the final checkpoint and produce the statistics in
Table~\ref{tab:numerics-static-market-fit}.

\clearpage
\section{Reference-model implementation details}
\label{app:reference-models}

\subsection{Tangent L\'evy}
\label{app:tangent-levy}

Our Tangent L\'evy benchmark follows the code book construction of
\citet{TangentLevy}, but is an in-repository implementation rather than the
authors' software.  It is paper-aligned rather than a literal reproduction.
At each observed expiry we first fit a double-exponential jump density to
forward-normalized call-equivalent bid-ask intervals.  For each jump node, the
forward L\'evy density is then represented by 12 non-negative cubic B-spline
coefficients in maturity.  Analytic spline integrals map these coefficients to
the fitted average L\'evy measures; non-negative least squares with a curvature
penalty supplies a smooth positive reconstruction.  We evaluate it on a daily
maturity grid from zero to 70 business days and a 301-node jump grid on
$[-0.75,0.75]$.

The historical diffusion fit uses every other daily maturity and the seven
nonzero jump nodes with highest average training intensity.  Linear
prolongation maps this reduced grid to the full code book, and the reduced
factors are projected onto the discrete exponential-moment constraint before
PCA.  The leading 16 directions are estimated from 698 consecutive-session
changes ending on 31 December 2024, explain 95.23\% of reduced-grid variation,
and use 5\% covariance shrinkage.  Simulation integrates the full-grid
loadings into eight substep cells per business day, applies the exact cell
shift, computes the quadratic convolutional HJM drift on its complete support,
and drives spot with the compensated shortest-maturity jump law.  The maximum
fitted moment-constraint residual is $5.13\times10^{-16}$.

To retain non-negative finite-step densities, each path uses a predictable
scalar time change $\gamma$: factor loadings are multiplied by $\gamma$ and
the quadratic HJM drift by $\gamma^2$.  The scale is fixed before observing the
Gaussian increment by solving the exact quadratic positivity bound with
12 conditional-standard-deviation headroom.  This preserves the HJM identity
but can materially slow the fitted dynamics near the boundary.  In the
100,000-path audit the mean and minimum scales are 0.133 and 0.0264; no substep
has $\gamma<0.01$ and no path freezes.

The initial-fit results in Figures~\ref{fig:initial-fit-iv}
and~\ref{fig:initial-fit-spread} use the final daily-grid reconstruction
supplied to the simulator.  This distinction is important: independent
maturity fits are not a dynamically admissible initial code book.  The
reconstruction on 29 August 2025 did not pass its pre-specified repricing gate,
so the poor fit shown there is a reported numerical limitation, not suppressed
calibration error.  The
subsequent risk-neutral diagnostics therefore test the implemented projected
Q dynamics conditional on this retained initial state; they do not turn the
failed static reconstruction into a successful market calibration.

The restricted-universe audit uses 100,000 paths arranged as 50,000
antithetic pairs over 20 business days.  Across its 570 live
day--instrument coordinates, the maximum and 99th-percentile absolute
$z$-scores are 2.191 and 2.126, with no coordinate above 2.576.  The two-fold
out-of-fold conditional statistics are $T_{\mathrm{sqr}}=-0.768$ and
$T_{\mathrm{ent}}=-0.575$; both 95\% confidence intervals contain zero.

\subsection{Bates CRC}
\label{app:bates-crc}

Our CRC benchmark is an in-repository implementation of the generalized
Bates construction and Algorithm~1 of
\citet{DysLitGambaraTeichmann2021}, not the authors' software.  The state has
Heston stochastic variance and compensated compound-Poisson Gaussian log
jumps.  Ten increasing target maturities are grouped into five adjacent
pairs; each pair has its own jump mean and standard deviation, while the
Poisson intensity and Heston mean-reversion speed are held fixed across
recalibrations.  Carr--Madan inversion with damping parameter $1.5$ supplies
the option prices.  We retain a scalar direct solve as an inverse-map oracle
and subject the learned map to independent exact-pricer reset gates.  The path
simulation also records its fixed-option price discontinuities, so a failed
numerical inversion is not silently reported as an exact CRC reset.

All historical inputs end strictly before 29 August 2025.  From the 890
earlier sessions we estimate $\kappa=15.0$ and jump intensity
$\lambda=4.25197$ per year.  The target-date calibration then fits $V_0$,
$\theta$, $\xi$, $\rho$, and the five jump mean/standard-deviation pairs to
the same 2,662 quotes used in Figure~\ref{fig:initial-fit-iv}.  It converges to
$V_0=0.007474$, $\theta=0.0100$, $\xi=0.5000$, and $\rho=-0.9900$; the latter
three are on their calibration bounds, consistent with the limited static
fit shown in Figures~\ref{fig:initial-fit-iv}
and~\ref{fig:initial-fit-spread}.

The exogenous reset law is estimated from 764 daily proxy states from 7 July
2022 through 28 August 2025.  Each row uses a trailing 63-session window ending
on that date: $\theta$ is the median 20-business-day ATM implied variance,
$\xi$ is the RMS CIR-normalized variance innovation, and $\rho$ is the
spot/variance-innovation correlation.  We fit a Gaussian law to
$(\Delta\log\theta,\Delta\log\xi,\Delta\operatorname{atanh}\rho)$, cap each
relative update at 5\%, keep $\kappa$ fixed, and impose the Feller condition
after every update.  These are backward-only dynamic proxies, not daily
full-surface Bates calibrations.

We train the two inverse-recalibration maps on 600,000 synthetic
generalized-Bates surfaces generated on the target calibration's exact ten
maturities and 13 relative-strike nodes, using 240-point Gauss--Legendre
Fourier integration on $[0,600]$; 540,000 surfaces form the training split and
60,000 whole-shard observations are held out.  Of the 78 million generated
surface nodes, 193,840 (0.249\%) required clipping back inside the Black price
bounds; the generator rejects a surface if that correction exceeds
$5\times10^{-4}$ in normalized price.  NN1 maps the fourteen
model coordinates $(V,\theta,\xi,\rho,\mu_1,\ldots,\mu_5,
\sigma_1,\ldots,\sigma_5)$ to the 130 raw Black volatilities.  NN2 receives
the updated $(\theta,\xi,\rho)$ and the protected 130-node surface and returns
the ten jump coordinates through a sigmoid stretched to their training
bounds: $[-0.30,0.15]$ for each mean and $[0.005,0.30]$ for each standard
deviation.  Both maps are plain four-hidden-layer, width-1,024 ELU networks.  We
first train NN1 with raw implied-volatility MSE and then freeze it while
training NN2 through NN1, using batches of 1,000.  This preserves the
input--output construction of Algorithm~1, but does not reproduce the
authors' batch-normalized residual cell and is another reason we describe the
benchmark as paper-aligned rather than a literal software reproduction.

The restricted audit uses the same geometry as the other reference models:
100,000 paths arranged as 50,000 adjacent antithetic pairs, 20 business days,
eight substeps per day, and the fixed spot-plus-44-option universe.  Between
daily resets we use full-truncation Heston variance and compensated Gaussian
jumps from the shortest live jump bucket.  At each day boundary the old
code book is shifted by one day, its ten-by-13 continuation surface is repriced,
the backward-estimated parameter innovation is applied, and NN2 reconstructs
the new five jump means and standard deviations.  The implementation records
the resulting fixed-option price discontinuity explicitly before any
no-arbitrage statistic is accepted.

That gate fails for the retained NN2 checkpoint: none of
\CrcResetGateSamples\ independently drawn reset states meets the
$\CrcResetGateTolerance$ spot-normalized exact-pricing tolerance.  The median
maximum error across its calibration and inter-node held-out options is
$\CrcResetGateMedianError$ and the maximum is $\CrcResetGateMaximumError$.
We nevertheless retain the \CrcQPathCount-path restricted audit as a negative
control for the diagnostics.  It has maximum and 99th-percentile coordinate
$z$-scores \CrcQMaximumAbsZ\ and \CrcQPercentileNineNineAbsZ; \CrcQAboveTwoFiveSevenSix\ of
\CrcQCoordinates\ coordinates exceed 2.576 and \CrcQAboveFive\ exceed five.
The two-fold out-of-fold conditional statistics are
$T_{\mathrm{sqr}}=\CrcQSquaredT$ and $T_{\mathrm{ent}}=\CrcQEntropyT$.
These rejections detect the reset-repricing defect; they do not establish that
an exact consistent-recalibration model is arbitrageable.

\subsection{Matched initial-fit convention}

For the comparison we use the exact 2,662 quotes retained by the GROMM
25-expiry calibration on 29 August 2025.  GROMM is the unrestricted
$25\times200$ state from the
\texttt{direct-continuation-\allowbreak1000-v1} run after
1,000 end-to-end epochs with a 5\% annual Black backbone; this is the same
checkpoint as the direct fit in Table~\ref{tab:numerics-static-market-fit}.
Tangent L\'evy is repriced at precisely those strikes and expiries from its
daily-grid density, and CRC from its retained generalized-Bates target fit.
Put observations are converted to normalized
call-equivalent prices by parity before Black-volatility inversion.  If $b$,
$a$, and $h=(a-b)/2$ are the bid, ask, and half-spread, respectively, the
signed price surface reports
\[
  \widetilde e=
  \begin{cases}
    (C-a)/h,&C>a,\\
    0,&b\leq C\leq a,\\
    (C-b)/h,&C<b,
  \end{cases}
  \qquad e=|\widetilde e|.
\]
The volatility surface applies the same construction after transforming the
bid and ask to Black volatility.  Thus ``quotes with excess'' is the fraction
with $e>0$, and the conditional mean and median are computed only over those
failed quotes.  All three models use the same observations, spreads, axes, and
colour scales across all three panels.

\clearpage
\section{Option-surface Literature Classification}
\label{app:literature-classification}
\def\GROMMpaper{}% switch the shared literature tables to this paper's rows
% Literature tables for option-surface construction and dynamics.
% This file is included from dy21.tex after the introductory literature review.
Tables~\ref{tab:surface-interpolation-literature},
\ref{tab:nonparametric-surface-interpolation-literature},
\ref{tab:surface-representation-literature},
\ref{tab:inverse-calibration-literature}, and
\ref{tab:surface-dynamics-literature} separate four questions which are often
conflated: constructing an option or implied volatility surface, compressing or generating a
market state, solving an underlying-model inverse problem, and learning a
surface's evolution through time.  We use common classification
symbols throughout the tables:
\begin{itemize}[leftmargin=*,nosep]
  \item \textbf{S$\infty$}: a theorem- or architecture-level guarantee on the
        full continuous surface.
  \item \textbf{S$\Omega$}: the same guarantee only on a stated bounded fitting
        domain.
  \item \textbf{SG}: a strict guarantee only on the modelled finite grid.
  \item \textbf{A}: approximate, using penalties, reweighting, rejection, or
        empirical violation rates.
  \item \textbf{N}: no static-arbitrage guarantee.
  \item \textbf{U}: arbitrage freedom is claimed, but the available construction
        does not suffice to verify its continuum scope.
\end{itemize}
Please contact the authors for any suggestions, corrections, or additions.

% Entries introduced by the backward-reference audit use the same styling.
\newcommand{\auditnew}[1]{#1}

% Paper switch: the GROMM/SDLV paper (main2.tex) defines \GROMMpaper before
% \input-ing this file; the DYSANOS papers do not and get their original rows.
\ifdefined\GROMMpaper
\newcommand{\OwnRowsRepresentation}{%
\citet{DYSANOS2026}
  & Smooth SANOS plus affine latent state & $18\times200=3{,}600$ prices; 3,620 controls & $\infty$ & S$\infty$ & The decoded continuum remains in the SANOS admissible family. \\
This paper
  & Smooth SANOS and stochastic DLV/PCA factors & $25\times200=5{,}000$ options plus spot & $\infty$ & S$\infty$ & Full smooth continuum representation. \\}
\newcommand{\OwnRowsDynamics}{%
\citet{DYSANOS2026}
  & SANOS, affine latent state, and AR(1) & S$\infty$ & Yes & Limited SG: 48 options plus spot & Yes & The implemented $\Q$ construction is finite-dimensional. \\
This paper
  & Smooth SANOS and stochastic DLV/PCA factors & S$\infty$ & Yes & S$\infty$ in strike at the 25 model expiries; explicit continuous-time DLV drift & Yes & Direct smooth full-surface construction. \\}
\else
\newcommand{\OwnRowsRepresentation}{%
This paper
  & Smooth SANOS plus affine latent state & $18\times200=3{,}600$ prices; 3,620 controls & $\infty$ & S$\infty$ & The decoded continuum remains in the SANOS admissible family. \\}
\newcommand{\OwnRowsDynamics}{%
This paper
  & SANOS, affine latent state, and AR(1) & S$\infty$ & Yes & Limited SG: 48 options plus spot & Yes & The implemented $\Q$ construction is finite-dimensional. \\}
\fi
\newcolumntype{L}[1]{>{\raggedright\arraybackslash}p{#1}}

\begin{landscape}
\begin{table}[p]
\centering
\fontsize{7}{7.8}\selectfont
\setlength{\tabcolsep}{3pt}
\renewcommand{\arraystretch}{1.04}
\caption{Selected parametric implied-volatility and option-price interpolation
methods.  ``Parametric'' means that the surface is restricted to a prescribed
finite-dimensional functional family; it does not describe the calibration
algorithm. Constraints imposed during fitting are distinguished from post-fit repair.}
\label{tab:surface-interpolation-literature}
\begin{tabular}{@{}L{4.2cm}L{3.8cm}L{2.2cm}L{3.2cm}L{2.0cm}L{8.0cm}@{}}
\toprule
Method & Representation & No static arb. & Native inputs & Off-grid & Scope and qualification \\
\midrule
\auditnew{\citet{Gatheral2004SVI}}
  & \auditnew{SVI smile} & \auditnew{N by itself}
  & \auditnew{Arbitrary, one expiry} & \auditnew{$\infty$ per expiry}
  & \auditnew{Introduces raw SVI and its wing asymptotics, but does not supply the later cross-expiry SSVI construction or a complete admissible parameter domain.} \\
\citet{Hafner2004}
  & Polynomial IV surface & N & Variable daily DAX transactions & $\infty$ on bounded domain
  & Weighted least-squares fit followed by finite-grid screening: dates violating hedge, bull-spread, butterfly, or calendar tests on a 200-point grid are discarded.  The fitted family itself is unconstrained. \\
\citet{Bloch2013MixVar}
  & Mixture family & U & Arbitrary & $\infty$
  & Claimed arbitrage-free by construction; the continuum conditions were not verified. \\
\citet{GatheralJacquier2014}
  & SVI/SSVI & S$\infty$ & Arbitrary & $\infty$
  & Strict for admissible parameters; a central low-dimensional benchmark. \\
\citet{GuoJacquierMartiniNeufcourt2016}
  & Generalized SVI & S$\infty$ & Arbitrary & $\infty$
  & Extends the strict SVI/SSVI constructions. \\
\auditnew{\citet{CorbettaEtAl2019}}
  & \auditnew{SSVI slices} & \auditnew{S$\infty$}
  & \auditnew{Arbitrary maturities and strikes} & \auditnew{$\infty$}
  & \auditnew{Calibrates butterfly- and calendar-consistent slices, then proves that its parameter interpolation and extrapolation preserve those conditions over the continuous surface.} \\
\auditnew{\citet{HendriksMartini2019}}
  & \auditnew{eSSVI} & \auditnew{S$\infty$ (conditional)}
  & \auditnew{Arbitrary} & \auditnew{$\infty$}
  & \auditnew{Explicit butterfly and calendar conditions; strictness requires admissible parameter curves.} \\
\auditnew{\citet{MartiniMingone2022}}
  & \auditnew{SVI smile} & \auditnew{S$\infty$ per expiry}
  & \auditnew{Arbitrary, one expiry} & \auditnew{$\infty$ per expiry}
  & \auditnew{Characterizes the no-butterfly domain of raw SVI and calibrates directly on it; calendar consistency across independently fitted expiries is not automatic.} \\
\auditnew{\citet{Mingone2022}}
  & \auditnew{eSSVI surface} & \auditnew{S$\infty$}
  & \auditnew{Arbitrary} & \auditnew{$\infty$}
  & \auditnew{A global eSSVI parametrization whose admissible parameter domain enforces butterfly and calendar-spread conditions.} \\
\citet{Roos2025}
  & Discrete-time local-vol family & U & Arbitrary coterminal set & $\infty$
  & The abstract claims analytic strike interpolation and expiry extension; the continuum inequalities were not verified. \\
\bottomrule
\end{tabular}
\end{table}
\end{landscape}

\begin{landscape}
\fontsize{7}{7.7}\selectfont
\setlength{\tabcolsep}{2.5pt}
\renewcommand{\arraystretch}{1.04}
\begin{longtable}{@{}L{4.2cm}L{4.0cm}L{2.2cm}L{3.4cm}L{2.0cm}L{7.6cm}@{}}
\caption{Selected non-parametric implied-volatility and option-price
interpolation methods.  Soft-penalty methods are approximate even when reported
violations are negligible.}
\label{tab:nonparametric-surface-interpolation-literature}\\
\toprule
Method & Representation & No static arb. & Native inputs & Off-grid & Scope and qualification \\
\midrule
\endfirsthead
\multicolumn{6}{c}{\tablename\ \thetable\ -- continued}\\
\toprule
Method & Representation & Static arb. & Native inputs & Off-grid & Scope and qualification \\
\midrule
\endhead
\midrule
\multicolumn{6}{r}{Continued on next page}\\
\endfoot
\bottomrule
\endlastfoot
\auditnew{\citet{Kahale2004}}
  & \auditnew{Piecewise analytic call/IV interpolation}
  & \auditnew{S$\Omega$ (conditional)} & \auditnew{Arbitrage-free discrete quotes}
  & \auditnew{$\infty$ for $T\leq T_{\max}$}
  & \auditnew{Convex-envelope construction on the bounded maturity domain; no extension beyond the last input expiry.} \\
\citet{fengler2009}
  & Constrained spline & S$\Omega$ per expiry; SG calendar & Arbitrary & $\infty$ on fitted strike domain
  & Strike constraints hold over each fitted spline; calendar constraints only at knots and input maturities. \\
\citet{AndreasenHuge}
  & One-step implicit FD; basis for DLV & SG numerically; S$\infty$ for exact local-vol model & Arbitrary quotes & $\infty$
  & Sequential local-vol calibration; the finite-difference scheme preserves monotonicity and convexity on its grid. \\
\citet{Orosi2015}
  & Thin-plate regression splines/QP & U & Arbitrary & $\infty$
  & Quadratic-program constraints; their continuum scope was not verified. \\
\citet{AckererTagasovskaVatter2020}
  & Neural network & A & Irregular daily panel & $\infty$
  & Static conditions enter through soft penalties; violation rates depend on the penalty weight. \\
\auditnew{\citet{BenderThiel2020}}
  & \auditnew{Reference-model distortion of call prices}
  & \auditnew{S$\infty$ (conditional)} & \auditnew{Arbitrage-free discrete quotes}
  & \auditnew{$\infty$}
  & \auditnew{Smooth strike interpolation and arbitrage-free maturity interpolation.} \\
\citet{ZhangLiZhang2021}
  & PCA/VAE/samples plus neural surface & A & $14\times11=154$ & $\infty$
  & Calendar, butterfly, and wing conditions enter as penalties; occasional spikes remain. \\
\citet{ChataignerEtAl2022}
  & Constrained GP or penalized NN & GP: S$\Omega$; NN: A & Arbitrary & $\infty$
  & Nodal inequalities imply monotonicity and convexity throughout the GP's bounded piecewise-multilinear domain.  The NN and fitted SSVI alternatives use penalties. \\
\auditnew{\citet{Deschatres2024}}
  & \auditnew{Variance-space cubic/B-spline QP} & \auditnew{A}
  & \auditnew{Arbitrary; 14,500 quotes in the SPX example} & \auditnew{$\infty$}
  & \auditnew{Calendar constraints are linear, but butterfly constraints are linearized and the QP is iterated.  Under the present convention this is approximate rather than a single exact admissible fit.} \\
\citet{YangChenShuHospedales2025}
  & Hypernetwork plus compact neural surface & A & 9 observations & $\infty$
  & Reconstructs index-option IV surfaces from sparse option quotes.  Calendar and density conditions are finite-grid auxiliary losses; the paper explicitly states that they provide no hard guarantee. \\
\auditnew{\citet{GononJacquierWiedemann2025}}
  & \auditnew{Graph neural operator} & \auditnew{A}
  & \auditnew{Irregular intraday panels} & \auditnew{$\infty$ on its output domain}
  & \auditnew{Calendar and butterfly conditions are finite-difference penalty losses evaluated at sampled coordinates; the paper describes the output as virtually arbitrage-free, not structurally invariant.} \\
\citet{Schlenkrich2025}
  & Piecewise shifted-Black/Bachelier model & S$\infty$ per expiry; calendar N & Arbitrary & $\infty$ per expiry
  & Whole-strike construction for one terminal distribution; calendar consistency is not addressed. \\
\citet{QinYangCheFeng2026}
  & Piecewise-quadratic calls/RN marginals & S$\infty$ in strike at quoted expiries & Arbitrage-free discrete quotes & $\infty$ in strike only
  & Explicit sequential completion of risk-neutral marginals; no interpolation between expiries. \\
\citet{SANOS2026}
  & Smooth non-parametric DLV & S$\infty$ & Arbitrary; about 1,000 in example & $\infty$
  & Bid--ask-aware smooth surface with positivity as the essential parameter constraint. \\
\end{longtable}
\end{landscape}

\begin{landscape}
\begin{table}[p]
\centering
\fontsize{7}{7.8}\selectfont
\setlength{\tabcolsep}{2.5pt}
\renewcommand{\arraystretch}{1.06}
\caption{PCA, autoencoder, and other latent representations of option surfaces.
``Off-grid'' counts options not represented by native model coordinates;
external interpolation is not credited to the representation.
  An off-grid count of
$\infty$ means that the representation can be evaluated on a continuum, subject
to its stated domain and tail assumptions.
}
\label{tab:surface-representation-literature}
\begin{tabular}{@{}L{4.3cm}L{4.2cm}L{3.6cm}L{2.7cm}L{3.0cm}L{6.0cm}@{}}
\toprule
Method & Representation & Native option coordinates & Off-grid & No static arb. & Qualification \\
\midrule
\citet{SkiadopoulosHodgesClewlow2000}
  & Direct-IV PCA & Not reported; three maturity groups & 0 & N & PCA factors do not preserve static inequalities. \\
\citet{ContVol}
  & Functional/direct-IV PCA & Not reported & Any number within the fitted domain & N & Functional interpolation does not make the PCA cone invariant. \\
\auditnew{\citet{FenglerHardleVilla2003}}
  & \auditnew{Common PCA of direct IV changes} & \auditnew{$6\times6=36$: moneyness 0.85--1.10 at 1, 2, 3, 6, 9, and 12 months (DAX)}
  & \auditnew{0} & \auditnew{N}
  & \auditnew{Common factors summarize smile dynamics but do not define an invariant no-arbitrage state space.} \\
\citet{Hafner2004}
  & Four parametric IV factors: level, slope, curvature, and term slope & Variable daily DAX transactions; 200-point diagnostic grid & $\infty$ on its bounded domain & N & The daily regression
   family is not constrained.  Dates failing finite-grid hedge, bull-spread, butterfly, or calendar tests are discarded ex post. \\
\citet{PanigirtzoglouSkiadopoulos2004}
  & Separate implied-distribution PCA by expiry & $3\times21=63$: 21 deltas at 1, 3, and 6 months & 0 & PDF: S$\infty$ per expiry; CDF: N; joint surface: N & The PDF construction is strict per expiry; neither construction enforces calendar consistency. \\
\citet{DellaportasMijatovic2014}
  & Parametric SABR encoding & SABR triplet for one expiry & Any strike in fitted interval & U & SABR may violate arbitrage conditions in the extreme wings. \\
\citet{wiese2019simulate}
  & DLV, PCA, and GAN & $8\times4=32$ & 0 & SG & Finite DLV lattice. \\
\auditnew{\citet{AvellanedaEtAl2020}}
  & \auditnew{PCA of normalized IV-surface deformations} & \auditnew{$8\times7=56$ per name: 8 maturities and 7 deltas}
  & \auditnew{0} & \auditnew{N}
  & \auditnew{A data-driven factor representation; no static-arbitrage invariant decoder is imposed.} \\
\auditnew{\citet{BergeronEtAl2021}}
  & \auditnew{VAE representation of IV surfaces} & \auditnew{$8\times5=40$: 8 maturities and 5 deltas (FX)}
  & \auditnew{Any (delta, maturity) in the domain via the pointwise decoder} & \auditnew{N}
  & \auditnew{A hands-off latent representation and generator, without a hard static-arbitrage architecture.} \\
\citet{wiese2021multiassetspotoptionmarket}
  & DLV, constrained autoencoder, and flow & $9\times4=36$ per asset & 0 & SG & Decoder is constrained on the finite lattice. \\
\citet{ZhangLiZhang2021}
  & PCA/VAE/samples plus neural reconstruction & 154 & $\infty$ & A & Reconstruction uses static-arbitrage penalties. \\
\citet{AndresBoumezouedJourdain2023}
  & Four-parameter SSVI state & 384 (Euro Stoxx 50); 1,944 (SPX, before filtering) & $\infty$ & S$\infty$ & Admissible SSVI parameter conditions hold for each state. \\
\citet{Cohen02012023}
  & Affine factors of normalized call prices & 46 options on an irregular rolling $(\tau,m)$ lattice & 0 & SG & Linear no-static-arbitrage inequalities define a factor polytope. \\
\citet{ContVuletic2023}
  & Direct-IV PCA/factors plus path reweighting & Fixed IV grid & 0 & Raw PCA: N; output: A & Arbitrage scenarios are downweighted, not excluded structurally. \\
\citet{ChoudharyJaimungalBergeron2024}
  & Functional PCA and neural SDE & $17\times11=187$ per asset & Any number within the fitted domain & N & Static-arbitrage violations are measured only ex post. \\
\citet{VuleticCont2024}
  & Direct-IV GAN plus penalties/reweighting & $10\times8=80$ & 0 & A & Approximate output correction. \\
\citet{JinAgarwal2025}
  & Direct-IV conditional diffusion & $9\times9=81$ & 0 & A & Penalty-based. \\
\citet{GolubitskyDavisonCaraman2026}
  & VAE latent representation & $2\times11\times17=374$ & 0 & A & Static constraints are approximate. \\
\citet{BrooksEtAl2026}
  & Arbitrage-penalized VAE and latent flow & $32\times16=512$ & 0 & A; 90.8\% pass all tests & Empirical pass rate, not a guarantee. \\
\citet{HaoJi2026}
  & Conditional diffusion plus SAAM refinement & $15\times17=255$ & Arbitrary target coordinates & A & SAAM uses residual regularization and reduces measured violations on a finite diagnostic grid; it is not an admissible parameterization. \\
\citet{HanEtAl2026}
  & Joint return/direct-IV adaptive sequential diffusion & $11\times9=99$ IVs & 0 & A & Fine-tuning penalizes calendar-, call-, and butterfly-spread violations; reported violations are nearly zero, not structurally excluded. \\
\citet{YangWangZhangLi2026}
  & Joint stock-return/option-IV diffusion across equities & $11\times9=99$ option IVs per stock & 0 & A & Transfers dynamics to unseen stocks on the same fixed grid. \\
\OwnRowsRepresentation
\bottomrule
\end{tabular}
\end{table}
\end{landscape}

\begin{landscape}
\begin{table}[p]
\centering
\fontsize{7}{7.8}\selectfont
\setlength{\tabcolsep}{2pt}
\renewcommand{\arraystretch}{1.08}
\caption{Machine-learning methods for underlying-model inverse problems.
These methods infer or optimize parameters of a pre-specified pricing model;
they do not make the observed option market itself the dynamic state.  Static
arbitrage properties are inherited from the pricing model, subject to decoded
parameters remaining admissible.}
\label{tab:inverse-calibration-literature}
\begin{tabular}{@{}L{4.5cm}L{5.0cm}L{5.2cm}L{4.2cm}L{5.2cm}@{}}
\toprule
Method & Underlying pricing family & Role of ML & Online inverse step & Why separate from a market-state model \\
\midrule
\citet{Hernandez2016}
  & Hull--White/G2++ examples & Direct inverse map trained on model-generated examples & One network evaluation & Maps quotes to parameters of a prescribed model. \\
\auditnew{\citet{DimitroffRoederFries2018}}
  & \auditnew{Heston}
  & \auditnew{Convolutional network maps volatility grids to model parameters}
  & \auditnew{One network evaluation}
  & \auditnew{Amortized inversion into a prescribed pricing family.} \\
\citet{BayerStemper2018}
  & Rough Bergomi and related rough-volatility models & Learns the forward parameter-to-IV map & Classical Levenberg--Marquardt optimization through the surrogate & The NN accelerates a conventional model calibration. \\
\citet{HorvathMuguruzaTomas2019,BayerEtAl2019DeepCalibration}
  & Broad class including rough-volatility models & Learns a fast forward pricing/IV map & Classical numerical or Bayesian inversion of the surrogate & Explicitly retains the two-step pricing-model-plus-calibration architecture. \\
\auditnew{\citet{LiuEtAl2019Calibration}}
  & \auditnew{Heston and Bates}
  & \auditnew{Learns a forward parameter-to-observable map}
  & \auditnew{Differential evolution through the surrogate}
  & \auditnew{The learned map accelerates conventional parameter calibration.} \\
\citet{Itkin2019}
  & Black--Scholes examples & Compares direct and surrogate calibration & Direct map or numerical solver & Studies smooth Greeks, soft arbitrage penalties, and inverse-map stability. \\
\citet{RoederDimitroff2020}
  & Volatility-model examples used in deep-calibration benchmarks & Compares a direct IV-to-parameter map with the two-step forward surrogate & None for the direct version; solver for the indirect version & Both outputs are parameters of an exogenously specified pricing model. \\
\auditnew{\citet{CuchieroKhosrawiTeichmann2020}}
  & \auditnew{Local stochastic-volatility models}
  & \auditnew{Neural leverage function}
  & \auditnew{Monte Carlo SGD; adversarial optimization only for robust bid--ask calibration}
  & \auditnew{The generator remains an underlying-first LSV model.} \\
\citet{RosenbaumZhang2021}
  & Quadratic rough Heston & Neural pricing/calibration engine based on a multifactor approximation & Optimization against the learned engine & The calibrated state is a rough-volatility parameter vector. \\
\citet{vae_ning2023arbitrage}
  & Time-inhomogeneous CTMC; additive Gaussian-mixture and CGMY/KoBoL L\'evy models & VAE/PCA learns the cross-date distribution of already calibrated parameters & Regularized inverse calibration is done first, date by date; CTMC maturities are fitted sequentially & The VAE generates model parameters, and prices are recovered through the model's characteristic function. \\
\citet{SridiBilokon2023}
  & Heston & Neural pricing surrogate & Differential evolution and Nelder--Mead & Heston supplies the risk-neutral structure. \\
\citet{ZhangAmiciMorandotti2024}
  & Heston & Differential network learns prices and sensitivities & Gradient calibration through the learned pricer & Heston supplies the risk-neutral structure. \\
\bottomrule
\end{tabular}
\end{table}
\end{landscape}

\begin{landscape}
\fontsize{7}{7.8}\selectfont
\setlength{\tabcolsep}{2.2pt}
\renewcommand{\arraystretch}{1.06}
\begin{longtable}{@{}L{3.7cm}L{3.4cm}L{2.8cm}L{2.2cm}L{4.1cm}L{1.5cm}L{5.3cm}@{}}
\caption{Option-surface dynamics under the historical measure $\P$ and direct
market-model dynamics under $\Q$.  The $\Q$ column excludes underlying-first
pricing families such as Heston, local volatility, and Bates.  ``Multi-step
$\P$ path'' records whether the cited work actually reports a simulated or
forecast path spanning more than one transition, rather than whether its model
could in principle be iterated.  ``Implemented'' refers to the stated dynamic
construction, not merely static preprocessing.}
\label{tab:surface-dynamics-literature}\\
\toprule
Method & Dynamic state/model & Static no-arb. under $\P$ & Multi-step $\P$ path & Direct market-model dynamics under $\Q$ & Implemented & Qualification \\
\midrule
\endfirsthead
\multicolumn{7}{c}{\tablename\ \thetable\ -- continued}\\
\toprule
Method & Dynamic state/model & Static arb. under $\P$ & Multi-step $\P$ path & Direct market-model dynamics under $\Q$ & Implemented & Qualification \\
\midrule
\endhead
\midrule
\multicolumn{7}{r}{Continued on next page}\\
\endfoot
\bottomrule
\endlastfoot
\auditnew{\citet{Schonbucher1999}}
  & \auditnew{One fixed-strike term structure of implied volatilities}
  & \auditnew{---} & \auditnew{---} & \auditnew{Exact drift restrictions across maturities for one strike}
  & \auditnew{MC implementation strategy only}
  & \auditnew{The setup initially permits a generic finite option set, but the developed discrete and continuous constructions vary maturity at one strike.  Independent dynamics across strikes at the same maturity are explicitly left for future research; this is not a full-surface model.} \\
\citet{SkiadopoulosHodgesClewlow2000}
  & Direct-IV PCA factors & N & No & --- & Yes & No static invariant state space. \\
\citet{ContVol}
  & Functional/direct-IV PCA factors & N & No & --- ($\P$ only) & Empirical PCA only & No constructed admissible $\Q$ dynamics. \\
\auditnew{\citet{ContDaFonseca2002Dynamics}}
  & \auditnew{Functional PCA factors of direct IV surfaces} & \auditnew{N}
  & \auditnew{No} & \auditnew{---} & \auditnew{Yes ($\P$)}
  & \auditnew{Empirical factor dynamics; distinct from the three-author stochastic-surface paper and not a constructed risk-neutral model.} \\
\auditnew{\citet{FenglerHardleVilla2003}}
  & \auditnew{Common PCA factors of direct IV changes} & \auditnew{N}
  & \auditnew{No} & \auditnew{---} & \auditnew{Yes ($\P$)}
  & \auditnew{Empirical CPCA of observed daily surfaces.  The factors are proposed as a starting point for VaR Monte Carlo, but the paper does not report synthetic multi-step IV paths.  There is no invariant static-arbitrage state space.} \\
\citet{Hafner2004}
  & Four IV factors with correlated OU dynamics & N; screened sample only & No (1-, 5-, and 10-day forecasts) & A/local & Yes & The fitted OU model violates the global drift condition; the $\Q$ construction is local and approximate. \\
\citet{PanigirtzoglouSkiadopoulos2004}
  & Separate distribution factors by constant maturity & PDF: S$\infty$ per expiry; CDF: N & Yes (14 days) & N for a joint surface & Yes & Neither construction enforces calendar consistency. \\
\auditnew{\citet{FenglerHardleMammen2007}}
  & \auditnew{Semiparametric IV-factor surface} & \auditnew{N}
  & \auditnew{No} & \auditnew{---} & \auditnew{Yes ($\P$)}
  & \auditnew{Estimates and forecasts historical surface factors without a hard static-arbitrage invariant.} \\
\citet{SchweizerWissel2008}
  & Price level and local implied-volatility field & S$\infty$, theoretical & --- & S$\infty$: solvability theorem and explicit admissible coefficient examples & No & One fixed expiry and all strikes; not a finite option lattice or full maturity surface. \\
\auditnew{\citet{SchweizerWisselTerm2008}}
  & \auditnew{Term structure for one fixed payoff} & \auditnew{S$\infty$, theoretical}
  & \auditnew{No} & \auditnew{Strict theoretical drift/existence results}
  & \auditnew{No}
  & \auditnew{Existence for fixed-strike calls and a power payoff; not a full-surface model.} \\
\auditnew{\citet{CarmonaNadtochiy2009}}
  & \auditnew{Local-volatility surface as option-market codebook} & \auditnew{---}
  & \auditnew{---} & \auditnew{Drift and spot-consistency characterization}
  & \auditnew{Static empirical fit only}
  & \auditnew{Direct HJM-style codebook dynamics; general model existence is not proved.} \\
\auditnew{\citet{TangentLevy}}
  & \auditnew{Tangent L\'evy option-market codebook} & \auditnew{---}
  & \auditnew{---} & \auditnew{S$\infty$ under the consistency conditions}
  & \auditnew{Theory}
  & \auditnew{Constructs direct risk-neutral dynamics for all strikes and maturities from admissible tangent L\'evy characteristics.} \\
\citet{DellaportasMijatovic2014}
  & Single-expiry SABR parameter process & U & No (1--3-day point forecasts) & --- & Yes ($\P$) & No joint surface; extreme-wing arbitrage is not excluded. \\
\auditnew{\citet{KallsenKruhner2015}}
  & \auditnew{Time-inhomogeneous L\'evy option codebook} & \auditnew{---}
  & \auditnew{---} & \auditnew{S$\infty$: necessary/sufficient conditions and existence}
  & \auditnew{No numerical implementation}
  & \auditnew{Constructive existence and uniqueness from stated building blocks, but a theoretical HJM framework rather than a calibrated simulator.} \\
\auditnew{\citet{CarmonaMaNadtochiy2017}}
  & \auditnew{Finite-dimensional tangent L\'evy specifications} & \auditnew{---}
  & \auditnew{No} & \auditnew{S$\infty$ under model admissibility}
  & \auditnew{Yes}
  & \auditnew{Risk-neutral HJM drift with PCA-estimated volatility; eight-day $\Q$ paths.} \\
\pagebreak[4]
\citet{wiese2019simulate}
  & DLV PCA/GAN state & SG & Yes & --- & Yes & Finite-grid historical dynamics. \\
\auditnew{\citet{CaoChenHull2020}}
  & \auditnew{Neural regression of one-day IV changes} & \auditnew{N}
  & \auditnew{No} & \auditnew{---} & \auditnew{Yes ($\P$)}
  & \auditnew{Expected changes depend on return, delta/moneyness, maturity, and VIX; no arbitrage constraint.} \\
\citet{wiese2021multiassetspotoptionmarket}
  & Constrained DLV latent flow & SG & Yes & --- & Yes & Finite-grid historical dynamics. \\
\citet{ZhangLiZhang2021}
  & LSTM latent evolution and neural reconstruction & A & No (next-day only) & --- & Yes & Static-arbitrage penalties only. \\
\auditnew{\citet{removedrift}}
  & \auditnew{General $\P$-simulator plus an OCE/entropy path-density model}
  & \auditnew{Inherited from the input simulator}
  & \auditnew{Input-dependent}
  & \auditnew{Numerical near-martingale measure for a finite trading universe}
  & \auditnew{Yes}
  & \auditnew{Reweights an input simulator; with frictions, drift is targeted within transaction costs for selected instruments.} \\
\citet{MedvedevWang2022}
  & Direct-IV LSTM/ConvLSTM & N & No (multihorizon forecasts) & --- & Yes ($\P$) & No static-arbitrage constraint. \\
\citet{AndresBoumezouedJourdain2023}
  & Path-dependent SSVI factors & S$\infty$ for every simulated snapshot & Yes & No market-measure dynamics & Yes ($\P$) & Static arbitrage freedom does not imply dynamic no-arbitrage. \\
\citet{Cohen02012023}
  & Rolling normalized-call lattice, affine factors, and neural SDE & SG under $\P$: exact for model SDE & Yes & --- (not implemented) & Yes ($\P$ only) & The drift points inward and the normal diffusion vanishes at every face, so the continuous-time factor process remains in the finite-lattice no-static-arbitrage polytope $\P$-a.s.  The coordinates float with fixed $(\tau,m)$ rather than tracking fixed contracts.  The HJM expression is only a factor-selection criterion and residual diagnostic, not an implemented $\Q$ model. \\
\citet{ContVuletic2023}
  & Direct-IV factors plus path reweighting & A & Yes & --- & Yes & Reweighting reduces violations. \\
\citet{ChoudharyJaimungalBergeron2024}
  & Functional PCA and neural SDE & N & Yes & --- & Yes & Violations are measured only ex post. \\
\citet{VuleticCont2024}
  & Direct-IV GAN dynamics & A & No (next-day only) & --- & Yes & Penalties and reweighting. \\
\citet{JinAgarwal2025}
  & Conditional diffusion on IV grids & A & No (next-day only) & --- & Yes & Penalty-based. \\
\citet{BrooksEtAl2026}
  & Arbitrage-penalized VAE and latent flow matching & A & No & --- & Yes & Only 90.8\% of generated surfaces pass all reported tests. \\
\citet{HaoJi2026}
  & Conditional diffusion plus SAAM refinement & A & Yes (five one-minute steps) & --- & Yes & The daily experiment is next-day only; finite-grid violations remain. \\
\citet{HanEtAl2026}
  & Joint daily return and direct-IV grid via adaptive sequential diffusion & A & Yes (reported three-day path) & --- & Yes ($\P$) & Each step conditions on the updated generated history; hedging uses one-day conditional scenarios. \\
\citet{YangWangZhangLi2026}
  & Joint stock-return/option-IV diffusion across equities & A & No (next-day only) & --- & Yes & Jointly generates the next-day stock return and $11\times9$ option-IV increment; static option constraints are approximate. \\
\citet{FrancoisEtAl2026}
  & Five-factor IV surface jointly with SPX returns & A & No (five-day terminal scenarios) & --- & Yes ($\P$) & Rare empirical violations, not an invariant state space. \\
\citet{BucheggerGonon2026}
  & Latent diffusion for returns and IV surfaces & A & Yes (30-day trajectories) & --- & Yes ($\P$) & Penalty-based arbitrage awareness; violations remain. \\
\OwnRowsDynamics
\end{longtable}
\end{landscape}

\bibliographystyle{plainnat}
\bibliography{refs}

\begin{thebibliography}{88}
\providecommand{\natexlab}[1]{#1}
\providecommand{\url}[1]{\texttt{#1}}
\expandafter\ifx\csname urlstyle\endcsname\relax
  \providecommand{\doi}[1]{doi: #1}\else
  \providecommand{\doi}{doi: \begingroup \urlstyle{rm}\Url}\fi

\bibitem[Ackerer et~al.(2020)Ackerer, Tagasovska, and
  Vatter]{AckererTagasovskaVatter2020}
Damien Ackerer, Natasa Tagasovska, and Thibault Vatter.
\newblock Deep smoothing of the implied volatility surface.
\newblock In \emph{Advances in Neural Information Processing Systems},
  volume~33, pages 11552--11563, 2020.

\bibitem[Andreasen and Huge(2011)]{AndreasenHuge}
Jesper Andreasen and Brian Huge.
\newblock Volatility interpolation.
\newblock \emph{Risk}, 24\penalty0 (3):\penalty0 76--79, 2011.

\bibitem[Andr{\`e}s et~al.(2026)Andr{\`e}s, Boumezoued, and
  Jourdain]{AndresBoumezouedJourdain2023}
Herv{\'e} Andr{\`e}s, Alexandre Boumezoued, and Benjamin Jourdain.
\newblock The implied volatility surface (also) is path-dependent.
\newblock \emph{Quant. Finance}, 26\penalty0 (7):\penalty0 1017--1047, 2026.
\newblock \doi{10.1080/14697688.2026.2637739}.

\bibitem[Avellaneda et~al.(2020)Avellaneda, Healy, Papanicolaou, and
  Papanicolaou]{AvellanedaEtAl2020}
Marco Avellaneda, Brian~F. Healy, Andrew Papanicolaou, and George Papanicolaou.
\newblock {PCA} for implied volatility surfaces.
\newblock \emph{J. Financial Data Sci.}, 2\penalty0 (2):\penalty0 85--109,
  2020.
\newblock \doi{10.3905/jfds.2020.1.032}.

\bibitem[Bayer and Stemper(2018)]{BayerStemper2018}
Christian Bayer and Benjamin Stemper.
\newblock Deep calibration of rough stochastic volatility models, 2018.
\newblock URL \url{https://arxiv.org/abs/1810.03399}.

\bibitem[Bayer et~al.(2019)Bayer, Horvath, Muguruza, Stemper, and
  Tomas]{BayerEtAl2019DeepCalibration}
Christian Bayer, Blanka Horvath, Aitor Muguruza, Benjamin Stemper, and Mehdi
  Tomas.
\newblock On deep calibration of (rough) stochastic volatility models, 2019.
\newblock URL \url{https://arxiv.org/abs/1908.08806}.

\bibitem[Bender and Thiel(2020)]{BenderThiel2020}
Christian Bender and Matthias Thiel.
\newblock Arbitrage-free interpolation of call option prices.
\newblock \emph{Stat. Risk Model.}, 37\penalty0 (1--2):\penalty0 55--78, 2020.
\newblock \doi{10.1515/strm-2018-0026}.

\bibitem[Bergeron et~al.(2021)Bergeron, Fung, Hull, and
  Poulos]{BergeronEtAl2021}
Maxime Bergeron, Nicholas Fung, John Hull, and Zissis Poulos.
\newblock Variational autoencoders: A hands-off approach to volatility, 2021.

\bibitem[Bergomi(2005)]{DysLitBergomi2005}
Lorenzo Bergomi.
\newblock Smile dynamics {II}.
\newblock \emph{Risk}, 18\penalty0 (10):\penalty0 67--73, 2005.
\newblock URL
  \url{https://www.risk.net/derivatives/equity-derivatives/1500225/smile-dynamics-ii}.

\bibitem[Bergomi(2008)]{bergomi2008smile}
Lorenzo Bergomi.
\newblock Smile dynamics {III}.
\newblock \emph{Risk}, 21\penalty0 (10):\penalty0 90--96, 2008.

\bibitem[Bergomi and Guyon(2012)]{bergomi2012stochastic}
Lorenzo Bergomi and Julien Guyon.
\newblock Stochastic volatility's orderly smiles.
\newblock \emph{Risk}, 25\penalty0 (5):\penalty0 60--66, 2012.

\bibitem[Bloch(2013)]{Bloch2013MixVar}
Daniel~Alexandre Bloch.
\newblock Arbitrage-free mix var volatility surfaces and applications.
\newblock SSRN working paper 2197185, 2013.
\newblock URL \url{https://papers.ssrn.com/abstract=2197185}.

\bibitem[Brace et~al.(1997)Brace, Gatarek, and Musiela]{BGM}
Alan Brace, Dariusz Gatarek, and Marek Musiela.
\newblock The market model of interest rate dynamics.
\newblock \emph{Mathematical Finance}, 7\penalty0 (2):\penalty0 127--155, 1997.
\newblock \doi{10.1111/1467-9965.00028}.

\bibitem[Brooks et~al.(2026)Brooks, Bajalica, Liu, and
  Ben~Tahar]{BrooksEtAl2026}
Oscar Brooks, Dusica Bajalica, Yating Liu, and Imen Ben~Tahar.
\newblock Latent flow matching for arbitrage-aware implied volatility surface
  generation, 2026.

\bibitem[Buchegger and Gonon(2026)]{BucheggerGonon2026}
Dominik~Manuel Buchegger and Lukas Gonon.
\newblock Arbitrage-aware multi-step forecasting of implied volatility
  surfaces: Modelling surface trajectories using latent diffusion, 2026.
\newblock URL \url{https://arxiv.org/abs/2608.22478}.

\bibitem[Buehler(2006)]{buehler2006consistent}
Hans Buehler.
\newblock Consistent variance curve models.
\newblock \emph{Finance and Stochastics}, 10\penalty0 (2):\penalty0 178--203,
  2006.
\newblock \doi{10.1007/s00780-006-0008-2}.

\bibitem[Buehler and Ryskin(2015)]{DLV}
Hans Buehler and Evgeny Ryskin.
\newblock Discrete local volatility for large time steps (extended version).
\newblock SSRN working paper, 2015.
\newblock URL \url{https://papers.ssrn.com/abstract=2642630}.

\bibitem[Buehler et~al.(2022)Buehler, Murray, Pakkanen, and Wood]{removedrift}
Hans Buehler, Phillip Murray, Mikko~S. Pakkanen, and Ben Wood.
\newblock Deep hedging: Learning to remove the drift.
\newblock \emph{Risk}, February 2022.
\newblock URL
  \url{https://www.risk.net/cutting-edge/banking/7932226/deep-hedging-learning-to-remove-the-drift}.

\bibitem[Buehler et~al.(2026{\natexlab{a}})Buehler, Horvath, and
  Kratsios]{DYSANOS2026}
Hans Buehler, Blanka Horvath, and Anastasis Kratsios.
\newblock {DYSANOS}: Generative dynamic smooth arbitrage-free non-parametric
  option surfaces, 2026{\natexlab{a}}.
\newblock URL \url{https://arxiv.org/abs/2608.12587}.

\bibitem[Buehler et~al.(2026{\natexlab{b}})Buehler, Horvath, Kratsios, Limmer,
  and Saqur]{SANOS2026}
Hans Buehler, Blanka Horvath, Anastasis Kratsios, Yannick Limmer, and Raeid
  Saqur.
\newblock {SANOS}: Smooth strictly arbitrage-free non-parametric option
  surfaces, January 2026{\natexlab{b}}.
\newblock URL \url{https://arxiv.org/abs/2601.11209}.
\newblock submitted.

\bibitem[Cao et~al.(2020)Cao, Chen, and Hull]{CaoChenHull2020}
Jay Cao, Jacky Chen, and John Hull.
\newblock A neural network approach to understanding implied volatility
  movements.
\newblock \emph{Quant. Finance}, 20\penalty0 (9):\penalty0 1405--1413, 2020.
\newblock \doi{10.1080/14697688.2020.1750679}.

\bibitem[Carmona and Nadtochiy(2009)]{CarmonaNadtochiy2009}
Ren{\'e} Carmona and Sergey Nadtochiy.
\newblock Local volatility dynamic models.
\newblock \emph{Finance Stoch.}, 13\penalty0 (1):\penalty0 1--48, 2009.
\newblock \doi{10.1007/s00780-008-0078-4}.

\bibitem[Carmona and Nadtochiy(2012)]{TangentLevy}
Ren{\'e} Carmona and Sergey Nadtochiy.
\newblock Tangent l{\'e}vy market models.
\newblock \emph{Finance Stoch.}, 16:\penalty0 63--104, 2012.
\newblock \doi{10.1007/s00780-011-0158-8}.

\bibitem[Carmona et~al.(2017)Carmona, Ma, and
  Nadtochiy]{CarmonaMaNadtochiy2017}
Ren{\'e} Carmona, Yi~Ma, and Sergey Nadtochiy.
\newblock Simulation of implied volatility surfaces via tangent {L\'evy}
  models.
\newblock \emph{SIAM J. Financial Math.}, 8\penalty0 (1):\penalty0 171--213,
  2017.
\newblock \doi{10.1137/15M1015510}.

\bibitem[Chataigner et~al.(2021)Chataigner, Cousin, Cr{\'e}pey, Dixon, and
  Gueye]{ChataignerEtAl2022}
Marc Chataigner, Areski Cousin, St{\'e}phane Cr{\'e}pey, Matthew Dixon, and
  Djibril Gueye.
\newblock Beyond surrogate modeling: Learning the local volatility via shape
  constraints.
\newblock \emph{SIAM J. Financial Math.}, 12\penalty0 (3):\penalty0 SC58--SC69,
  2021.
\newblock \doi{10.1137/20M1381538}.

\bibitem[Choudhary et~al.(2024)Choudhary, Jaimungal, and
  Bergeron]{ChoudharyJaimungalBergeron2024}
Vedant Choudhary, Sebastian Jaimungal, and Maxime Bergeron.
\newblock {FuNVol}: A multi-asset implied volatility market simulator using
  functional principal components and neural {SDEs}.
\newblock \emph{Quant. Finance}, 24\penalty0 (8):\penalty0 1077--1103, 2024.
\newblock \doi{10.1080/14697688.2024.2396977}.

\bibitem[Ciliberti et~al.(2008)Ciliberti, Bouchaud, and
  Potters]{DysLitCilibertiBouchaudPotters2008}
Stefano Ciliberti, Jean-Philippe Bouchaud, and Marc Potters.
\newblock Smile dynamics---a theory of the implied leverage effect.
\newblock \url{https://arxiv.org/abs/0809.3375}, 2008.
\newblock arXiv:0809.3375.

\bibitem[Cohen et~al.(2023)Cohen, Reisinger, and Wang]{Cohen02012023}
Samuel~N. Cohen, Christoph Reisinger, and Sheng Wang.
\newblock Arbitrage-free neural-{SDE} market models.
\newblock \emph{Appl. Math. Finance}, 30\penalty0 (1):\penalty0 1--46, 2023.
\newblock \doi{10.1080/1350486X.2023.2257217}.

\bibitem[Cont and da~Fonseca(2002)]{ContDaFonseca2002Dynamics}
Rama Cont and Jos{\'e} da~Fonseca.
\newblock Dynamics of implied volatility surfaces.
\newblock \emph{Quant. Finance}, 2\penalty0 (1):\penalty0 45--60, 2002.
\newblock \doi{10.1088/1469-7688/2/1/304}.

\bibitem[Cont and Vuleti{\'c}(2023)]{ContVuletic2023}
Rama Cont and Milena Vuleti{\'c}.
\newblock Simulation of arbitrage-free implied volatility surfaces.
\newblock \emph{Appl. Math. Finance}, 30\penalty0 (2):\penalty0 94--121, 2023.
\newblock \doi{10.1080/1350486X.2023.2277960}.

\bibitem[Cont et~al.(2002)Cont, da~Fonseca, and Durrleman]{ContVol}
Rama Cont, Jos{\'e} da~Fonseca, and Valdo Durrleman.
\newblock Stochastic models of implied volatility surfaces.
\newblock \emph{Econ. Notes}, 31\penalty0 (2):\penalty0 361--377, 2002.
\newblock \doi{10.1111/1468-0300.00090}.

\bibitem[Corbetta et~al.(2019)Corbetta, Cohort, Laachir, and
  Martini]{CorbettaEtAl2019}
Jacopo Corbetta, Pierre Cohort, Ismail Laachir, and Claude Martini.
\newblock Robust calibration and arbitrage-free interpolation of {SSVI} slices.
\newblock \emph{Decis. Econ. Finance}, 42\penalty0 (2):\penalty0 665--677,
  2019.

\bibitem[Cuchiero et~al.(2020)Cuchiero, Khosrawi, and
  Teichmann]{CuchieroKhosrawiTeichmann2020}
Christa Cuchiero, Wahid Khosrawi, and Josef Teichmann.
\newblock A generative adversarial network approach to calibration of local
  stochastic volatility models.
\newblock \emph{Risks}, 8\penalty0 (4):\penalty0 101, 2020.
\newblock \doi{10.3390/risks8040101}.

\bibitem[Dellaportas and Mijatovi{\'c}(2014)]{DellaportasMijatovic2014}
Petros Dellaportas and Aleksandar Mijatovi{\'c}.
\newblock Arbitrage-free prediction of the implied volatility smile, 2014.

\bibitem[Desch{\^a}tres(2024)]{Deschatres2024}
Fabrice Desch{\^a}tres.
\newblock Convex volatility interpolation.
\newblock SSRN working paper 4831218, 2024.
\newblock URL \url{https://papers.ssrn.com/abstract=4831218}.
\newblock Revised 2026.

\bibitem[Dimitroff et~al.(2018)Dimitroff, R{\"o}der, and
  Fries]{DimitroffRoederFries2018}
Georgi Dimitroff, Dirk R{\"o}der, and Christian~P. Fries.
\newblock Volatility model calibration with convolutional neural networks.
\newblock SSRN working paper 3252432, 2018.
\newblock URL \url{https://papers.ssrn.com/abstract=3252432}.

\bibitem[El~Amrani et~al.(2021)El~Amrani, Jacquier, and
  Martini]{DysLitElAmrani2021}
Mehdi El~Amrani, Antoine Jacquier, and Claude Martini.
\newblock Short communication: Dynamics of symmetric {SSVI} smiles and implied
  volatility bubbles.
\newblock \emph{SIAM Journal on Financial Mathematics}, 12\penalty0
  (2):\penalty0 SC1--SC15, 2021.
\newblock \doi{10.1137/20M136089X}.

\bibitem[Feller(1968)]{feller-vol-1}
William Feller.
\newblock \emph{An introduction to probability theory and its applications.
  {V}ol. {I}}.
\newblock Third edition. John Wiley \& Sons Inc., New York, 1968.

\bibitem[Fengler(2009)]{fengler2009}
Matthias~R. Fengler.
\newblock Arbitrage-free smoothing of the implied volatility surface.
\newblock \emph{Quant. Finance}, 9\penalty0 (4):\penalty0 417--428, 2009.
\newblock \doi{10.1080/14697680802595585}.

\bibitem[Fengler et~al.(2003)Fengler, H{\"a}rdle, and
  Villa]{FenglerHardleVilla2003}
Matthias~R. Fengler, Wolfgang~K. H{\"a}rdle, and Christophe Villa.
\newblock The dynamics of implied volatilities: A common principal components
  approach.
\newblock \emph{Rev. Deriv. Res.}, 6\penalty0 (3):\penalty0 179--202, 2003.
\newblock \doi{10.1023/B:REDR.0000004823.77464.2D}.

\bibitem[Fengler et~al.(2007)Fengler, H{\"a}rdle, and
  Mammen]{FenglerHardleMammen2007}
Matthias~R. Fengler, Wolfgang~K. H{\"a}rdle, and Enno Mammen.
\newblock A semiparametric factor model for implied volatility surface
  dynamics.
\newblock \emph{J. Financial Econometrics}, 5\penalty0 (2):\penalty0 189--218,
  2007.
\newblock \doi{10.1093/jjfinec/nbm005}.

\bibitem[Fran{\c{c}}ois et~al.(2026)Fran{\c{c}}ois, Galarneau-Vincent,
  Gauthier, and Godin]{FrancoisEtAl2026}
Pascal Fran{\c{c}}ois, R{\'e}mi Galarneau-Vincent, Genevi{\`e}ve Gauthier, and
  Fr{\'e}d{\'e}ric Godin.
\newblock Joint dynamics for the underlying asset and its implied volatility
  surface: A new methodology for option risk management.
\newblock \emph{J. Futures Mark.}, 46\penalty0 (3):\penalty0 545--561, 2026.
\newblock \doi{10.1002/fut.70068}.

\bibitem[Fukasawa(2022)]{DysLitFukasawa2022}
Masaaki Fukasawa.
\newblock On asymptotically arbitrage-free approximations of the implied
  volatility.
\newblock \emph{Frontiers of Mathematical Finance}, 1\penalty0 (4):\penalty0
  525--537, 2022.
\newblock \doi{10.3934/fmf.2022006}.

\bibitem[Fukasawa and Gatheral(2022)]{DysLitFukasawaGatheral2022}
Masaaki Fukasawa and Jim Gatheral.
\newblock A rough {SABR} formula.
\newblock \emph{Frontiers of Mathematical Finance}, 1\penalty0 (1):\penalty0
  81--97, 2022.
\newblock \doi{10.3934/fmf.2021003}.

\bibitem[Gambara and Teichmann(2021)]{DysLitGambaraTeichmann2021}
Matteo Gambara and Josef Teichmann.
\newblock Consistent recalibration models and deep calibration.
\newblock \url{https://arxiv.org/abs/2006.09455}, 2021.
\newblock arXiv:2006.09455, version 3; first posted 2020.

\bibitem[Gatheral(2004)]{Gatheral2004SVI}
Jim Gatheral.
\newblock A parsimonious arbitrage-free implied volatility parameterization
  with application to the valuation of volatility derivatives.
\newblock Presentation at Global Derivatives \& Risk Management, Madrid, 2004.

\bibitem[Gatheral and Jacquier(2014)]{GatheralJacquier2014}
Jim Gatheral and Antoine Jacquier.
\newblock Arbitrage-free {SVI} volatility surfaces.
\newblock \emph{Quant. Finance}, 14\penalty0 (1):\penalty0 59--71, 2014.
\newblock \doi{10.1080/14697688.2013.819986}.

\bibitem[Golubitsky et~al.(2026)Golubitsky, Davison, and
  Caraman]{GolubitskyDavisonCaraman2026}
Letitia Golubitsky, Matt Davison, and Alex Caraman.
\newblock Unsupervised learning of equity market dynamics: A variational
  autoencoder approach.
\newblock SSRN working paper 6726284, 2026.
\newblock URL \url{https://papers.ssrn.com/abstract=6726284}.

\bibitem[Guo et~al.(2016)Guo, Jacquier, Martini, and
  Neufcourt]{GuoJacquierMartiniNeufcourt2016}
Gaoyue Guo, Antoine Jacquier, Claude Martini, and Leo Neufcourt.
\newblock Generalized arbitrage-free {SVI} volatility surfaces.
\newblock \emph{SIAM J. Financial Math.}, 7\penalty0 (1):\penalty0 619--641,
  2016.
\newblock \doi{10.1137/120900320}.

\bibitem[Hafner(2004)]{Hafner2004}
Reinhold Hafner.
\newblock \emph{Stochastic Implied Volatility: A Factor-Based Model}, volume
  545 of \emph{Lecture Notes in Economics and Mathematical Systems}.
\newblock Springer Berlin Heidelberg, 2004.
\newblock ISBN 978-3-540-22183-8.
\newblock \doi{10.1007/978-3-642-17117-8}.
\newblock URL \url{https://books.google.com/books?id=67OE8cKXWBQC}.

\bibitem[Han et~al.(2026)Han, Zhang, Torres, Acero, and Xu]{HanEtAl2026}
Yinbin Han, Jack~Yuxiang Zhang, Manuel Torres, Fernando Acero, and Renyuan Xu.
\newblock Diffusion models for dynamic volatility surface generation and
  data-driven hedging, 2026.
\newblock URL \url{https://arxiv.org/abs/2609.13402}.

\bibitem[Hao and Ji(2026)]{HaoJi2026}
Lifeng Hao and Shaolin Ji.
\newblock Decoupled probabilistic forecasting and arbitrage-aware refinement of
  implied volatility surfaces, 2026.
\newblock URL \url{https://arxiv.org/abs/2607.29220}.

\bibitem[Hendriks and Martini(2019)]{HendriksMartini2019}
Sebas Hendriks and Claude Martini.
\newblock The extended {SSVI} volatility surface.
\newblock \emph{J. Comput. Finance}, 22\penalty0 (5):\penalty0 25--39, 2019.
\newblock \doi{10.21314/JCF.2019.365}.

\bibitem[Hernandez(2016)]{Hernandez2016}
Andres Hernandez.
\newblock Model calibration with neural networks.
\newblock SSRN working paper 2812140, 2016.
\newblock URL \url{https://papers.ssrn.com/abstract=2812140}.

\bibitem[Horvath et~al.(2019)Horvath, Muguruza, and
  Tomas]{HorvathMuguruzaTomas2019}
Blanka Horvath, Aitor Muguruza, and Mehdi Tomas.
\newblock Deep learning volatility, 2019.
\newblock URL \url{https://arxiv.org/abs/1901.09647}.

\bibitem[Itkin(2019)]{Itkin2019}
Andrey Itkin.
\newblock Deep learning calibration of option pricing models: Some pitfalls and
  solutions, 2019.
\newblock URL \url{https://arxiv.org/abs/1906.03507}.

\bibitem[Jin and Agarwal(2025)]{JinAgarwal2025}
Chen Jin and Ankush Agarwal.
\newblock Forecasting implied volatility surface with generative diffusion
  models, 2025.

\bibitem[Kahal{\'e}(2004)]{Kahale2004}
Nabil Kahal{\'e}.
\newblock An arbitrage-free interpolation of volatilities.
\newblock \emph{Risk}, 17\penalty0 (5):\penalty0 102--106, 2004.

\bibitem[Kallsen and Kr{\"u}hner(2015)]{KallsenKruhner2015}
Jan Kallsen and Paul Kr{\"u}hner.
\newblock On a {Heath--Jarrow--Morton} approach for stock options.
\newblock \emph{Finance Stoch.}, 19\penalty0 (3):\penalty0 583--615, 2015.
\newblock \doi{10.1007/s00780-015-0263-1}.

\bibitem[Ledoit et~al.(2002)Ledoit, Santa-Clara, and Yan]{ledoit2002relative}
Olivier Ledoit, Pedro Santa-Clara, and Shu Yan.
\newblock Relative pricing of options with stochastic volatility.
\newblock Working paper, Anderson Graduate School of Management, UCLA, 2002.
\newblock URL \url{https://escholarship.org/uc/item/7jp8f42t}.

\bibitem[Liu et~al.(2019)Liu, Borovykh, Grzelak, and
  Oosterlee]{LiuEtAl2019Calibration}
Shuaiqiang Liu, Anastasia Borovykh, Lech~A. Grzelak, and Cornelis~W. Oosterlee.
\newblock A neural network-based framework for financial model calibration.
\newblock \emph{J. Math. Ind.}, 9:\penalty0 9, 2019.
\newblock \doi{10.1186/s13362-019-0066-7}.

\bibitem[Lucic(2021)]{lucic2021normalizing}
Vladimir Lucic.
\newblock Normalizing volatility transforms and parameterization of volatility
  smile.
\newblock \emph{Available at SSRN 3835233}, 2021.

\bibitem[Martini and Mingone(2022)]{MartiniMingone2022}
Claude Martini and Arianna Mingone.
\newblock No arbitrage {SVI}.
\newblock \emph{SIAM J. Financial Math.}, 13\penalty0 (1):\penalty0 227--261,
  2022.

\bibitem[Medvedev and Wang(2022)]{MedvedevWang2022}
Nikita Medvedev and Zhiguang Wang.
\newblock Multistep forecast of the implied volatility surface using deep
  learning.
\newblock \emph{J. Futures Mark.}, 42\penalty0 (4):\penalty0 645--667, 2022.
\newblock \doi{10.1002/fut.22302}.

\bibitem[Mingone(2022)]{Mingone2022}
Arianna Mingone.
\newblock No arbitrage global parametrization for the {eSSVI} volatility
  surface.
\newblock \emph{Quant. Finance}, 22\penalty0 (12):\penalty0 2205--2217, 2022.

\bibitem[Ning et~al.(2023)Ning, Jaimungal, Zhang, and
  Bergeron]{vae_ning2023arbitrage}
Brian Ning, Sebastian Jaimungal, Xiaorong Zhang, and Maxime Bergeron.
\newblock Arbitrage-free implied volatility surface generation with variational
  autoencoders.
\newblock \emph{SIAM J. Financial Math.}, 14\penalty0 (4):\penalty0 1004--1027,
  2023.
\newblock \doi{10.1137/21M1443546}.

\bibitem[Orosi(2015)]{Orosi2015}
Gergely Orosi.
\newblock Arbitrage-free call option surface construction using regression
  splines.
\newblock \emph{Appl. Stoch. Models Bus. Ind.}, 31\penalty0 (4):\penalty0
  515--527, 2015.
\newblock \doi{10.1002/asmb.2045}.

\bibitem[Panigirtzoglou and Skiadopoulos(2004)]{PanigirtzoglouSkiadopoulos2004}
Nikolaos Panigirtzoglou and George Skiadopoulos.
\newblock A new approach to modeling the dynamics of implied distributions:
  Theory and evidence from the {S\&P 500} options.
\newblock \emph{J. Bank. Finance}, 28\penalty0 (7):\penalty0 1499--1520, 2004.
\newblock \doi{10.1016/S0378-4266(03)00127-4}.

\bibitem[Protter(2005)]{Protter2005}
Philip~E. Protter.
\newblock \emph{Stochastic Integration and Differential Equations}.
\newblock Springer, second edition, 2005.

\bibitem[Qin et~al.(2026)Qin, Yang, Che, and Feng]{QinYangCheFeng2026}
Hao Qin, Ruozhong Yang, Charlie~Y. Che, and Liming Feng.
\newblock Arbitrage-free multi-maturity risk-neutral marginals.
\newblock arXiv preprint arXiv:2607.06204, 2026.
\newblock URL \url{https://arxiv.org/abs/2607.06204}.

\bibitem[Roeder and Dimitroff(2020)]{RoederDimitroff2020}
Dirk Roeder and Georgi Dimitroff.
\newblock Volatility model calibration with neural networks: A comparison
  between direct and indirect methods, 2020.
\newblock URL \url{https://arxiv.org/abs/2007.03494}.

\bibitem[Roos(2025)]{Roos2025}
Thomas Roos.
\newblock Simple, flexible, analytic, arbitrage free option price
  interpolation.
\newblock SSRN working paper 5215592, 2025.
\newblock URL \url{https://papers.ssrn.com/abstract=5215592}.
\newblock Revised 2026.

\bibitem[Rosenbaum and Zhang(2021)]{RosenbaumZhang2021}
Mathieu Rosenbaum and Jianfei Zhang.
\newblock Deep calibration of the quadratic rough {Heston} model, 2021.
\newblock URL \url{https://arxiv.org/abs/2107.01611}.

\bibitem[Schlenkrich(2025)]{Schlenkrich2025}
Sebastian Schlenkrich.
\newblock A piece-wise model for vanilla option pricing.
\newblock SSRN working paper 5214566, 2025.
\newblock URL \url{https://papers.ssrn.com/abstract=5214566}.

\bibitem[Sch{\"o}nbucher(1999)]{Schonbucher1999}
Philipp~J. Sch{\"o}nbucher.
\newblock A market model for stochastic implied volatility.
\newblock \emph{Philos. Trans. R. Soc. Lond. A}, 357\penalty0 (1758):\penalty0
  2071--2092, 1999.
\newblock \doi{10.1098/rsta.1999.0418}.

\bibitem[Schweizer and Wissel(2008{\natexlab{a}})]{SchweizerWissel2008}
Martin Schweizer and Johannes Wissel.
\newblock Arbitrage-free market models for option prices: The multi-strike
  case.
\newblock \emph{Finance Stoch.}, 12\penalty0 (4):\penalty0 469--505,
  2008{\natexlab{a}}.
\newblock \doi{10.1007/s00780-008-0068-6}.

\bibitem[Schweizer and Wissel(2008{\natexlab{b}})]{SchweizerWisselTerm2008}
Martin Schweizer and Johannes Wissel.
\newblock Term structures of implied volatilities: Absence of arbitrage and
  existence results.
\newblock \emph{Math. Finance}, 18\penalty0 (1):\penalty0 77--114,
  2008{\natexlab{b}}.
\newblock \doi{10.1111/j.1467-9965.2007.00323.x}.

\bibitem[Skiadopoulos et~al.(2000)Skiadopoulos, Hodges, and
  Clewlow]{SkiadopoulosHodgesClewlow2000}
George~S. Skiadopoulos, Stewart~D. Hodges, and Les Clewlow.
\newblock The dynamics of the {S\&P 500} implied volatility surface.
\newblock \emph{Rev. Deriv. Res.}, 3\penalty0 (3):\penalty0 263--282, 2000.
\newblock \doi{10.1023/A:1009642705121}.

\bibitem[Sridi and Bilokon(2023)]{SridiBilokon2023}
Abir Sridi and Paul Bilokon.
\newblock Applying deep learning to calibrate stochastic volatility models,
  2023.
\newblock URL \url{https://arxiv.org/abs/2309.07843}.

\bibitem[Vuleti{\'c} and Cont(2024)]{VuleticCont2024}
Milena Vuleti{\'c} and Rama Cont.
\newblock {VolGAN}: A generative model for arbitrage-free implied volatility
  surfaces.
\newblock \emph{Appl. Math. Finance}, 31\penalty0 (4):\penalty0 203--238, 2024.
\newblock \doi{10.1080/1350486X.2025.2471317}.

\bibitem[Wiedemann et~al.(2025)Wiedemann, Jacquier, and
  Gonon]{GononJacquierWiedemann2025}
Ruben Wiedemann, Antoine Jacquier, and Lukas Gonon.
\newblock Operator deep smoothing for implied volatility.
\newblock In \emph{International Conference on Learning Representations}, 2025.
\newblock URL \url{https://arxiv.org/abs/2406.11520}.

\bibitem[Wiese et~al.(2019)Wiese, Bai, Wood, and Buehler]{wiese2019simulate}
Magnus Wiese, Lianjun Bai, Ben Wood, and Hans Buehler.
\newblock Deep hedging: Learning to simulate equity option markets, 2019.
\newblock URL \url{https://arxiv.org/abs/1911.01700}.

\bibitem[Wiese et~al.(2021)Wiese, Wood, Pachoud, Korn, Buehler, Murray, and
  Bai]{wiese2021multiassetspotoptionmarket}
Magnus Wiese, Ben Wood, Alexandre Pachoud, Ralf Korn, Hans Buehler, Phillip
  Murray, and Lianjun Bai.
\newblock Multi-asset spot and option market simulation, 2021.
\newblock URL \url{https://arxiv.org/abs/2112.06823}.

\bibitem[Wissel(2007)]{Wissel2007}
Johannes Wissel.
\newblock Arbitrage-free market models for option prices.
\newblock NCCR FINRISK Working Paper 428, NCCR FINRISK, ETH Z{\"u}rich, 2007.

\bibitem[Yang et~al.(2026)Yang, Wang, Zhang, and Li]{YangWangZhangLi2026}
Mingzhi Yang, Sheng Wang, Chao Zhang, and Ruikun Li.
\newblock Universal diffusion models for implied volatility surfaces: Learning
  shared dynamics across stocks, 2026.

\bibitem[Yang et~al.(2025)Yang, Chen, Shu, and
  Hospedales]{YangChenShuHospedales2025}
Yongxin Yang, Wenqi Chen, Chao Shu, and Timothy~M. Hospedales.
\newblock {HyperIV}: Real-time implied volatility smoothing.
\newblock In \emph{Proceedings of the 42nd International Conference on Machine
  Learning}, volume 267 of \emph{Proceedings of Machine Learning Research},
  pages 70550--70564. PMLR, 2025.
\newblock URL \url{https://proceedings.mlr.press/v267/yang25d.html}.

\bibitem[Zhang et~al.(2024)Zhang, Amici, and
  Morandotti]{ZhangAmiciMorandotti2024}
Chen Zhang, Giovanni Amici, and Marco Morandotti.
\newblock Calibrating the {Heston} model with deep differential networks, 2024.
\newblock URL \url{https://arxiv.org/abs/2407.15536}.

\bibitem[Zhang et~al.(2023)Zhang, Li, and Zhang]{ZhangLiZhang2021}
Wenyong Zhang, Lingfei Li, and Gongqiu Zhang.
\newblock A two-step framework for arbitrage-free prediction of the implied
  volatility surface.
\newblock \emph{Quant. Finance}, 23\penalty0 (1):\penalty0 21--34, 2023.
\newblock \doi{10.1080/14697688.2022.2135454}.

\end{thebibliography}

\end{document}